\providecommand{\U}[1]{\protect\rule{.1in}{.1in}}
\newtheorem{theorem}{Theorem}
\newtheorem{condition}{Condition}
\newtheorem{definition}{Definition}
\newtheorem{lemma}{Lemma}
\newtheorem{proposition}{Proposition}
\newtheorem{remark}{Remark}
\newenvironment{proof}[1][Proof]{\noindent\textbf{#1.} }{\ \rule{0.5em}{0.5em}}
\begin{document}
\preprint{ }
\title[ ]{Energy-constrained private and quantum capacities of quantum channels}
\author{Mark M. Wilde}
\affiliation{Hearne Institute for Theoretical Physics, Department of Physics and Astronomy,
Center for Computation and Technology, Louisiana State University, Baton
Rouge, Louisiana 70803, USA}
\author{Haoyu Qi}
\affiliation{Hearne Institute for Theoretical Physics, Department of Physics and Astronomy,
Louisiana State University, Baton Rouge, Louisiana 70803, USA}
\keywords{quantum capacity, private capacity, Gibbs observable, bosonic channels}
\pacs{}

\begin{abstract}
This paper establishes a general theory of energy-constrained quantum and
private capacities of quantum channels. We begin by defining various
energy-constrained communication tasks, including quantum communication with a
uniform energy constraint, entanglement transmission with an average energy
constraint, private communication with a uniform energy constraint, and secret
key transmission with an average energy constraint. We develop several code
conversions, which allow us to conclude non-trivial relations between the
capacities corresponding to the above tasks. We then show how the regularized,
energy-constrained coherent information is equal to the capacity for the first
two tasks and is an achievable rate for the latter two tasks, whenever the
energy observable satisfies the Gibbs condition of having a well defined
thermal state for all temperatures and the channel satisfies a finite
output-entropy condition. For degradable channels satisfying these conditions,
we find that the single-letter energy-constrained coherent information is
equal to all of the capacities. We finally apply our results to degradable
quantum Gaussian channels and recover several results already established in
the literature (in some cases, we prove new results in this domain). Contrary
to what may appear from some statements made in the literature recently,
proofs of these results do not require the solution of any kind of minimum
output entropy conjecture or entropy photon-number inequality.

\end{abstract}
\volumeyear{ }
\volumenumber{ }
\issuenumber{number}
\eid{identifier}
\date{\today}
\startpage{1}
\endpage{10}
\maketitle

\section{Introduction}

The capacity of a quantum channel to transmit quantum or private information
is a fundamental characteristic of the channel that guides the design of
practical communication protocols (see, e.g., \cite{W16}\ for a review). The
quantum capacity $Q(\mathcal{N})$\ of a quantum channel $\mathcal{N}$ is
defined as the maximum rate at which qubits can be transmitted faithfully over
many independent uses of $\mathcal{N}$, where the fidelity of transmission
tends to one in the limit as the number of channel uses tends to infinity
\cite{PhysRevA.55.1613,capacity2002shor,ieee2005dev}. Related, the private
capacity $P(\mathcal{N})$ of $\mathcal{N}$ is defined to be the maximum rate
at which classical bits can be transmitted over many independent uses of
$\mathcal{N}$ such that 1)\ the receiver can decode the classical bits
faithfully and 2)\ the environment of the channel cannot learn anything about
the classical bits being transmitted \cite{ieee2005dev,1050633}. The quantum
capacity is essential for understanding how fast we will be able to perform
distributed quantum computations between remote locations, and the private
capacity is connected to the ability to generate secret key between remote
locations, as in quantum key distribution (see, e.g., \cite{SBCDLP09}\ for a
review). Notions from classical information theory regarding wiretap channels are typically insightful for understanding private communication over quantum channels (see, e.g., \cite{W75,Tan12,H13,HES14,Hay15,TB16,YSP16,EHS16}). In general, there are connections between private capacity and
quantum capacity of quantum channels \cite{ieee2005dev} (see also \cite{PhysRevLett.80.5695}), but
the results of \cite{HHHO05,HHHO09,PhysRevLett.100.110502,HHHLO08}%
\ demonstrated that these concepts and the capacities can be very different.
In fact, the most striking examples are channels for which their quantum
capacity is equal to zero but their private capacity is strictly greater than
zero \cite{PhysRevLett.100.110502,HHHLO08}.

Bosonic Gaussian channels are some of the most important channels to consider,
as they model practical communication links in which the mediators of
information are photons (see, e.g., \cite{CEGH08,S17}\ for reviews).
Recent years have seen advances in the quantum information theory of bosonic
channels. For example, we now know the capacity for sending classical
information over all single-mode phase-insensitive quantum Gaussian channels
\cite{GHG15,GPCH13} (and even the strong converse capacity \cite{BPWW14}). The
result of this theoretical development is that coherent states \cite{GK04}\ of
the light field suffice to achieve classical capacity of phase-insensitive
bosonic Gaussian channels.
Note that the classical capacity of these channels is non-trivial only when there is an energy constraint placed on the input signaling states \cite{GHG15,GPCH13}---otherwise, it is equal infinity.

We have also seen advances related to quantum capacity of bosonic channels.
Important statements, discussions, and critical steps concerning quantum
capacity of single-mode quantum-limited attenuator and amplifier channels were
reported in \cite{HW01,WPG07}. In particular, these papers stated a formula
for the quantum capacity of these channels, whenever infinite energy is
available at the transmitter. These formulas have been supported with a proof
in \cite[Theorem~8]{PhysRevA.86.062306} and \cite{PLOB15,WTB16} (see
Remark~\ref{rem:WPG07}\ of the present paper for further discussion of this
point). However, in practice, no transmitter could ever use infinite energy to
transmit quantum information, and so the results from \cite{HW01,WPG07} have
limited applicability to realistic scenarios. Given that the notion of quantum
capacity itself is already somewhat removed from practice, as argued in
\cite{TBR15}, it seems that supplanting a sender and receiver with infinite
energy in addition to perfect quantum computers and an infinite number of
channel uses only serves to push this notion much farther away from
practice.\ One of the main aims of the present paper is to continue the effort
of bringing this notion closer to practice, by developing a general theory of
energy-constrained quantum and private communication. Considering quantum and
private capacity with a limited number of channel uses, as was done in
\cite{TBR15,WTB16}, in addition to energy constraints, is left for future developments.

In light of the above discussion, we are thus motivated to understand both
quantum and private communication over quantum channels with realistic energy
constraints. Refs.~\cite{GLMS03,GSE08} were some of the earlier works to
discuss quantum and private communication with energy constraints, in addition
to other kinds of communication tasks. The more recent efforts in
\cite{WHG11,PhysRevA.86.062306,QW16} have considered energy-constrained
communication in more general trade-off scenarios, but as special cases, they
also furnished proofs for energy-constrained quantum and private capacities of
quantum-limited attenuator and amplifier channels (see \cite[Theorem~8]%
{PhysRevA.86.062306} and \cite{QW16}). In more detail, let $Q(\mathcal{N}%
,N_{S})$ and $P(\mathcal{N},N_{S})$ denote the respective quantum and private
capacities of a quantum channel $\mathcal{N}$, such that the mean input photon
number for each channel use cannot exceed $N_{S}\in\lbrack0,\infty)$.
Ref.~\cite[Theorem~8]{PhysRevA.86.062306} established that the quantum
capacity of a pure-loss channel $\mathcal{L}_{\eta}$ with transmissivity
parameter $\eta\in\left[  0,1\right]  $ is equal to%
\begin{equation}
Q(\mathcal{L}_{\eta},N_{S})=\max\{g(\eta N_{S})-g((1-\eta)N_{S}%
),0\},\label{eq:pure-loss-capacities}%
\end{equation}
where $g(x)$ is the entropy of a thermal state with mean photon number $x$,
defined as%
\begin{equation}
g(x)\equiv(x+1)\log_{2}(x+1)-x\log_{2}x.
\end{equation}
The present paper (see \eqref{eq:q-cap-loss}) establishes the private capacity
formula for $\mathcal{L}_{\eta}$:%
\begin{equation}
P(\mathcal{L}_{\eta},N_{S})=\max\{g(\eta N_{S})-g((1-\eta)N_{S}%
),0\}.\label{eq:priv-cap-loss-new}%
\end{equation}
A special case of the results of \cite{QW16} established that the quantum and
private capacities of a quantum-limited amplifier channel $\mathcal{A}%
_{\kappa}$ with gain parameter $\kappa\in\lbrack1,\infty)$ are equal to%
\begin{align}
Q(\mathcal{A}_{\kappa},N_{S}) &  =P(\mathcal{A}_{\kappa},N_{S})\\
&  =g(\kappa N_{S}+\kappa-1)-g([\kappa-1][N_{S}+1]).\label{eq:amp-capacities}%
\end{align}
Taking the limit as $N_{S}\rightarrow\infty$, these formulas respectively
converge to%
\begin{align}
&  \max\{\log_{2}(\eta/\left[  1-\eta\right]
),0\},\label{eq:unconstrained-q-cap-loss}\\
&  \log_{2}(\kappa/\left[  \kappa-1\right]  )
\label{eq:unconstrained-q-cap-amp},
\end{align}
which were stated in \cite{HW01,WPG07} in the context of quantum capacity,
with the latter proved in \cite{PLOB15,WTB16} for both quantum and private
capacities. Figure~\ref{fig:cap-compare}\ plots the  ratios of the
unconstrained to constrained quantum capacity formulas in
\eqref{eq:unconstrained-q-cap-loss} and \eqref{eq:pure-loss-capacities},
respectively.
Figure~\ref{fig:cap-compare-amp}\ plots the ratios of the
unconstrained to constrained quantum capacity formulas in
\eqref{eq:unconstrained-q-cap-amp} and \eqref{eq:amp-capacities},
respectively.
\begin{figure}[ptb]
\begin{center}
\includegraphics[
width=3.3399in
]
{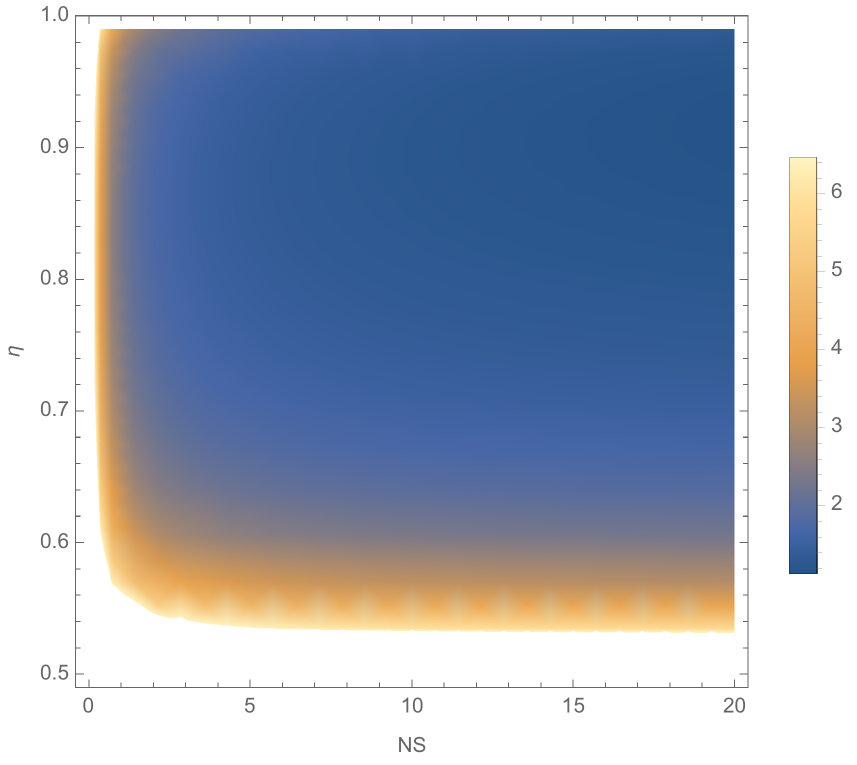}
\end{center}
\caption{Density plot of the ratio of the unconstrained  and constrained
 quantum and private capacities of the pure-loss channel for
$\eta \in [1/2, 1]$ and $N_S \in [0,20]$. For lower photon numbers and higher loss $\eta \approx 0.5$, there is a large gap between these
capacities.}%
\label{fig:cap-compare}%
\end{figure}

\begin{figure}[ptb]
\begin{center}
\includegraphics[
width=3.3399in
]
{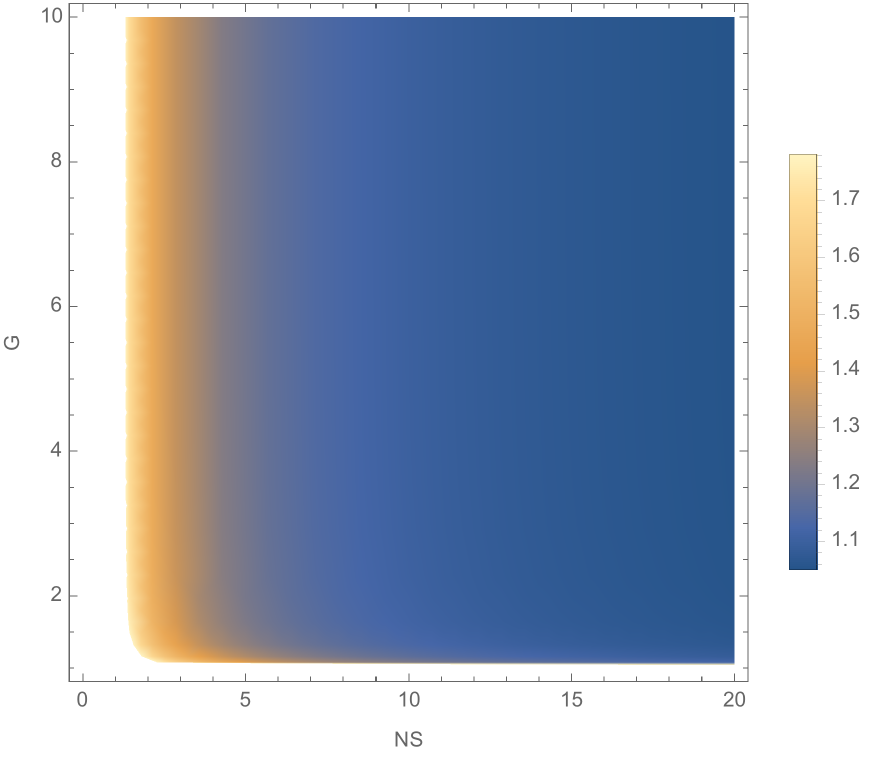}
\end{center}
\caption{Density plot of the ratio of the unconstrained  and constrained
 quantum and private capacities of the pure-amplifier channel for
$G \in [1, 10]$ and $N_S \in [0,20]$. For lower photon numbers, there is a large gap between these
capacities.}%
\label{fig:cap-compare-amp}%
\end{figure}

The main purpose of the present paper is to go beyond bosonic channels and
establish a general theory of energy-constrained quantum and private
communication over quantum channels, in a spirit similar to that developed in
\cite{H03,H04,HS12,H12} for other communication tasks. We first recall some
preliminary background on quantum information in infinite-dimensional,
separable Hilbert spaces in Section~\ref{sec:prelims}. We now summarize the main contributions of our paper:
\begin{itemize}
\item In Section~\ref{sec:energy-constrained-caps}, we define several
energy-constrained communication tasks, including quantum communication with a
uniform energy constraint, entanglement transmission with an average energy
constraint, private communication with a uniform energy constraint, and secret
key transmission with an average energy constraint.
\item In
Section~\ref{sec:code-conversions}, we develop several code conversions
between these various communication tasks, which allow us to conclude
non-trivial relations between the capacities corresponding to them, as summarized in 
Section~\ref{sec:cap-imps} and Theorem~\ref{thm:cap-relations}.

\item 
Section~\ref{sec:coh-info-ach}\ proves that the regularized,
energy-constrained coherent information is an achievable rate for all of the
tasks, whenever the energy observable satisfies the Gibbs condition of having
a well defined thermal state for all temperatures
(Definition~\ref{def:Gibbs-obs}) and the channel satisfies a finite
output-entropy condition (Condition~\ref{cond:finite-out-entropy}). This result is stated as Theorem~\ref{thm:coh-info-ach}. 

\item 
For
degradable channels satisfying the same conditions, we find in
Section~\ref{sec:degradable-channels}\ that the single-letter
energy-constrained coherent information is equal to all of the capacities (stated as Theorem~\ref{thm:-energy-constr-q-p-cap}).

\item 
Section~\ref{sec:reg-converses} establishes a regularized converse for the energy-constrained private capacity (stated as Theorem~\ref{thm:-energy-constr-p-cap-regularized}), and it also establishes that the
regularized,
energy-constrained coherent information is equal to the capacity for quantum communication with a
uniform energy constraint and entanglement transmission with an average energy
constraint, under the same conditions on the energy observable and the channel.
This latter result is stated as Theorem~\ref{thm:-energy-constr-q-cap-regularized}.

\item
We
finally apply our results to quantum Gaussian channels in
Section~\ref{sec:Gaussian-results}\ and recover several results already
established in the literature on Gaussian quantum information. In some cases,
we establish new results, like the formula for private capacity in~\eqref{eq:priv-cap-loss-new}. 

\item In Section~\ref{sec:non-Gaussian},
we discuss how our general framework, along with recent developments in \cite{SWAT17}, allow for concluding estimates for the energy-constrained private and quantum capacities of particular non--Gaussian channels. Therein, we also consider alternative energy constraints for the pure-loss and quantum-limited amplifier channels, and we bound the capacities in these settings.
\end{itemize}
\noindent We conclude in Section~\ref{sec:conclusion}%
\ with a summary and some open questions.

We would like to suggest that our contribution on this topic is timely. At the
least, we think it should be a useful resource for the community of
researchers working on related topics to have such a formalism and associated
results written down explicitly, even though a skeptic might argue that they
have been part of the folklore of quantum information theory for many years
now. To support our viewpoint, we note that some statements made in several
papers released in the past few years suggest that energy-constrained quantum
and private capacities have not been sufficiently clarified in the existing
literature. For example, in \cite{WSCW12}, one of the main results contributed
was a non-tight upper bound on the private capacity of a pure-loss bosonic
channel, in spite of the fact that \eqref{eq:priv-cap-loss-new} was already
part of the folklore of quantum information theory. In \cite{PMLG15}, it is
stated that the \textquotedblleft entropy photon-number inequality turns out
to be crucial in the determining the classical capacity regions of the quantum
bosonic broadcast and wiretap channels,\textquotedblright\ in spite of the
fact that no such argument is needed to establish the quantum or private
capacity of the pure-loss channel. Similarly, it is stated in \cite{ADO16}%
\ that the entropy photon-number inequality \textquotedblleft conjecture is of
particular significance in quantum information theory since if it were true
then it would allow one to evaluate classical capacities of various bosonic
channels, e.g. the bosonic broadcast channel and the wiretap
channel.\textquotedblright\ Thus, it seems timely and legitimate to confirm
that no such entropy photon-number inequality or minimum output-entropy
conjecture is necessary in order to establish the results regarding quantum or
private capacity of the pure-loss channel---the existing literature
(specifically, \cite[Theorem~8]{PhysRevA.86.062306} and now the previously
folklore \eqref{eq:q-cap-loss}) has established these capacities. The same is
the case for the quantum-limited amplifier channel due to the results
of~\cite{QW16}. The entropy photon-number inequality indeed implies formulas
for quantum and private capacities of the quantum-limited attenuator and
amplifier channels, but it appears to be much stronger than what is actually
necessary to accomplish this goal. The different proof of these formulas that
we give in the present paper (see Section~\ref{sec:Gaussian-results})\ is
based on the monotonicity of quantum relative entropy, concavity of coherent
information of degradable channels with respect to the input density operator,
and covariance of Gaussian channels with respect to displacement operators.

\section{Quantum information preliminaries\label{sec:prelims}}

\subsection{Quantum states and channels}

Background on quantum information in infinite-dimensional systems is available
in \cite{H12} (see also \cite{H04,SH08,HS10,HZ12,S15,S15squashed}). We review
some aspects here. We use $\mathcal{H}$ throughout the paper to denote a
separable Hilbert space, unless specified otherwise. Let $I_{\mathcal{H}}$
denote the identity operator acting on $\mathcal{H}$. Let $\mathcal{B}%
(\mathcal{H})$ denote the set of bounded linear operators acting on
$\mathcal{H}$, and let $\mathcal{P}(\mathcal{H})$ denote the subset of
$\mathcal{B}(\mathcal{H})$ that consists of positive semi-definite operators.
Let $\mathcal{T}(\mathcal{H})$ denote the set of trace-class operators, those
operators $A$ for which the trace norm is finite:$\ \left\Vert A\right\Vert
_{1}\equiv\operatorname{Tr}\{\left\vert A\right\vert \}<\infty$, where
$\left\vert A\right\vert \equiv\sqrt{A^{\dag}A}$. The Hilbert-Schmidt norm of
$A$ is defined as $\left\Vert A\right\Vert _{2}\equiv\sqrt{\operatorname{Tr}%
\{A^{\dag}A\}}$. Let $\mathcal{D}(\mathcal{H})$ denote the set of density
operators (states), which consists of the positive semi-definite, trace-class
operators with trace equal to one. A state $\rho\in\mathcal{D}(\mathcal{H})$
is pure if there exists a unit vector $|\psi\rangle\in\mathcal{H}$ such that
$\rho=|\psi\rangle\langle\psi|$. Every density operator $\rho\in
\mathcal{D}(\mathcal{H})$\ has a spectral decomposition in terms of some
countable, orthonormal basis $\{|\phi_{k}\rangle\}_{k}$ as%
\begin{equation}
\rho=\sum_{k}p(k)|\phi_{k}\rangle\langle\phi_{k}|,
\end{equation}
where $p(k)$ is a probability distribution. The tensor product of two Hilbert
spaces $\mathcal{H}_{A}$ and $\mathcal{H}_{B}$ is denoted by $\mathcal{H}%
_{A}\otimes\mathcal{H}_{B}$ or $\mathcal{H}_{AB}$.\ Given a multipartite
density operator $\rho_{AB}\in\mathcal{D}(\mathcal{H}_{A}\otimes
\mathcal{H}_{B})$, we unambiguously write $\rho_{A}=\operatorname{Tr}%
_{\mathcal{H}_{B}}\left\{  \rho_{AB}\right\}  $ for the reduced density
operator on system $A$. Every density operator $\rho$\ has a purification
$|\phi^{\rho}\rangle\in\mathcal{H}^{\prime}\otimes\mathcal{H}$, for an
auxiliary Hilbert space $\mathcal{H}^{\prime}$, where $\left\Vert |\phi^{\rho
}\rangle\right\Vert _{2}=1$ and $\operatorname{Tr}_{\mathcal{H}^{\prime}%
}\{|\phi^{\rho}\rangle\langle\phi^{\rho}|\}=\rho$. All purifications are
related by an isometry acting on the purifying system. A state $\rho_{RA}%
\in\mathcal{D}(\mathcal{H}_{R}\otimes\mathcal{H}_{A})$ extends $\rho_{A}%
\in\mathcal{D}(\mathcal{H}_{A})$ if $\operatorname{Tr}_{\mathcal{H}_{R}}%
\{\rho_{RA}\}=\rho_{A}$. We also say that $\rho_{RA}$ is an extension of
$\rho_{A}$. In what follows, we abbreviate notation like $\operatorname{Tr}%
_{\mathcal{H}_{R}}$ as $\operatorname{Tr}_{R}$.

For finite-dimensional Hilbert spaces $\mathcal{H}_{R}$ and $\mathcal{H}_{S}$
such that $\dim(\mathcal{H}_{R})=\dim(\mathcal{H}_{S})\equiv M$, we define the
maximally entangled state $\Phi_{RS}\in\mathcal{D}(\mathcal{H}_{R}%
\otimes\mathcal{H}_{S})$ of Schmidt rank $M$ as%
\begin{equation}
\Phi_{RS}\equiv\frac{1}{M}\sum_{m,m^{\prime}}|m\rangle\langle m^{\prime}%
|_{R}\otimes|m\rangle\langle m^{\prime}|_{S},
\end{equation}
where $\{|m\rangle\}_{m}$ is an orthonormal basis for $\mathcal{H}_{R}$ and
$\mathcal{H}_{S}$. We define the maximally correlated state $\overline{\Phi
}_{RS}\in\mathcal{D}(\mathcal{H}_{R}\otimes\mathcal{H}_{S})$ as%
\begin{equation}
\overline{\Phi}_{RS}\equiv\frac{1}{M}\sum_{m}|m\rangle\langle m|_{R}%
\otimes|m\rangle\langle m|_{S},
\end{equation}
which can be understood as arising by applying a completely dephasing channel
$\sum_{m}|m\rangle\langle m|(\cdot)|m\rangle\langle m|$ to either system $R$
or $S$ of the maximally entangled state $\Phi_{RS}$. We define the maximally
mixed state of system $S$ as $\pi_{S}\equiv I_{S}/M$.

A quantum channel $\mathcal{N}:\mathcal{T}(\mathcal{H}_{A})\rightarrow
\mathcal{T}(\mathcal{H}_{B})$\ is a completely positive, trace-preserving
linear map. The Stinespring dilation theorem \cite{S55} implies that there
exists another Hilbert space $\mathcal{H}_{E}$ and a linear isometry
$U:\mathcal{H}_{A}\rightarrow\mathcal{H}_{B}\otimes\mathcal{H}_{E}$ such that
for all $\tau\in\mathcal{T}(\mathcal{H}_{A})$%
\begin{equation}
\mathcal{N}(\tau)=\operatorname{Tr}_{E}\{U\tau U^{\dag}\}.
\end{equation}
The Stinespring representation theorem also implies that every quantum channel
has a Kraus representation with a countable set $\{K_{l}\}_{l}$ of bounded
Kraus operators:%
\begin{equation}
\mathcal{N}(\tau)=\sum_{l}K_{l}\tau K_{l}^{\dag},
\end{equation}
where $\sum_{l}K_{l}^{\dag}K_{l}=I_{\mathcal{H}_{A}}$. The Kraus operators are
defined by the relation%
\begin{equation}
\langle\varphi|K_{l}|\psi\rangle=\langle\varphi|\otimes\langle l|U|\psi
\rangle,
\end{equation}
for $|\varphi\rangle\in\mathcal{H}_{B}$, $|\psi\rangle\in\mathcal{H}_{A}$, and
$\{|l\rangle\}_{l}$ some orthonormal basis for $\mathcal{H}_{E}$ \cite{S13}.

A complementary channel $\mathcal{\hat{N}}:\mathcal{T}(\mathcal{H}%
_{A})\rightarrow\mathcal{T}(\mathcal{H}_{E})$ of $\mathcal{N}$ is defined for
all $\tau\in\mathcal{T}(\mathcal{H}_{A})$ as%
\begin{equation}
\mathcal{\hat{N}}(\tau)=\operatorname{Tr}_{B}\{U\tau U^{\dag}\}.
\end{equation}
Complementary channels are unique up to partial isometries acting on the
Hilbert space $\mathcal{H}_{E}$.

A quantum channel $\mathcal{N}:\mathcal{T}(\mathcal{H}_{A})\rightarrow
\mathcal{T}(\mathcal{H}_{B})$ is degradable \cite{cmp2005dev} if there exists
a quantum channel $\mathcal{D}:\mathcal{T}(\mathcal{H}_{B})\rightarrow
\mathcal{T}(\mathcal{H}_{E})$, called a degrading channel, such that for some
complementary channel $\mathcal{\hat{N}}:\mathcal{T}(\mathcal{H}%
_{A})\rightarrow\mathcal{T}(\mathcal{H}_{E})$ and all $\tau\in\mathcal{T}%
(\mathcal{H}_{A})$:%
\begin{equation}
\mathcal{\hat{N}}(\tau)=(\mathcal{D}\circ\mathcal{N})(\tau).
\end{equation}

A positive operator-valued measure (POVM) is a set $\{ \Lambda^x \}_x$ of positive semi-definite operators acting on a Hilbert space $\mathcal{H}$ such that $\sum_x \Lambda^x = I_{\mathcal{H}}$.

\subsection{Quantum fidelity and trace distance}

The fidelity of two quantum states $\rho,\sigma\in\mathcal{D}(\mathcal{H})$ is
defined as \cite{U76}%
\begin{equation}
F(\rho,\sigma)\equiv\left\Vert \sqrt{\rho}\sqrt{\sigma}\right\Vert _{1}^{2}.
\end{equation}
Uhlmann's theorem is the statement that the fidelity has the following
alternate expression as a probability overlap~\cite{U76}:%
\begin{equation}
F(\rho,\sigma)=\sup_{U}\left\vert \langle\phi^{\rho}|U\otimes I_{\mathcal{H}%
}|\phi^{\sigma}\rangle\right\vert ^{2}, \label{eq:uhlmann-fidelity}%
\end{equation}
where $|\phi^{\rho}\rangle\in\mathcal{H}^{\prime}\otimes\mathcal{H}$ and
$|\phi^{\sigma}\rangle\in\mathcal{H}^{\prime\prime}\otimes\mathcal{H}$ are
fixed purifications of $\rho$ and $\sigma$, respectively, and the optimization
is with respect to all partial isometries $U:\mathcal{H}^{\prime\prime
}\rightarrow\mathcal{H}^{\prime}$. The fidelity is non-decreasing with respect
to a quantum channel $\mathcal{N}:\mathcal{T}(\mathcal{H}_{A})\rightarrow
\mathcal{T}(\mathcal{H}_{B})$, in the sense that for all $\rho,\sigma
\in\mathcal{D}(\mathcal{H}_{A})$:%
\begin{equation}
F(\mathcal{N}(\rho),\mathcal{N}(\sigma))\geq F(\rho,\sigma).
\label{eq:fid-mono}%
\end{equation}
A simple modification of Uhlmann's theorem, found by combining
\eqref{eq:uhlmann-fidelity}\ with the monotonicity property in
\eqref{eq:fid-mono}, implies that for a given extension $\rho_{AB}$ of
$\rho_{A}$, there exists an extension $\sigma_{AB}$ of $\sigma_{A}$ such that%
\begin{equation}
F(\rho_{AB},\sigma_{AB})=F(\rho_{A},\sigma_{A}). \label{eq:Uhlmann-4-exts}%
\end{equation}

The trace distance between states $\rho$ and $\sigma$ is defined as
$\left\Vert \rho-\sigma\right\Vert _{1}$. One can normalize the trace distance
by multiplying it by $1/2$ so that the resulting quantity lies in the interval
$\left[  0,1\right]  $. The trace distance obeys a direct-sum property:\ for
an orthonormal basis $\{|x\rangle\}_{x}$ for an auxiliary Hilbert space
$\mathcal{H}_{X}$, probability distributions $p(x)$ and $q(x)$, and sets
$\left\{  \rho^{x}\right\}  _{x}$ and $\left\{  \sigma^{x}\right\}  _{x}$ of
states in $\mathcal{D}(\mathcal{H}_{B})$, which realize classical--quantum
states%
\begin{align}
\rho_{XB}  &  \equiv\sum_{x}p(x)|x\rangle\langle x|_{X}\otimes\rho_{B}%
^{x},\label{eq:cq-rho}\\
\sigma_{XB}  &  \equiv\sum_{x}q(x)|x\rangle\langle x|_{X}\otimes\sigma_{B}%
^{x}, \label{eq:cq-sigma}%
\end{align}
the following holds%
\begin{equation}
\left\Vert \rho_{XB}-\sigma_{XB}\right\Vert _{1}=\sum_{x}\left\Vert
p(x)\rho_{B}^{x}-q(x)\sigma_{B}^{x}\right\Vert _{1}. \label{eq:direct-sum-TD}%
\end{equation}
The trace distance is monotone non-increasing with respect to a quantum
channel $\mathcal{N}:\mathcal{T}(\mathcal{H}_{A})\rightarrow\mathcal{T}%
(\mathcal{H}_{B})$, in the sense that for all $\rho,\sigma\in\mathcal{D}%
(\mathcal{H}_{A})$:%
\begin{equation}
\left\Vert \mathcal{N}(\rho)-\mathcal{N}(\sigma)\right\Vert _{1}\leq\left\Vert
\rho-\sigma\right\Vert _{1}.
\end{equation}
The following equality holds for any two pure states $\phi,\psi\in
\mathcal{D}(\mathcal{H})$:%
\begin{equation}
\frac{1}{2}\left\Vert \phi-\psi\right\Vert _{1}=\sqrt{1-F(\phi,\psi)}.
\label{eq:Td-fid-pure}%
\end{equation}
For any two arbitrary states $\rho,\sigma\in\mathcal{D}(\mathcal{H})$, the
following inequalities hold%
\begin{equation}
1-\sqrt{F(\rho,\sigma)}\leq\frac{1}{2}\left\Vert \rho-\sigma\right\Vert
_{1}\leq\sqrt{1-F(\rho,\sigma)}. \label{eq:F-vd-G}%
\end{equation}
The inequality on the left is a consequence of the Powers-Stormer inequality
\cite[Lemma~4.1]{powers1970}, which states that $\left\Vert P-Q\right\Vert
_{1}\geq\left\Vert P^{1/2}-Q^{1/2}\right\Vert _{2}^{2}$ for $P,Q\in
\mathcal{P}(\mathcal{H})$. The inequality on the right follows from the
monotonicity of trace distance with respect to quantum channels, the identity
in \eqref{eq:Td-fid-pure}, and Uhlmann's theorem in
\eqref{eq:uhlmann-fidelity}. These inequalities are called Fuchs-van-de-Graaf
inequalities, as they were established in \cite{FG98} for finite-dimensional states.

\subsection{Quantum entropies and information}

The quantum entropy of a state $\rho\in\mathcal{D}(\mathcal{H})$ is defined as%
\begin{equation}
H(\rho)\equiv\operatorname{Tr}\{\eta(\rho)\},
\end{equation}
where $\eta(x)=-x\log_{2}x$ if $x>0$ and $\eta(0)=0$. The trace in the above
equation can be taken with respect to any countable orthonormal basis of
$\mathcal{H}$ \cite[Definition~2]{AL70}. The quantum entropy is a
non-negative, concave, lower semicontinuous function on $\mathcal{D}%
(\mathcal{H})$ \cite{W76}. It is also not necessarily finite (see, e.g.,
\cite{BV13}). When $\rho_{A}$ is assigned to a system $A$, we write
$H(A)_{\rho}\equiv H(\rho_{A})$.

The quantum relative entropy $D(\rho\Vert\sigma)$ of $\rho,\sigma
\in\mathcal{D}(\mathcal{H})$ is defined as \cite{F70,Lindblad1973}%
\begin{align}
&  D(\rho\Vert\sigma)\nonumber\\
&  \equiv\lbrack\ln2]^{-1}\sum_{i,j}|\langle\phi_{i}|\psi_{j}\rangle
|^{2}[p(i)\ln\!\left(  \frac{p(i)}{q(j)}\right)
+q(j)-p(i)],\label{eq:rel-ent-sep}%
\end{align}
where $\rho=\sum_{i}p(i)|\phi_{i}\rangle\langle\phi_{i}|$ and $\sigma=\sum
_{j}q(j)|\psi_{j}\rangle\langle\psi_{j}|$ are spectral decompositions of
$\rho$ and $\sigma$ with $\{|\phi_{i}\rangle\}_{i}$ and $\{|\psi_{j}%
\rangle\}_{j}$ orthonormal bases. The prefactor $[\ln2]^{-1}$ is there to
ensure that the units of the quantum relative entropy are bits. We take the
convention in \eqref{eq:rel-ent-sep} that $0\ln0=0\ln\!\left(  \frac{0}%
{0}\right)  =0$ but $\ln\!\left(  \frac{c}%
{0}\right) = +\infty$ for $c>0$. Each term in the sum in \eqref{eq:rel-ent-sep} is
non-negative due to the inequality
\begin{equation}
x\ln(x/y)+y-x\geq0
\end{equation}
holding for all $x,y\geq0$ \cite{F70}. Thus, by Tonelli's theorem, the sums in \eqref{eq:rel-ent-sep}
may be taken in either order as discussed in \cite{F70,Lindblad1973}, and it
follows that $D(\rho\Vert\sigma)\geq0$ for all $\rho,\sigma\in\mathcal{D}%
(\mathcal{H})$, with equality holding if and only if $\rho=\sigma$ \cite{F70}.
If the support of $\rho$ is not contained in the support of $\sigma$, then
$D(\rho\Vert\sigma)=+\infty$. The converse statement need not hold in general:
there exist $\rho,\sigma\in\mathcal{D}(\mathcal{H})$ with the support of
$\rho$ contained in the support of $\sigma$ such that $D(\rho\Vert
\sigma)=+\infty$. For example, take $\rho$ and $\sigma$ diagonal in the same
basis with the eigenvalues of $\rho$ as in \cite[Eq.~(7)]{BV13} and those of
$\sigma$ as $\propto1/n^{2}$ for $n\geq\lceil e\rceil$.

One of the most important properties of the quantum relative entropy
$D(\rho\Vert\sigma)$ is that it is monotone with respect to a quantum channel
$\mathcal{N}:\mathcal{T}(\mathcal{H}_{A})\rightarrow\mathcal{T}(\mathcal{H}%
_{B})$ \cite{Lindblad1975}:%
\begin{equation}
D(\rho\Vert\sigma)\geq D(\mathcal{N}(\rho)\Vert\mathcal{N}(\sigma)).
\label{eq:mono-rel-ent}%
\end{equation}

The quantum mutual information $I(A;B)_{\rho}$ of a bipartite state $\rho
_{AB}\in\mathcal{D}(\mathcal{H}_{A}\otimes\mathcal{H}_{B})$ is defined as
\cite{Lindblad1973}%
\begin{equation}
I(A;B)_{\rho}=D(\rho_{AB}\Vert\rho_{A}\otimes\rho_{B}),
\end{equation}
and obeys the bound \cite{Lindblad1973}%
\begin{equation}
I(A;B)_{\rho}\leq2\min\{H(A)_{\rho},H(B)_{\rho}\}.
\end{equation}
The coherent information $I(A\rangle B)_{\rho}$ of $\rho_{AB}$ is defined as
\cite{HS10,K11}%
\begin{equation}
I(A\rangle B)_{\rho}\equiv I(A;B)_{\rho}-H(A)_{\rho}, \label{eq:coh-info-def}%
\end{equation}
when $H(A)_{\rho}<\infty$. This expression reduces to%
\begin{equation}
I(A\rangle B)_{\rho}=H(B)_{\rho}-H(AB)_{\rho}%
\end{equation}
if $H(B)_{\rho}<\infty$ \cite{HS10,K11}.

The mutual information of a quantum channel $\mathcal{N}:\mathcal{T}%
(\mathcal{H}_{A})\rightarrow\mathcal{T}(\mathcal{H}_{B})$\ with respect to a
state $\rho\in\mathcal{D}(\mathcal{H}_{A})$ is defined as \cite{HS10}%
\begin{equation}
I(\rho,\mathcal{N})\equiv I(R;B)_{\omega},
\end{equation}
where $\omega_{RB}\equiv(\operatorname{id}_{R}\otimes\mathcal{N}_{A\rightarrow
B})(\psi_{RA}^{\rho})$ and $\psi_{RA}^{\rho}\in\mathcal{D}(\mathcal{H}%
_{R}\otimes\mathcal{H}_{A})$ is a purification of $\rho$, with $\mathcal{H}%
_{R}\simeq\mathcal{H}_{A}$. The coherent information of a quantum channel
$\mathcal{N}:\mathcal{T}(\mathcal{H}_{A})\rightarrow\mathcal{T}(\mathcal{H}%
_{B})$\ with respect to a state $\rho\in\mathcal{D}(\mathcal{H}_{A})$ is
defined as \cite{HS10}%
\begin{equation}
I_{c}(\rho,\mathcal{N})\equiv I(R\rangle B)_{\omega}%
,\label{eq:coh-info-ch-def}%
\end{equation}
with $\omega_{RB}$ defined as above. These quantities obey a data processing
inequality, which is that for a quantum channel $\mathcal{M}:\mathcal{T}%
(\mathcal{H}_{B})\rightarrow\mathcal{T}(\mathcal{H}_{C})$ and $\rho$ and
$\mathcal{N}$ as before, the following holds \cite{HS10}%
\begin{align}
I(\rho,\mathcal{N}) &  \geq I(\rho,\mathcal{M}\circ\mathcal{N}),\\
I_{c}(\rho,\mathcal{N}) &  \geq I_{c}(\rho,\mathcal{M}\circ\mathcal{N}%
).\label{eq:coh-info-DP}%
\end{align}

We require the following proposition for some of the developments in this paper:

\begin{proposition}
\label{prop:concave-degrad}Let $\mathcal{N}$ be a degradable quantum channel
and $\mathcal{\hat{N}}$ a complementary channel for it. Let $\rho_{0}$ and
$\rho_{1}$ be states and let $\rho_{\lambda}=\lambda\rho_{0}+(1-\lambda
)\rho_{1}$ for $\lambda\in\left[  0,1\right]  $. Suppose that the entropies
$H(\rho_{\lambda})$ and $H(\mathcal{N}(\rho_{\lambda}))$ are finite for all $\lambda\in\left[  0,1\right]  $. Then
the coherent information of $\mathcal{N}$\ is concave with respect to these
inputs, in the sense that%
\begin{equation}
\lambda I_{c}(\rho_{0},\mathcal{N})+(1-\lambda)I_{c}(\rho_{1},\mathcal{N})\leq
I_{c}(\rho_{\lambda},\mathcal{N}).
\end{equation}

\end{proposition}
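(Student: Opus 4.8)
The plan is to reduce the claimed concavity to the data-processing inequality for quantum relative entropy, exploiting degradability. First I would record the single-letter formula for the channel coherent information in the form $I_{c}(\rho,\mathcal{N})=H(\mathcal{N}(\rho))-H(\mathcal{\hat{N}}(\rho))$. This follows from the defining expression $I(R\rangle B)_{\omega}=H(B)_{\omega}-H(RB)_{\omega}$, which is valid because $H(B)_{\omega}=H(\mathcal{N}(\rho))$ is finite by hypothesis, together with the observation that the global output $\omega_{RBE}$, obtained by sending the purification $\psi_{RA}^{\rho}$ through the isometric dilation $U_{A\rightarrow BE}$, is pure; purity across the $RB\,|\,E$ cut gives $H(RB)_{\omega}=H(E)_{\omega}=H(\mathcal{\hat{N}}(\rho))$, with all three terms finite under the stated assumptions, so that no $\infty-\infty$ ambiguity arises.

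Next I would introduce a classical flag. Let $X$ be a classical register taking values in $\{0,1\}$ and define the classical--quantum state
\[
\sigma_{XA}\equiv\lambda|0\rangle\langle0|_{X}\otimes\rho_{0}+(1-\lambda)|1\rangle\langle1|_{X}\otimes\rho_{1},
\]
so that $\sigma_{A}=\rho_{\lambda}$. Writing $\sigma_{XB}\equiv(\operatorname{id}_{X}\otimes\mathcal{N})(\sigma_{XA})$ and $\sigma_{XE}\equiv(\operatorname{id}_{X}\otimes\mathcal{\hat{N}})(\sigma_{XA})$, the conditional-entropy identity for cq states gives $H(B|X)_{\sigma}=\lambda H(\mathcal{N}(\rho_{0}))+(1-\lambda)H(\mathcal{N}(\rho_{1}))$ and likewise for $E$. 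Combining this with the formula from Step~1, the left-hand side of the claim equals $H(B|X)_{\sigma}-H(E|X)_{\sigma}$, while the right-hand side equals $H(B)_{\sigma}-H(E)_{\sigma}$ (using $\sigma_{B}=\mathcal{N}(\rho_{\lambda})$ and $\sigma_{E}=\mathcal{\hat{N}}(\rho_{\lambda})$). Since every entropy in play is finite by hypothesis (the extreme points $\lambda\in\{0,1\}$ are included), the desired inequality is equivalent, after rearranging through $I(X;B)=H(B)-H(B|X)$ and $I(X;E)=H(E)-H(E|X)$, to the single statement $I(X;B)_{\sigma}\geq I(X;E)_{\sigma}$.

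Finally I would invoke degradability. Because $\mathcal{N}$ is degradable there is a channel $\mathcal{D}$ with $\mathcal{\hat{N}}=\mathcal{D}\circ\mathcal{N}$, so that $\sigma_{XE}=(\operatorname{id}_{X}\otimes\mathcal{D})(\sigma_{XB})$ and $\sigma_{X}\otimes\sigma_{E}=(\operatorname{id}_{X}\otimes\mathcal{D})(\sigma_{X}\otimes\sigma_{B})$. Applying the monotonicity of quantum relative entropy \eqref{eq:mono-rel-ent} to the channel $\operatorname{id}_{X}\otimes\mathcal{D}$ acting on the pair $\sigma_{XB}$ and $\sigma_{X}\otimes\sigma_{B}$ yields
\[
I(X;B)_{\sigma}=D(\sigma_{XB}\Vert\sigma_{X}\otimes\sigma_{B})\geq D(\sigma_{XE}\Vert\sigma_{X}\otimes\sigma_{E})=I(X;E)_{\sigma},
\]
which is exactly what is needed.

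The routine parts are the cq-state identities; the one point requiring care is the infinite-dimensional bookkeeping. The hypothesis that $H(\rho_{\lambda})$, $H(\mathcal{N}(\rho_{\lambda}))$, and $H(\mathcal{\hat{N}}(\rho_{\lambda}))$ are finite for all $\lambda$---in particular at $\lambda=0$ and $\lambda=1$---is precisely what guarantees that the conditional entropies and both mutual informations are finite, so that the rearrangement $I(X;B)=H(B)-H(B|X)$ is legitimate without differences of infinities and the relative-entropy form of the mutual information coincides with the entropic one. I expect this finiteness verification, rather than the degradability step itself, to be the main obstacle to a fully rigorous write-up.
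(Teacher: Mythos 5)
Your proposal is correct and follows essentially the same route as the paper's proof: both introduce a classical flag register to form the classical--quantum states $\lambda|0\rangle\langle0|\otimes\mathcal{N}(\rho_{0})+(1-\lambda)|1\rangle\langle1|\otimes\mathcal{N}(\rho_{1})$ and its counterpart for $\mathcal{\hat{N}}$, reduce the concavity claim to $I(X;B)_{\sigma}\geq I(X;E)_{\sigma}$, and establish that inequality by applying the degrading channel together with data processing (monotonicity of quantum relative entropy). Your added bookkeeping on finiteness and the identity $I_{c}(\rho,\mathcal{N})=H(\mathcal{N}(\rho))-H(\mathcal{\hat{N}}(\rho))$ is consistent with the paper's hypotheses and does not change the argument.
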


\begin{proof}
This was established for the finite-dimensional case in \cite{YHD05MQAC}. We
follow the proof given in \cite[Theorem~13.5.2]{W16}. First note that $H(\rho_{\lambda})$ and $H(\mathcal{N}(\rho_{\lambda}))$ being finite for all $\lambda \in [0,1]$ imply that
$H(\mathcal{\hat{N}}(\rho_{\lambda}))$ is finite, by an application of the isometric invariance of the entropy, the Stinespring dilation theorem, and the entropy triangle inequality from \cite[Theorem~2]{AL70}, allowing us to conclude that
\begin{equation}
H(\mathcal{\hat{N}}(\rho_{\lambda})) \leq
H(\rho_{\lambda}) + H(\mathcal{N}(\rho_{\lambda})).
\end{equation}
Set $\overline{\lambda
}\equiv1-\lambda$. Consider that%
\begin{multline}
I_{c}(\rho_{\lambda},\mathcal{N})-\lambda I_{c}(\rho_{0},\mathcal{N}%
)-\overline{\lambda}I_{c}(\rho_{1},\mathcal{N})\\
=H(\mathcal{N}(\rho_{\lambda}))-H(\mathcal{\hat{N}}(\rho_{\lambda}))-\lambda
H(\mathcal{N}(\rho_{0}))\\
+\lambda H(\mathcal{\hat{N}}(\rho_{0}))-\overline{\lambda}H(\mathcal{N}%
(\rho_{1}))+\overline{\lambda}H(\mathcal{\hat{N}}(\rho_{1})).
\end{multline}
Defining the states%
\begin{align}
\rho_{UB}  &  =\lambda|0\rangle\langle0|_{U}\otimes\mathcal{N}(\rho
_{0})+\overline{\lambda}|1\rangle\langle1|_{U}\otimes\mathcal{N}(\rho_{1}),\\
\sigma_{UE}  &  =\lambda|0\rangle\langle0|_{U}\otimes\mathcal{\hat{N}}%
(\rho_{0})+\overline{\lambda}|1\rangle\langle1|_{U}\otimes\mathcal{\hat{N}%
}(\rho_{1}),
\end{align}
we can then rewrite the last line above as%
\begin{equation}
I(U;B)_{\rho}-I(U;E)_{\sigma}.
\end{equation}
This quantity is non-negative from data processing of mutual information
because we can apply the degrading channel $\mathcal{D}_{B\rightarrow E}$ to
system $B$ of $\rho_{UB}$ and recover $\sigma_{UE}$:%
\begin{equation}
\sigma_{UE}=\mathcal{D}_{B\rightarrow E}(\rho_{UB}).
\end{equation}
This concludes the proof.
\end{proof}

The conditional quantum mutual information (CQMI) of a finite-dimensional
tripartite state $\rho_{ABC}$ is defined as%
\begin{equation}
I(A;B|C)_{\rho}\equiv H(AC)_{\rho}+H(BC)_{\rho}-H(ABC)_{\rho}-H(C)_{\rho}.
\end{equation}
In the general case, it is defined as \cite{S15,S15squashed}%
\begin{multline}
I(A;B|C)_{\rho}\equiv\\
\sup_{P_{A}}\left\{  I(A;BC)_{Q\rho Q}-I(A;C)_{Q\rho Q}:Q=P_{A}\otimes
I_{BC}\right\}  ,
\end{multline}
where the supremum is with respect to all finite-rank projections $P_{A}%
\in\mathcal{B}(\mathcal{H}_{A})$ and we take the convention as in \cite{S15,S15squashed} that
$I(A;BC)_{Q\rho Q}=\lambda I(A;BC)_{Q\rho Q/\lambda}$ where $\lambda
=\operatorname{Tr}\{Q\rho_{ABC}Q\}$. The above definition guarantees that many
properties of CQMI\ in finite dimensions carry over to the general case
\cite{S15,S15squashed}. In particular, the following chain rule holds for a
four-party state $\rho_{ABCD}\in\mathcal{D}(\mathcal{H}_{ABCD})$:%
\begin{equation}
I(A;BC|D)_{\rho}=I(A;C|D)_{\rho}+I(A;B|CD)_{\rho}.
\end{equation}

Fano's inequality \cite{F61} is the statement that for random variables $X$ and $Y$ with
alphabets $\mathcal{X}$ and $\mathcal{Y}$, respectively, the following
inequality holds%
\begin{equation}
H(X|Y)\leq\varepsilon\log_{2}(\left\vert \mathcal{X}\right\vert -1)+h_{2}%
(\varepsilon), \label{eq:fano}%
\end{equation}
where%
\begin{align}
\varepsilon &  \equiv\Pr\{X\neq Y\},\\
h_{2}(\varepsilon)  &  \equiv-\varepsilon\log_{2}\varepsilon-(1-\varepsilon
)\log_{2}(1-\varepsilon).
\end{align}
Observe that $\lim_{\varepsilon\rightarrow0}h_{2}(\varepsilon)=0$. Let
$\rho_{AB},\sigma_{AB}\in\mathcal{D}(\mathcal{H}_{A}\otimes\mathcal{H}_{B})$
with $\dim(\mathcal{H}_{A})<\infty$, $\varepsilon\in\left[  0,1\right]  $, and
suppose that $\left\Vert \rho_{AB}-\sigma_{AB}\right\Vert _{1}/2\leq
\varepsilon$. The Alicki--Fannes--Winter (AFW) inequality is as follows
\cite{AF04,Winter15}:%
\begin{equation}
\left\vert H(A|B)_{\rho}-H(A|B)_{\sigma}\right\vert \leq2\varepsilon\log
_{2}\dim(\mathcal{H}_{A})+g(\varepsilon), \label{eq:AFW-ineq-CE}%
\end{equation}
where%
\begin{equation}
g(\varepsilon)\equiv\left(  \varepsilon+1\right)  \log_{2}\left(
\varepsilon+1\right)  -\varepsilon\log_{2}\varepsilon. \label{eq:g-function}%
\end{equation}
Observe that $\lim_{\varepsilon\rightarrow0}g(\varepsilon)=0$. If the states
are classical on the first system, as in
\eqref{eq:cq-rho}--\eqref{eq:cq-sigma}, and $\dim(\mathcal{H}_{X})<\infty$ and
$\left\Vert \rho_{XB}-\sigma_{XB}\right\Vert _{1}/2\leq\varepsilon$, then the
inequality can be strengthened to \cite[Theorem~11.10.3]{W16}%
\begin{equation}
\left\vert H(X|B)_{\rho}-H(X|B)_{\sigma}\right\vert \leq\varepsilon\log
_{2}\dim(\mathcal{H}_{X})+g(\varepsilon). \label{eq:AFW-cq}%
\end{equation}

\section{Energy-constrained quantum and private capacities}

\label{sec:energy-constrained-caps}In this section, we define various notions
of energy-constrained quantum and private capacity of quantum channels. We
start by defining an energy observable (see \cite[Definition~11.3]{H12}):

\begin{definition}
[Energy observable]\label{def:energy-obs}Let $G$ be a positive semi-definite
operator, i.e., $G\in\mathcal{P}(\mathcal{H}_{A})$. Throughout, we refer to
$G$ as an energy observable. In more detail, we define $G$ as follows: let
$\{|e_{j}\rangle\}_{j}$ be an orthonormal basis for a Hilbert space
$\mathcal{H}$, and let $\{g_{j}\}_{j}$ be a sequence of non-negative real
numbers bounded from below. Then the following formula%
\begin{equation}
G|\psi\rangle=\sum_{j=1}^{\infty}g_{j}|e_{j}\rangle\langle e_{j}|\psi\rangle
\end{equation}
defines a self-adjoint operator $G$ on the dense domain $\{|\psi\rangle
:\sum_{j=1}^{\infty}g_{j}^{2}\left\vert \left\langle e_{j}|\psi\right\rangle
\right\vert ^{2}<\infty\}$, for which $|e_{j}\rangle$ is an eigenvector with
corresponding eigenvalue $g_{j}$.
\end{definition}

\noindent For a state $\rho\in\mathcal{D}(\mathcal{H}_{A})$, we follow the
convention \cite{HS12}\ that%
\begin{equation}
\operatorname{Tr}\{G\rho\}\equiv\sup_{n}\operatorname{Tr}\{\Pi_{n}G\Pi_{n}%
\rho\},
\end{equation}
where $\Pi_{n}$ denotes the spectral projection of $G$ corresponding to the
interval $[0,n]$.

\begin{definition}
The $n$th extension $\overline{G}_{n}$\ of an energy observable $G$ is defined
as%
\begin{equation}
\overline{G}_{n}\equiv\frac{1}{n}\left[G\otimes I\otimes\cdots\otimes
I+\cdots+I\otimes\cdots\otimes I\otimes G\right],
\end{equation}
where 
$n$ is the number of factors in each tensor product above.

\end{definition}

In the subsections that follow, let $\mathcal{N}:\mathcal{T}(\mathcal{H}%
_{A})\rightarrow\mathcal{T}(\mathcal{H}_{B})$ denote a quantum channel, and
let $G$ be an energy observable. Let $n\in\mathbb{N}$ denote the number of
channel uses, $M\in\mathbb{N}$ the size of a code, $P\in\lbrack0,\infty)$ an
energy parameter, and $\varepsilon\in\left[  0,1\right]  $ an error parameter.
In what follows, we discuss four different notions of capacity:\ quantum
communication with a uniform energy constraint, entanglement transmission with
an average energy constraint, private communication with a uniform energy
constraint, and secret key transmission with an average energy constraint.
Note that it is possible to consider other combinations, such as quantum communication with an average energy constraint, or secret key transmission with a uniform energy constraint, but we have decided to focus on the above four scenarios for simplicity.

\subsection{Quantum communication with a uniform energy constraint}

An $(n,M,G,P,\varepsilon)$ code for quantum communication with uniform energy
constraint consists of an encoding channel $\mathcal{E}^{n}:\mathcal{T}%
(\mathcal{H}_{S})\rightarrow\mathcal{T}(\mathcal{H}_{A}^{\otimes n})$ and a
decoding channel $\mathcal{D}^{n}:\mathcal{T}(\mathcal{H}_{B}^{\otimes
n})\rightarrow\mathcal{T}(\mathcal{H}_{S})$, where $M=\dim(\mathcal{H}_{S})$.
The energy constraint is uniform, in the sense that the following bound is
required to hold for all states resulting from the output of the encoding
channel $\mathcal{E}^{n}$:
\begin{equation}
\operatorname{Tr}\left\{  \overline{G}_{n}\mathcal{E}^{n}(\rho_{S})\right\}
\leq P, \label{eq:q-code-unif-energy-const}%
\end{equation}
where $\rho_{S}\in\mathcal{D}(\mathcal{H}_{S})$. Note that%
\begin{equation}
\operatorname{Tr}\left\{  \overline{G}_{n}\mathcal{E}^{n}(\rho_{S})\right\}
=\operatorname{Tr}\left\{  G\overline{\rho}_{n}\right\}  ,
\end{equation}
where%
\begin{equation}
\overline{\rho}_{n}\equiv\frac{1}{n}\sum_{i=1}^{n}\operatorname{Tr}%
_{A^{n}\backslash A_{i}}\{\mathcal{E}^{n}(\rho_{S})\}.
\end{equation}
due to the i.i.d.~nature of the observable $\overline{G}_{n}$. Furthermore,
the encoding and decoding channels are good for quantum communication, in the
sense that for all pure states $\phi_{RS}\in\mathcal{D}(\mathcal{H}_{R}%
\otimes\mathcal{H}_{S})$, where $\mathcal{H}_{R}$ is isomorphic
to$~\mathcal{H}_{S}$, the following entanglement fidelity criterion holds%
\begin{equation}
F(\phi_{RS},(\operatorname{id}_{R}\otimes\lbrack\mathcal{D}^{n}\circ
\mathcal{N}^{\otimes n}\circ\mathcal{E}^{n}])(\phi_{RS}))\geq1-\varepsilon.
\label{eq:q-code-fidelity}%
\end{equation}

A rate $R$ is achievable for quantum communication over $\mathcal{N}$ subject
to the uniform energy constraint $P$\ if for all $\varepsilon\in(0,1)$,
$\delta>0$, and sufficiently large $n$, there exists an $(n,2^{n[R-\delta
]},G,P,\varepsilon)$ quantum communication code with uniform energy
constraint. The quantum capacity $Q(\mathcal{N},G,P)$\ of $\mathcal{N}$ with
uniform energy constraint is equal to the supremum of all achievable rates.

\subsection{Entanglement transmission with an average energy constraint}

\label{sec:ent-trans-avg-code}An $(n,M,G,P,\varepsilon)$ code for entanglement
transmission with average energy constraint is defined very similarly as
above, except that the requirements are less stringent. The energy constraint
holds on average, in the sense that it need only hold for the maximally mixed
state $\pi_{S}$ input to the encoding channel $\mathcal{E}^{n}$:%
\begin{equation}
\operatorname{Tr}\left\{  \overline{G}_{n}\mathcal{E}^{n}(\pi_{S})\right\}
\leq P. \label{eq:EG-avg-energy-constraint}%
\end{equation}
Furthermore, we only demand that the particular maximally entangled state
$\Phi_{RS}\in\mathcal{D}(\mathcal{H}_{R}\otimes\mathcal{H}_{S})$, defined as%
\begin{equation}
\Phi_{RS}\equiv\frac{1}{M}\sum_{m,m^{\prime}=1}^{M}|m\rangle\langle m^{\prime
}|_{R}\otimes|m\rangle\langle m^{\prime}|_{S}, \label{eq:MES-EG-AVG}%
\end{equation}
is preserved with good fidelity:%
\begin{equation}
F(\Phi_{RS},(\operatorname{id}_{R}\otimes\lbrack\mathcal{D}^{n}\circ
\mathcal{N}^{\otimes n}\circ\mathcal{E}^{n}])(\Phi_{RS}))\geq1-\varepsilon.
\label{eq:EG-good-reliable-code}%
\end{equation}

A rate $R$ is achievable for entanglement transmission over $\mathcal{N}$
subject to the average energy constraint $P$\ if for all $\varepsilon\in
(0,1)$, $\delta>0$, and sufficiently large $n$, there exists an
$(n,2^{n[R-\delta]},G,P,\varepsilon)$ entanglement transmission code with
average energy constraint. The entanglement transmission capacity
$E(\mathcal{N},G,P)$\ of $\mathcal{N}$ with average energy constraint is equal
to the supremum of all achievable rates.

From definitions, it immediately follows that quantum capacity with uniform
energy constraint can never exceed entanglement transmission capacity with
average energy constraint:%
\begin{equation}
Q(\mathcal{N},G,P)\leq E(\mathcal{N},G,P). \label{eq:Q-less-than-E}%
\end{equation}
In Section~\ref{sec:cap-imps}, we establish the opposite inequality.

\subsection{Private communication with a uniform energy constraint}

An $(n,M,G,P,\varepsilon)$ code for private communication consists of a set
$\{\rho_{A^{n}}^{m}\}_{m=1}^{M}$\ of quantum states, each in $\mathcal{D}%
(\mathcal{H}_{A}^{\otimes n})$, and a POVM\ $\{\Lambda_{B^{n}}^{m}\}_{m=1}%
^{M}$ such that%
\begin{align}
\operatorname{Tr}\left\{  \overline{G}_{n}\rho_{A^{n}}^{m}\right\}   &  \leq
P,\label{eq:energy-constraint}\\
\operatorname{Tr}\{\Lambda_{B^{n}}^{m}\mathcal{N}^{\otimes n}(\rho_{A^{n}}%
^{m})\}  &  \geq1-\varepsilon,\label{eq:private-good-comm}\\
\frac{1}{2}\left\Vert \mathcal{\hat{N}}^{\otimes n}(\rho_{A^{n}}^{m}%
)-\omega_{E^{n}}\right\Vert _{1}  &  \leq\varepsilon, \label{eq:security-cond}%
\end{align}
for all $m\in\left\{  1,\ldots,M\right\}  $, with $\omega_{E^{n}}$ some fixed
state in $\mathcal{D}(\mathcal{H}_{E}^{\otimes n})$. In the above,
$\mathcal{\hat{N}}$ is a channel complementary to $\mathcal{N}$. Observe that%
\begin{equation}
\operatorname{Tr}\left\{  \overline{G}_{n}\rho_{A^{n}}^{m}\right\}
=\operatorname{Tr}\left\{  G\overline{\rho}_{A}^{m}\right\}  ,
\end{equation}
where%
\begin{equation}
\overline{\rho}_{A}^{m}\equiv\frac{1}{n}\sum_{i=1}^{n}\operatorname{Tr}%
_{A^{n}\backslash A_{i}}\{\rho_{A^{n}}^{m}\}. \label{eq:avg-state-energy}%
\end{equation}

A rate $R$ is achievable for private communication over $\mathcal{N}$ subject
to uniform energy constraint $P$\ if for all $\varepsilon\in(0,1)$, $\delta
>0$, and sufficiently large $n$, there exists an $(n,2^{n[R-\delta
]},G,P,\varepsilon)$ private communication code. The private capacity
$P(\mathcal{N},G,P)$\ of $\mathcal{N}$ with uniform energy constraint is equal
to the supremum of all achievable rates.

\subsection{Secret key transmission with an average energy constraint}

\label{sec:SKT-AVG-code}An $(n,M,G,P,\varepsilon)$ code for secret key
transmission with average energy constraint is defined very similarly as
above, except that the requirements are less stringent. The energy constraint
holds on average, in the sense that it need only hold for the average input
state:%
\begin{equation}
\frac{1}{M}\sum_{m=1}^{M}\operatorname{Tr}\left\{  \overline{G}_{n}\rho
_{A^{n}}^{m}\right\}  \leq P. \label{eq:SKT-energy-constraint}%
\end{equation}
Furthermore, we only demand that the conditions in
\eqref{eq:private-good-comm}--\eqref{eq:security-cond}\ hold on average:%
\begin{align}
\frac{1}{M}\sum_{m=1}^{M}\operatorname{Tr}\{\Lambda_{B^{n}}^{m}\mathcal{N}%
^{\otimes n}(\rho_{A^{n}}^{m})\}  &  \geq1-\varepsilon
,\label{eq:private-good-comm-SKT}\\
\frac{1}{M}\sum_{m=1}^{M}\frac{1}{2}\left\Vert \mathcal{\hat{N}}^{\otimes
n}(\rho_{A^{n}}^{m})-\omega_{E^{n}}\right\Vert _{1}  &  \leq\varepsilon,
\label{eq:security-cond-SKT}%
\end{align}
with $\omega_{E^{n}}$ some fixed state in $\mathcal{D}(\mathcal{H}%
_{E}^{\otimes n})$.

A rate $R$ is achievable for secret key transmission over $\mathcal{N}$
subject to the average energy constraint $P$\ if for all $\varepsilon\in
(0,1)$, $\delta>0$, and sufficiently large $n$, there exists an
$(n,2^{n[R-\delta]},G,P,\varepsilon)$ secret key transmission code with
average energy constraint. The secret key transmission capacity $K(\mathcal{N}%
,G,P)$\ of $\mathcal{N}$ with average energy constraint is equal to the
supremum of all achievable rates.

From definitions, it immediately follows that private capacity with uniform
energy constraint can never exceed secret key transmission capacity with
average energy constraint%
\begin{equation}
P(\mathcal{N},G,P)\leq K(\mathcal{N},G,P). \label{eq:P-less-than-K}%
\end{equation}
In Section~\ref{sec:cap-imps}, we establish the opposite inequality.

\section{Code conversions\label{sec:code-conversions}}

In this section, we establish several code conversions, which allow for
converting one type of code into another type of code along with some loss in
the code parameters. In particular, in the forthcoming subsections, we show
how to convert

\begin{enumerate}
\item an entanglement transmission code with an average energy constraint to a
quantum communication code with a uniform energy constraint,

\item a quantum communication code with a uniform energy constraint to a
private communication code with a uniform energy constraint,

\item and a secret key transmission code with an average energy constraint to
a private communication code with a uniform energy constraint.
\end{enumerate}

\noindent These code conversions then allow us to establish several
non-trivial relations between the corresponding capacities, which we do in
Section~\ref{sec:cap-imps}.

\subsection{Entanglement transmission with an average energy constraint to
quantum communication with a uniform energy constraint}

In this subsection, we show how an entanglement transmission code with an
average energy constraint implies the existence of a quantum communication
code with a uniform energy constraint, such that there is a loss in
performance in the resulting code with respect to several code parameters.

A result like this was first established in \cite{BKN98}\ and reviewed in
\cite{KW04,K07,Wat16}, under the assumption that there is no energy
constraint. Here we follow the proof approach available in \cite{K07,Wat16},
but we make several modifications in order to deal with going from an average
energy constraint to a uniform energy constraint.

\begin{proposition}
\label{thm:EG-2-QC}For all $\delta\in(1/M,1/2)$, the existence of an
$(n,M,G,P,\varepsilon)$ entanglement transmission code with average energy
constraint implies the existence of an $(n,\left\lfloor \delta M\right\rfloor
,G,P/\left(  1-2\delta\right)  ,\min\{1,2\sqrt{\varepsilon/[\delta-1/M]}\})$
quantum communication code with uniform energy constraint.
\end{proposition}

\begin{proof}
Suppose that an $(n,M,G,P,\varepsilon)$ entanglement transmission code with
average energy constraint exists. This implies that the conditions in
\eqref{eq:EG-avg-energy-constraint}\ and
\eqref{eq:EG-good-reliable-code}\ hold. Let $\mathcal{C}^{n}:\mathcal{T}%
(\mathcal{H}_{S})\rightarrow\mathcal{T}(\mathcal{H}_{S})$ denote the
finite-dimensional channel consisting of the encoding, communication channel,
and decoding:%
\begin{equation}
\mathcal{C}^{n}\equiv\mathcal{D}^{n}\circ\mathcal{N}^{\otimes n}%
\circ\mathcal{E}^{n}.
\end{equation}
We proceed with the following algorithm:

\begin{enumerate}
\item Set $k=M$, $\mathcal{H}_{M}=\mathcal{H}_{S}$, and $\delta\in\left(
1/M,1/2\right)  $. Suppose for now that $\delta M$ is a positive integer.

\item Set $|\phi_{k}\rangle\in\mathcal{H}_{k}$ to be a state vector such that
the input-output fidelity is minimized:%
\begin{equation}
|\phi_{k}\rangle\equiv\arg\min_{|\phi\rangle\in\mathcal{H}_{k}}\langle
\phi|\mathcal{C}^{n}(|\phi\rangle\langle\phi|)|\phi\rangle,
\end{equation}
and set the fidelity $F_{k}$ and energy $E_{k}$ of $|\phi_{k}\rangle$ as
follows:%
\begin{align}
F_{k}  &  \equiv\min_{|\phi\rangle\in\mathcal{H}_{k}}\langle\phi
|\mathcal{C}^{n}(|\phi\rangle\langle\phi|)|\phi\rangle\\
&  =\langle\phi_{k}|\mathcal{C}^{n}(|\phi_{k}\rangle\langle\phi_{k}|)|\phi
_{k}\rangle,\\
E_{k}  &  \equiv\operatorname{Tr}\{\overline{G}_{n}\mathcal{E}^{n}(|\phi
_{k}\rangle\langle\phi_{k}|)\}.
\end{align}

\item Set%
\begin{equation}
\mathcal{H}_{k-1}\equiv\operatorname{span}\{|\psi\rangle\in\mathcal{H}%
_{k}:\left\vert \left\langle \psi|\phi_{k}\right\rangle \right\vert =0\}.
\end{equation}
That is, $\mathcal{H}_{k-1}$ is set to the orthogonal complement of $|\phi
_{k}\rangle$ in $\mathcal{H}_{k}$, so that $\mathcal{H}_{k}=\mathcal{H}%
_{k-1}\oplus\operatorname{span}\{|\phi_{k}\rangle\}$. Set $k:=k-1$.

\item Repeat steps 2-3 until $k=\left(  1-\delta\right)  M$ after step 3.

\item Let $|\phi_{k}\rangle\in\mathcal{H}_{k}$ be a state vector such that the
input energy is maximized:%
\begin{equation}
|\phi_{k}\rangle\equiv\arg\max_{|\phi\rangle\in\mathcal{H}_{k}}%
\operatorname{Tr}\{\overline{G}_{n}\mathcal{E}^{n}(|\phi\rangle\langle
\phi|)\},
\end{equation}
and set the fidelity $F_{k}$ and energy $E_{k}$ of $|\phi_{k}\rangle$ as
follows:%
\begin{align}
F_{k}  &  \equiv\langle\phi_{k}|\mathcal{C}^{n}(|\phi_{k}\rangle\langle
\phi_{k}|)|\phi_{k}\rangle\\
E_{k}  &  \equiv\max_{|\phi\rangle\in\mathcal{H}_{k}}\operatorname{Tr}%
\{\overline{G}_{n}\mathcal{E}^{n}(|\phi\rangle\langle\phi|)\}\\
&  =\operatorname{Tr}\{\overline{G}_{n}\mathcal{E}^{n}(|\phi_{k}\rangle
\langle\phi_{k}|)\}.
\end{align}

\item Set%
\begin{equation}
\mathcal{H}_{k-1}\equiv\operatorname{span}\{|\psi\rangle\in\mathcal{H}%
_{k}:\left\vert \left\langle \psi|\phi_{k}\right\rangle \right\vert =0\}.
\end{equation}
Set $k:=k-1$.

\item Repeat steps 5-6 until $k=0$ after step 6.
\end{enumerate}

The idea behind this algorithm is to successively remove minimum fidelity
states from $\mathcal{H}_{S}$ until $k=\left(  1-\delta\right)  M$. By the
structure of the algorithm and some analysis given below, we are then
guaranteed for this $k$ and lower that%
\begin{equation}
1-\min_{|\phi\rangle\in\mathcal{H}_{k}}\langle\phi|\mathcal{C}^{n}%
(|\phi\rangle\langle\phi|)|\phi\rangle\leq\varepsilon/\delta.
\end{equation}
That is, the subspace $\mathcal{H}_{k}$ is good for quantum communication with
fidelity at least $1-\varepsilon/\delta$. After this $k$, we then successively
remove maximum energy states from $\mathcal{H}_{k}$ until the algorithm
terminates. Furthermore, the algorithm implies that%
\begin{align}
F_{M}  &  \leq F_{M-1}\leq\cdots\leq F_{\left(  1-\delta\right)
M+1},\label{eq:fidelity-ordering}\\
E_{\left(  1-\delta\right)  M}  &  \geq E_{\left(  1-\delta\right)  M-1}%
\geq\cdots\geq E_{1},\label{eq:energy-ordering}\\
\mathcal{H}_{M}  &  \supseteq\mathcal{H}_{M-1}\supseteq\cdots\supseteq
\mathcal{H}_{1}.
\end{align}
Also, $\{|\phi_{k}\rangle\}_{k=1}^{l}$ is an orthonormal basis for
$\mathcal{H}_{l}$, where $l\in\{1,\ldots,M\}$.

We now analyze the result of this algorithm by employing Markov's inequality
and some other tools. From the condition in \eqref{eq:EG-good-reliable-code}
that the original code is good for entanglement transmission, we have that%
\begin{equation}
F(\Phi_{RS},(\operatorname{id}_{R}\otimes\mathcal{C}^{n})(\Phi_{RS}%
))\geq1-\varepsilon.
\end{equation}
Since $\{|\phi_{k}\rangle\}_{k=1}^{M}$ is an orthonormal basis for
$\mathcal{H}_{M}$, we can write%
\begin{equation}
|\Phi\rangle_{RS}=\frac{1}{\sqrt{M}}\sum_{k=1}^{M}|\phi_{k}^{\ast}\rangle
_{R}\otimes|\phi_{k}\rangle_{S},
\end{equation}
where $\ast$ denotes complex conjugate with respect to the basis in
\eqref{eq:MES-EG-AVG}, and the reduced state can be written as $\Phi_{S}%
=\frac{1}{M}\sum_{k=1}^{M}|\phi_{k}\rangle\langle\phi_{k}|_{S}$. A consequence
of \cite[Exercise~9.5.1]{W16} is that%
\begin{align}
F(\Phi_{RS},(\operatorname{id}_{R}\otimes\mathcal{C}^{n})(\Phi_{RS}))  &
\leq\frac{1}{M}\sum_{k}\langle\phi_{k}|\mathcal{C}^{n}(|\phi_{k}\rangle
\langle\phi_{k}|)|\phi_{k}\rangle\nonumber\\
&  =\frac{1}{M}\sum_{k}F_{k}.
\end{align}
So this means that%
\begin{equation}
\frac{1}{M}\sum_{k}F_{k}\geq1-\varepsilon\quad\Leftrightarrow\quad\frac{1}%
{M}\sum_{k}\left(  1-F_{k}\right)  \leq\varepsilon.
\end{equation}
Now taking $K$ as a uniform random variable with realizations $k\in\left\{
1,\ldots,M\right\}  $ and applying Markov's inequality, we find that%
\begin{equation}
\Pr_{K}\{1-F_{K}\geq\varepsilon/\delta\}\leq\frac{\mathbb{E}_{K}\{1-F_{K}%
\}}{\varepsilon/\delta}\leq\frac{\varepsilon}{\varepsilon/\delta}=\delta.
\end{equation}
So this implies that $\left(  1-\delta\right)  M$ of the $F_{k}$ values are
such that $F_{k}\geq1-\varepsilon/\delta$. Since they are ordered as given in
\eqref{eq:fidelity-ordering}, we can conclude that $\mathcal{H}_{\left(
1-\delta\right)  M}$ is a subspace good for quantum communication in the
following sense:%
\begin{equation}
\min_{|\phi\rangle\in\mathcal{H}_{\left(  1-\delta\right)  M}}\langle
\phi|\mathcal{C}^{n}(|\phi\rangle\langle\phi|)|\phi\rangle\geq1-\varepsilon
/\delta.
\end{equation}

Now consider from the average energy constraint in
\eqref{eq:EG-avg-energy-constraint} that%
\begin{align}
P  &  \geq\operatorname{Tr}\left\{  \overline{G}_{n}\mathcal{E}^{n}(\pi
_{S})\right\} \\
&  =\frac{1}{M}\sum_{k=1}^{M}\operatorname{Tr}\left\{  \overline{G}%
_{n}\mathcal{E}^{n}(|\phi_{k}\rangle\langle\phi_{k}|_{S})\right\} \\
&  =\frac{1}{M}\sum_{k=1}^{M}E_{k}\\
&  \geq\frac{1-\delta}{\left(  1-\delta\right)  M}\sum_{k=1}^{\left(
1-\delta\right)  M}E_{k},
\end{align}
which we can rewrite as%
\begin{equation}
\frac{1}{\left(  1-\delta\right)  M}\sum_{k=1}^{\left(  1-\delta\right)
M}E_{k}\leq P/\left(  1-\delta\right)  .
\end{equation}
Taking $K^{\prime}$ as a uniform random variable with realizations
$k\in\left\{  1,\ldots,\left(  1-\delta\right)  M\right\}  $ and applying
Markov's inequality, we find that%
\begin{align}
\Pr_{K^{\prime}}\left\{  E_{K^{\prime}}\geq P/\left(  1-2\delta\right)
\right\}   &  \leq\frac{P/\left(  1-\delta\right)  }{P/\left(  1-2\delta
\right)  }\\
&  =\frac{1-2\delta}{1-\delta}.
\end{align}
Rewriting this, we find that%
\begin{align}
\Pr_{K^{\prime}}\left\{  E_{K^{\prime}}\leq P/\left(  1-2\delta\right)
\right\}   &  \geq1-\frac{1-2\delta}{1-\delta}\\
&  =\frac{\delta}{1-\delta}.
\end{align}
Thus, a fraction $\delta/\left(  1-\delta\right)  $ of the remaining $\left(
1-\delta\right)  M$ state vectors $|\phi_{k}\rangle$ are such that $E_{k}\leq
P/\left(  1-2\delta\right)  $. Since they are ordered as in
\eqref{eq:energy-ordering}, this means that $\left\{  |\phi_{\delta M}%
\rangle,\ldots,|\phi_{1}\rangle\right\}  $ have this property.

We can then conclude that the subspace $\mathcal{H}_{\delta M}$ is such that%
\begin{align}
\dim(\mathcal{H}_{\delta M})  &  =\delta M,\label{eq:resulting-code-size-1}\\
\min_{|\phi\rangle\in\mathcal{H}_{\delta M}}\langle\phi|\mathcal{C}^{n}%
(|\phi\rangle\langle\phi|)|\phi\rangle &  \geq1-\varepsilon/\delta
,\label{eq:min-fid-condition-almost-done}\\
\max_{|\phi\rangle\in\mathcal{H}_{\delta M}}\operatorname{Tr}\{\overline
{G}_{n}\mathcal{E}^{n}(|\phi\rangle\langle\phi|)\}  &  \leq P/\left(
1-2\delta\right)  . \label{eq:resulting-code-power-constr}%
\end{align}

Now applying Proposition~\ref{prop:min-fid-to-min-ent-fid}\ (in the appendix)
to \eqref{eq:min-fid-condition-almost-done}, we can conclude that the minimum
entanglement fidelity obeys the following bound:%
\begin{equation}
\min_{|\psi\rangle\in\mathcal{H}_{\delta M}^{\prime}\otimes\mathcal{H}_{\delta
M}}\langle\psi|(\operatorname{id}_{\mathcal{H}_{\delta M}^{\prime}}%
\otimes\mathcal{C}^{n})(|\psi\rangle\langle\psi|)|\psi\rangle\geq
1-2\sqrt{\varepsilon/\delta}. \label{eq:resulting-code-fid-1}%
\end{equation}

To finish off the proof, suppose that $\delta M$ is not an integer. Then there
exists a $\delta^{\prime}<\delta$ such that $\delta^{\prime}M=\left\lfloor
\delta M\right\rfloor $ is a positive integer. By the above reasoning, there
exists a code with parameters as given in
\eqref{eq:resulting-code-size-1}--\eqref{eq:resulting-code-fid-1}, except with
$\delta$ replaced by $\delta^{\prime}$. Then the code dimension is equal to
$\left\lfloor \delta M\right\rfloor $. Using that $\delta^{\prime
}M=\left\lfloor \delta M\right\rfloor >\delta M-1$, we find that
$\delta^{\prime}>\delta-1/M$, which implies that $1-2\sqrt{\varepsilon
/\delta^{\prime}}>1-2\sqrt{\varepsilon/[\delta-1/M]}$. We also have that
$P/\left(  1-2\delta^{\prime}\right)  <P/\left(  1-2\delta\right)  $. This
concludes the proof.
\end{proof}

\subsection{Quantum communication with a uniform energy constraint implies
private communication with a uniform energy constraint}

This subsection establishes that a quantum communication code with uniform
energy constraint can always be converted to one for private communication
with uniform energy constraint, such that there is negligible loss with
respect to code parameters.

\begin{proposition}
\label{thm:QC-to-PC}The existence of an $(n,M,G,P,\varepsilon)$ quantum
communication code with uniform energy constraint implies the existence of an
$(n,\left\lfloor M/2\right\rfloor ,G,P,\min\{1,2\sqrt{\varepsilon}\})$ code for
private communication with uniform energy constraint.
\end{proposition}

\begin{proof}
Starting from an $(n,M,G,P,\varepsilon)$ quantum communication code with
uniform energy constraint, we can use it to transmit a maximally entangled
state%
\begin{equation}
\Phi_{RS}\equiv\frac{1}{M}\sum_{m,m^{\prime}=1}^{M}|m\rangle\langle m^{\prime
}|_{R}\otimes|m\rangle\langle m^{\prime}|_{S}%
\end{equation}
of Schmidt rank $M$ faithfully, by applying \eqref{eq:q-code-fidelity}:%
\begin{equation}
F(\Phi_{RS},(\operatorname{id}_{R}\otimes\mathcal{D}^{n}\circ\mathcal{N}%
^{\otimes n}\circ\mathcal{E}^{n})(\Phi_{RS}))\geq1-\varepsilon.
\label{eq:fid-crit-q-to-priv}%
\end{equation}
Consider that the state%
\begin{equation}
\sigma_{RSE^{n}}\equiv(\operatorname{id}_{R}\otimes\mathcal{D}^{n}\circ
\lbrack\mathcal{U}^{\mathcal{N}}]^{\otimes n}\circ\mathcal{E}^{n})(\Phi_{RS})
\end{equation}
extends the state output from the actual protocol. By Uhlmann's theorem (see
\eqref{eq:Uhlmann-4-exts}), there exists an extension of $\Phi_{RS}$ such that
the fidelity between this extension and the state $\sigma_{RSE^{n}}$ is equal
to the fidelity in \eqref{eq:fid-crit-q-to-priv}. However, the maximally
entangled state $\Phi_{RS}$ is \textquotedblleft
unextendible\textquotedblright\ in the sense that the only possible extension
is a tensor-product state $\Phi_{RS}\otimes\omega_{E^{n}}$ for some state
$\omega_{E^{n}}$. So, putting these statements together, we find that%
\begin{equation}
F(\Phi_{RS}\otimes\omega_{E^{n}},(\operatorname{id}_{R}\otimes\mathcal{D}%
^{n}\circ\lbrack\mathcal{U}^{\mathcal{N}}]^{\otimes n}\circ\mathcal{E}%
^{n})(\Phi_{RS}))\geq1-\varepsilon.
\end{equation}
Furthermore, measuring the $R$ and $S$ systems locally in the Schmidt basis of
$\Phi_{RS}$ only increases the fidelity, so that%
\begin{equation}
F(\overline{\Phi}_{RS}\otimes\omega_{E^{n}},(\operatorname{id}_{R}%
\otimes\overline{\mathcal{D}}^{n}\circ\lbrack\mathcal{U}^{\mathcal{N}%
}]^{\otimes n}\circ\mathcal{E}^{n})(\overline{\Phi}_{RS}))\geq1-\varepsilon,
\end{equation}
where $\overline{\mathcal{D}}^{n}$ denotes the concatenation of the original
decoder $\mathcal{D}^{n}$ followed by the local measurement:%
\begin{align}
\overline{\mathcal{D}}^{n}(\cdot)  &  \equiv\sum_{m}|m\rangle\langle
m|\mathcal{D}^{n}(\cdot)|m\rangle\langle m|\\
&  =\sum_{m}\operatorname{Tr}\{\mathcal{D}^{n\dag}[|m\rangle\langle
m|](\cdot)\}|m\rangle\langle m|.
\end{align}
Observe that $\{\mathcal{D}^{n\dag}[|m\rangle\langle m|]\}_{m}$ is a valid
POVM. Employing the inequalities in \eqref{eq:F-vd-G}, we can conclude that%
\begin{equation}
\frac{1}{2}\left\Vert \overline{\Phi}_{RS}\otimes\omega_{E^{n}}%
-(\operatorname{id}_{R}\otimes\overline{\mathcal{D}}^{n}\circ\lbrack
\mathcal{U}^{\mathcal{N}}]^{\otimes n}\circ\mathcal{E}^{n})(\overline{\Phi
}_{RS})\right\Vert _{1}\leq\sqrt{\varepsilon}.
\end{equation}
Using the direct sum property of the trace distance from
\eqref{eq:direct-sum-TD} and defining $\rho_{A^{n}}^{m}\equiv\mathcal{E}%
^{n}(|m\rangle\langle m|_{S})$, we can then rewrite this as%
\begin{equation}
\frac{1}{2M}\sum_{m=1}^{M}\left\Vert |m\rangle\langle m|_{S}\otimes
\omega_{E^{n}}-(\overline{\mathcal{D}}^{n}\circ\lbrack\mathcal{U}%
^{\mathcal{N}}]^{\otimes n})(\rho_{A^{n}}^{m})\right\Vert _{1}\leq
\sqrt{\varepsilon}.
\end{equation}
Markov's inequality then guarantees that there exists a subset $\mathcal{M}%
^{\prime}$\ of $\left[  M\right]  $ of size $\left\lfloor M/2\right\rfloor $
such that the following condition holds for all $m\in\mathcal{M}^{\prime}$:%
\begin{equation}
\frac{1}{2}\left\Vert |m\rangle\langle m|_{S}\otimes\omega_{E^{n}}%
-(\overline{\mathcal{D}}^{n}\circ\lbrack\mathcal{U}^{\mathcal{N}}]^{\otimes
n})(\rho_{A^{n}}^{m})\right\Vert _{1}\leq2\sqrt{\varepsilon}.
\label{eq:q-to-p-good-condition}%
\end{equation}
We now define the private communication code to consist of codewords
$\{\rho_{A^{n}}^{m}\equiv\mathcal{E}^{n}(|m\rangle\langle m|_{S}%
)\}_{m\in\mathcal{M}^{\prime}}$ and the decoding POVM\ to be%
\begin{multline}
\{\Lambda_{B^{n}}^{m}\equiv\mathcal{D}^{n\dag}(|m\rangle\langle m|)\}_{m\in
\mathcal{M}^{\prime}}\\
\cup\left\{  \Lambda_{B^{n}}^{0}\equiv\mathcal{D}^{n\dag}\!\left(
\sum_{m\not \in \mathcal{M}^{\prime}}|m\rangle\langle m|\right)  \right\}  .
\end{multline}
Note that the energy constraint holds for all codewords%
\begin{equation}
\operatorname{Tr}\{\overline{G}_{n}\rho_{A^{n}}^{m}\}\leq P,
\end{equation}
due to the assumption that we start from a quantum communication code with
uniform energy constraint as given in \eqref{eq:q-code-unif-energy-const}.
Applying monotonicity of partial trace to \eqref{eq:q-to-p-good-condition}
with respect to system~$S$, we find that the following condition holds for all
$m\in\mathcal{M}^{\prime}$:%
\begin{equation}
\frac{1}{2}\left\Vert \omega_{E^{n}}-\mathcal{\hat{N}}^{\otimes n}(\rho
_{A^{n}}^{m})\right\Vert _{1}\leq2\sqrt{\varepsilon},
\end{equation}
which gives the desired security condition in \eqref{eq:security-cond}.
Applying monotonicity of partial trace to
\eqref{eq:q-to-p-good-condition}\ with respect to system~$E^{n}$ gives that%
\begin{equation}
\frac{1}{2}\left\Vert |m\rangle\langle m|_{S}-(\overline{\mathcal{D}}^{n}%
\circ\mathcal{N}^{\otimes n})(\rho_{A^{n}}^{m})\right\Vert _{1}\leq
2\sqrt{\varepsilon}, \label{eq:q-to-p-good-decoding-1}%
\end{equation}
for all $m\in\mathcal{M}^{\prime}$. Abbreviating $\Gamma^{m^{\prime}}_{B^{n}}
\equiv\mathcal{D}^{n\dag}(|m^{\prime}\rangle\langle m^{\prime}|)$, consider
then that for all $m \in\mathcal{M}^{\prime}$
\begin{align}
&  \frac{1}{2}\left\Vert |m\rangle\langle m|_{S}-(\overline{\mathcal{D}}%
^{n}\circ\mathcal{N}^{\otimes n})(\rho_{A^{n}}^{m})\right\Vert _{1}\nonumber\\
&  =\frac{1}{2}\left\Vert |m\rangle\langle m|_{S}-\sum_{m^{\prime}=1}^{M
}\operatorname{Tr}\{\Gamma^{m^{\prime}}_{B^{n}}\mathcal{N}^{\otimes n}%
(\rho_{A^{n}}^{m})\}|m^{\prime}\rangle\langle m^{\prime}|\right\Vert
_{1}\nonumber\\
&  =\frac{1}{2}\left\Vert p_{e}|m\rangle\langle m|_{S}-\sum_{m^{\prime}\neq
m}\operatorname{Tr}\{\Gamma^{m^{\prime}}_{B^{n}}\mathcal{N}^{\otimes n}%
(\rho_{A^{n}}^{m})\}|m^{\prime}\rangle\langle m^{\prime}|\right\Vert
_{1}\nonumber\\
&  =\frac{1}{2}\left(  p_{e}+\sum_{m^{\prime}\neq m}\operatorname{Tr}%
\{\Gamma^{m^{\prime}}_{B^{n}}\mathcal{N}^{\otimes n}(\rho_{A^{n}}%
^{m})\}\right) \nonumber\\
&  =1-\operatorname{Tr}\{\Lambda_{B^{n}}^{m}\mathcal{N}^{\otimes n}%
(\rho_{A^{n}}^{m})\},
\end{align}
where $p_{e}\equiv1-\operatorname{Tr}\{\Lambda_{B^{n}}^{m}\mathcal{N}^{\otimes
n}(\rho_{A^{n}}^{m})\}$. Combining this equality with
\eqref{eq:q-to-p-good-decoding-1} gives the desired reliable decoding
condition in \eqref{eq:private-good-comm} for all $m \in\mathcal{M}^{\prime}$
\begin{equation}
\operatorname{Tr}\{\Lambda_{B^{n}}^{m}\mathcal{N}^{\otimes n}(\rho_{A^{n}}%
^{m})\}\geq1-2\sqrt{\varepsilon}.
\end{equation}
Thus, we have shown that from an $(n,M,G,P,\varepsilon)$ quantum communication
code with uniform energy constraint, one can realize an $(n,\left\lfloor
M/2\right\rfloor ,G,P,2\sqrt{\varepsilon})$ code for private communication
with uniform energy constraint.
\end{proof}

\begin{remark}
That a quantum communication code can be easily converted to a private
communication code is part of the folklore of quantum information theory.
Ref.~\cite{ieee2005dev}\ proved that the unconstrained quantum capacity never
exceeds the unconstrained private capacity, but we are not aware of an
explicit code conversion statement of the form given in
Proposition~\ref{thm:QC-to-PC}.
\end{remark}

\subsection{Secret key transmission with an average energy constraint implies
private communication with a uniform energy constraint}

We finally establish that a secret key transmission code with average energy
constraint can be converted to a private communication code with uniform
energy constraint.

\begin{proposition}
\label{thm:SK-to-PC}For $\delta\in(1/M,1/3)$, the existence of an
$(n,M,G,P,\varepsilon)$ secret key transmission code with average energy
constraint implies the existence of an $(n,\left\lfloor \delta M\right\rfloor
,G,P/(1-3\delta),\min\{1,\varepsilon/[\delta-1/M]\})$ private communication code
with uniform energy constraint.
\end{proposition}

\begin{proof}
To begin with, suppose that $\delta M$ is an integer. The existence of an
$(n,M,G,P,\varepsilon)$ secret key transmission code with average energy
constraint implies that the following three conditions hold:
\begin{align}
\frac{1}{M}\sum_{m=1}^{M}E_{m}  &  \leq P~,\quad\frac{1}{M}\sum_{m=1}^{M}%
T_{m}\geq1-\varepsilon~,\label{eq:average-E-SK-code}\\
\frac{1}{M}\sum_{m=1}^{M}D_{m}  &  \leq\varepsilon~,
\end{align}
where%
\begin{align}
E_{m}  &  \equiv\operatorname{Tr}\{\overline{G}_{n}\rho_{A^{n}}^{m}\}~,\\
T_{m}  &  \equiv\operatorname{Tr}\{\Lambda_{B^{n}}^{m}\mathcal{N}^{\otimes
n}(\rho_{A^{n}}^{m})\}~,\\
D_{m}  &  \equiv\frac{1}{2}\left\Vert \mathcal{\hat{N}}^{\otimes n}%
(\rho_{A^{n}}^{m})-\omega_{E^{n}}\right\Vert _{1}~.
\end{align}
Now taking $\hat{M}$ as a uniform random variable with realizations
$m\in\left\{  {1,\ldots,M}\right\}  $ and applying Markov's inequality, we
have for $\delta\in(0,1/3)$ that%
\begin{equation}
\Pr_{\hat{M}}\{1-T_{\hat{M}}\geq\varepsilon/\delta\}\leq\frac{\mathbb{E}%
_{\hat{M}}\{1-T_{\hat{M}}\}}{\varepsilon/\delta}\leq\frac{\varepsilon
}{\varepsilon/\delta}~.
\end{equation}
This implies that $(1-\delta)M$ of the $T_{m}$ values are such that $T_{m}%
\geq1-\varepsilon/\delta$. We then rearrange the order of $T_{m}$, $D_{m}$,
and $E_{m}$ using a label $m^{\prime}$ such that the first $(1-\delta)M$ of
the $T_{m^{\prime}}$ variables satisfy the condition $T_{m^{\prime}}%
\geq1-\varepsilon/\delta$. Now from \eqref{eq:average-E-SK-code}, we have that%
\begin{equation}
\varepsilon\geq\frac{1}{M}\sum_{m^{\prime}=1}^{M}D_{m^{\prime}}\geq
\frac{1-\delta}{(1-\delta)M}\sum_{m^{\prime}=1}^{(1-\delta)M}D_{m^{\prime}}~,
\end{equation}
which can be rewritten as%
\begin{equation}
\frac{1}{(1-\delta)M}\sum_{m=1}^{(1-\delta)M}D_{m^{\prime}}\leq\frac
{\varepsilon}{1-\delta}.
\end{equation}
Now taking $\hat{M}^{\prime}$ as a uniform random variable with realizations
$m^{\prime}\in\{1,\ldots,(1-\delta)M\}$ and applying Markov's inequality, we
find that%
\begin{align}
\Pr_{\hat{M}^{\prime}}\left\{  D_{\hat{M}^{\prime}}\geq\varepsilon
/\delta\right\}   &  \leq\frac{\mathbb{E}_{\hat{M}^{\prime}}\{D_{\hat
{M}^{\prime}}\}}{\varepsilon/\delta}\\
&  \leq\frac{\varepsilon/(1-\delta)}{\varepsilon/\delta}\\
&  =\frac{\delta}{1-\delta}~.
\end{align}
Thus a fraction $1-\left[  \delta/(1-\delta)\right]  =(1-2\delta)/(1-\delta)$
of the first $(1-\delta)M$ variables $D_{m^{\prime}}$ satisfy $D_{\hat
{M}^{\prime}}\leq\varepsilon/\delta$. Now rearrange the order of
$T_{m^{\prime}}$, $D_{m^{\prime}}$, and $E_{m^{\prime}}$ with label
$m^{\prime\prime}$ such that the first $(1-2\delta)M$ of them satisfy
\begin{align}
T_{m^{\prime\prime}}  &  \geq1-\varepsilon/\delta~,\\
D_{m^{\prime\prime}}  &  \leq\varepsilon/\delta~.
\end{align}
From \eqref{eq:average-E-SK-code}, we get that%
\begin{equation}
P\geq\frac{1}{M}\sum_{m^{\prime\prime}=1}^{M}E_{m^{\prime\prime}}\geq
\frac{1-2\delta}{(1-2\delta)M}\sum_{m^{\prime\prime}=1}^{(1-2\delta
)M}E_{m^{\prime\prime}}~,
\end{equation}
which can be rewritten as%
\begin{equation}
\frac{1}{(1-2\delta)M}\sum_{m^{\prime\prime}=1}^{(1-2\delta)M}E_{m^{\prime
\prime}}\leq\frac{P}{1-2\delta}~.
\end{equation}
Taking $\hat{M}^{\prime\prime}$ as a uniform random variable with realizations
$m^{\prime\prime}\in\{1,...,(1-2\delta)M\}$ and applying Markov's inequality,
we find that%
\begin{align}
\Pr_{\hat{M}^{\prime\prime}}\left\{  E_{\hat{M}^{\prime\prime}}\geq
P/(1-3\delta)\right\}   &  \leq\frac{\mathbb{E}_{\hat{M}^{\prime\prime}%
}\{E_{\hat{M}^{\prime\prime}}\}}{P/(1-\delta)}\\
&  \leq\frac{P/(1-2\delta)}{P/(1-3\delta)}\\
&  =\frac{1-3\delta}{1-2\delta}~.
\end{align}
Thus a fraction $1-(1-3\delta)/(1-2\delta)=\delta/(1-2\delta)$ of the first
$(1-2\delta)M$ variables $E_{m^{\prime\prime}}$ satisfy the condition
$E_{\hat{M}^{\prime\prime}}\leq P/(1-3\delta)$. We can finally relabel
$T_{m^{\prime\prime}}$, $D_{m^{\prime\prime}}$, and $E_{m^{\prime\prime}}$
with a label $m^{\prime\prime\prime}$ such that the first $\delta M$ of them
satisfy
\begin{align}
E_{m^{\prime\prime\prime}}  &  \leq P/(1-3\delta
)~,\label{eq:resulting-code-energy-2}\\
T_{m^{\prime\prime\prime}}  &  \geq1-\varepsilon/\delta~,\\
D_{m^{\prime\prime\prime}}  &  \leq\varepsilon/\delta~.
\label{eq:resulting-code-secrecy-2}%
\end{align}
The corresponding codewords then constitute an $(n,\delta M,G,P/(1-3\delta
),\varepsilon/\delta)$ private communication code with uniform energy constraint.

To finish off the proof, suppose that $\delta M$ is not an integer. Then there
exists a $\delta^{\prime}<\delta$ such that $\delta^{\prime}M=\left\lfloor
\delta M\right\rfloor $ is a positive integer. By the above reasoning, there
exists a code with parameters as given in
\eqref{eq:resulting-code-energy-2}--\eqref{eq:resulting-code-secrecy-2},
except with $\delta$ replaced by $\delta^{\prime}$. Then the code size is
equal to $\left\lfloor \delta M\right\rfloor $. Using that $\delta^{\prime
}M=\left\lfloor \delta M\right\rfloor >\delta M-1$, we find that
$\delta^{\prime}>\delta-1/M$, which implies that $1-\varepsilon/\delta
^{\prime}>1-\varepsilon/[\delta-1/M]$ and $\varepsilon/\delta^{\prime
}<\varepsilon/\left[  \delta-1/M\right]  $. We also have that $P/\left(
1-3\delta^{\prime}\right)  <P/\left(  1-3\delta\right)  $. This concludes the proof.
\end{proof}

\section{Implications of code conversions for capacities}

\label{sec:cap-imps}In this brief section, we show how the various code
conversions from Section~\ref{sec:code-conversions}\ have implications for the
capacities defined in Section~\ref{sec:energy-constrained-caps}. The main
result is the following theorem:

\begin{theorem}
\label{thm:cap-relations}Let $\mathcal{N}:\mathcal{T}(\mathcal{H}%
_{A})\rightarrow\mathcal{T}(\mathcal{H}_{B})$ be a quantum channel,
$G\in\mathcal{P}(\mathcal{H}_{A})$ an energy observable, and $P\in
\lbrack0,\infty)$. Then the following relations hold for the capacities
defined in Section~\ref{sec:energy-constrained-caps}:%
\begin{align}
Q(\mathcal{N},G,P)  &  =E(\mathcal{N},G,P)\nonumber\\
&  \leq P(\mathcal{N},G,P)=K(\mathcal{N},G,P).
\end{align}

\end{theorem}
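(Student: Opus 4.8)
The plan is to establish the four-part chain of (in)equalities by combining the code conversions of Section~\ref{sec:code-conversions} with the trivial inequalities \eqref{eq:Q-less-than-E} and \eqref{eq:P-less-than-K} already recorded from the definitions. The strategy throughout is that each code conversion, applied at the level of achievable rates, yields an inequality between capacities; pairing each such inequality with its easy reverse gives an equality. The key observation making this work is that every conversion loses only a vanishing amount in the code parameters: the code size drops by a factor $\delta$ (or $1/2$), the energy blows up by a factor like $1/(1-2\delta)$ or $1/(1-3\delta)$, and the error degrades to something like $2\sqrt{\varepsilon/[\delta-1/M]}$. Since in the definition of capacity we take $n\to\infty$ first (so $M\to\infty$) and may send $\varepsilon\to 0$ and $\delta\to 0$ afterwards, all of these multiplicative and additive losses wash out in the limit, preserving the rate.

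First I would prove $Q(\mathcal{N},G,P)=E(\mathcal{N},G,P)$. The inequality $Q\leq E$ is \eqref{eq:Q-less-than-E}. For the reverse, take any rate $R$ achievable for entanglement transmission with average energy constraint $P$, so for large $n$ there is an $(n,2^{n[R-\delta_0]},G,P,\varepsilon)$ code. Apply Theorem~\ref{thm:EG-2-QC} with a small parameter $\delta$: this produces an $(n,\lfloor \delta 2^{n[R-\delta_0]}\rfloor,G,P/(1-2\delta),2\sqrt{\varepsilon/[\delta-2^{-n[R-\delta_0]}]})$ quantum communication code with uniform energy constraint. The resulting rate is $\tfrac{1}{n}\log_2\lfloor \delta 2^{n[R-\delta_0]}\rfloor = R-\delta_0+\tfrac{1}{n}\log_2\delta+o(1)$, which tends to $R-\delta_0$ as $n\to\infty$ since $\log_2\delta$ is a fixed constant. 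The energy constraint is $P/(1-2\delta)$, which can be made arbitrarily close to $P$ by taking $\delta$ small, and the error tends to $2\sqrt{\varepsilon/\delta}$, which can be made small by choosing $\varepsilon\ll\delta$. Hence every rate below $E(\mathcal{N},G,P)$ is achievable for quantum communication at energy arbitrarily close to $P$, giving $E(\mathcal{N},G,P)\leq Q(\mathcal{N},G,P)$ after the standard limiting argument.

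Next I would prove $P(\mathcal{N},G,P)=K(\mathcal{N},G,P)$ by the identical template, now invoking Theorem~\ref{thm:SK-to-PC}: the inequality $P\leq K$ is \eqref{eq:P-less-than-K}, and the conversion from a secret key transmission code (average constraint) to a private communication code (uniform constraint) shows $K\leq P$, with the code size shrinking by $\delta$, the energy inflating to $P/(1-3\delta)$, and the error becoming $\varepsilon/[\delta-1/M]$, all controllable in the limit exactly as before. Finally, the middle inequality $E(\mathcal{N},G,P)\leq P(\mathcal{N},G,P)$ follows from Theorem~\ref{thm:QC-to-PC} together with the first equality: a quantum communication code with uniform constraint converts to a private communication code with the same uniform energy $P$, code size $\lfloor M/2\rfloor$, and error $2\sqrt{\varepsilon}$, so $Q(\mathcal{N},G,P)\leq P(\mathcal{N},G,P)$; combined with $Q=E$ this gives $E\leq P$. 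Assembling the three pieces yields the full chain.

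The main obstacle, and the only place requiring genuine care rather than bookkeeping, is managing the interplay between the energy-parameter inflation and the order of limits. Each conversion relaxes the energy bound from $P$ to $P/(1-c\delta)$, so one must verify that achievability at the inflated energy $P/(1-c\delta)$ for every $\delta>0$ really does imply achievability at $P$ itself; this rests on a continuity/monotonicity property of the capacity in its energy argument (namely that $Q(\mathcal{N},G,P)$ is monotone nondecreasing and suitably continuous from below in $P$), which lets one take $\delta\to 0$ after taking $n\to\infty$. I would make this explicit by fixing the target rate strictly below the relevant capacity, choosing $\delta$ small enough that the inflated energy is within any prescribed tolerance of $P$, and only then sending $n\to\infty$ and $\varepsilon\to 0$; the careful ordering of these limits is what guarantees no rate is lost and no energy budget is genuinely exceeded in the limiting statement.
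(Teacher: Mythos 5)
Your proposal is correct and follows essentially the same route as the paper's proof: the definitional inequalities \eqref{eq:Q-less-than-E} and \eqref{eq:P-less-than-K} are combined with Theorems~\ref{thm:EG-2-QC}, \ref{thm:QC-to-PC}, and \ref{thm:SK-to-PC} applied to a sequence of codes achieving a given rate, with the losses in size, error, and energy checked to vanish in the appropriate limits. The delicate point you flag---passing from achievability at energy $P/(1-c\delta)$ for every $\delta>0$ to achievability at $P$ itself---is handled in the paper by exactly the limiting move you outline: for each fixed $\delta$ take $n\to\infty$ to obtain achievability under the constraint $P/(1-2\delta)$, and then conclude achievability under the constraint $\inf_{\delta\in(0,1/2)}P/(1-2\delta)=P$ (note that what this step implicitly uses is monotonicity together with continuity from above in the energy parameter, not continuity from below as you wrote).
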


\begin{proof}
As a consequence of the definitions of these capacities and as remarked in
\eqref{eq:Q-less-than-E} and \eqref{eq:P-less-than-K}, we have that%
\begin{align}
Q(\mathcal{N},G,P) &  \leq E(\mathcal{N},G,P),\\
P(\mathcal{N},G,P) &  \leq K(\mathcal{N},G,P).
\end{align}
So it suffices to prove the following three inequalities:%
\begin{align}
Q(\mathcal{N},G,P) &  \geq E(\mathcal{N},G,P),\label{eq:Q>=E}\\
Q(\mathcal{N},G,P) &  \leq P(\mathcal{N},G,P),\label{eq:Q<=P}\\
P(\mathcal{N},G,P) &  \geq K(\mathcal{N},G,P).\label{eq:K<=P}%
\end{align}
These follow from Propositions~\ref{thm:EG-2-QC}, \ref{thm:QC-to-PC}, and
\ref{thm:SK-to-PC}, respectively. Let us establish \eqref{eq:Q>=E}. Fix
a constant $\delta\in(0,1/2)$. Suppose
that $R$ is an achievable rate for entanglement transmission with an average
energy constraint $P(1-2\delta)$. This implies the existence of a sequence of $(n,M_{n}%
,G,P(1-2\delta),\varepsilon_{n})$\ codes such that%
\begin{align}
\liminf_{n\rightarrow\infty}\frac{1}{n}\log M_{n} &  =R,\\
\lim_{n\rightarrow\infty}\varepsilon_{n} &  =0.
\end{align}
Suppose that the sequence is such that $M_{n}$ is non-decreasing with $n$ (if
it is not the case, then pick out a subsequence for which it is the case).  Now pick $n$ large enough such that $\delta
\geq1/M_{n}$. Invoking Proposition~\ref{thm:EG-2-QC}, there exists an
$(n,\left\lfloor \delta M_{n}\right\rfloor ,G,P,\min
[1,2\sqrt{\varepsilon_{n}/\left[  \delta-1/M_{n}\right]  }])$ quantum
communication code with uniform energy constraint. From the facts that%
\begin{align}
\liminf_{n\rightarrow\infty}\frac{1}{n}\log\left(  \left\lfloor \delta
M_{n}\right\rfloor \right)   &  =\liminf_{n\rightarrow\infty}\frac{1}{n}\log
M_{n}\\
&  =R,\\
\limsup_{n\rightarrow\infty}2\sqrt{\varepsilon_{n}/\left[  \delta
-1/M_{n}\right]  } &  =0,
\end{align}
we can conclude that $R$ is an achievable rate for quantum communication with
uniform energy constraint $P$. So this implies that
$Q(\mathcal{N},G,P)   \geq E(\mathcal{N},G,P(1-2\delta))$. However, since we have shown this
inequality to be true for all $\delta\in(0,1/2)$, we can then take a supremum over $\delta\in(0,1/2)$ to conclude that $Q(\mathcal{N},G,P)   \geq \sup_{\delta\in(0,1/2)}E(\mathcal{N},G,P(1-2\delta)) = E(\mathcal{N},G,P)$. So we conclude \eqref{eq:Q>=E}. We can
argue the other inequalities in \eqref{eq:Q<=P} and
\eqref{eq:K<=P}\ similarly, by applying Propositions~\ref{thm:QC-to-PC} and~\ref{thm:SK-to-PC}, respectively.
\end{proof}

\section{Achievability of regularized, energy-constrained coherent information
for energy-constrained quantum communication\label{sec:coh-info-ach}}

The main result of this section is Theorem~\ref{thm:coh-info-ach}, which shows
that the regularized energy-constrained coherent information is achievable for
energy-constrained quantum communication. In order to do so, we need to
restrict the energy observables and channels that we consider. We impose two
arguably natural constraints: that the energy observable be a Gibbs observable
as given in Definition~\ref{def:Gibbs-obs}\ and that the channel have finite
output entropy as given in Condition~\ref{cond:finite-out-entropy}. Gibbs
observables have been considered in several prior works
\cite{H03,H04,HS06,Holevo2010,H12,Winter15}\ as well as finite output-entropy
channels \cite{H03,H04,H12}.

When defining a Gibbs observable, we follow \cite[Lemma~11.8]{H12} and
\cite[Section~IV]{Winter15}:

\begin{definition}
[Gibbs observable]\label{def:Gibbs-obs}Let $G$ be an energy observable as
given in Definition~\ref{def:energy-obs}. Such an operator $G$ is a Gibbs
observable if for all $\beta>0$, the following holds%
\begin{equation}
\operatorname{Tr}\{\exp(-\beta G)\}<\infty. \label{eq:thermal-well-defined}%
\end{equation}

\end{definition}

The above condition implies that a Gibbs observable$~G$ always has a finite
value of the partition function $\operatorname{Tr}\{\exp(-\beta G)\}$ for all
$\beta>0$ and thus a well defined thermal state for all $\beta>0$, given by
$e^{-\beta G}/\operatorname{Tr}\{e^{-\beta G}\}$.

\begin{condition}
[Finite output entropy]\label{cond:finite-out-entropy}Let $G$ be a Gibbs
observable and $P\in\lbrack0,\infty)$. A quantum channel $\mathcal{N}$
satisfies the finite-output entropy condition with respect to $G$ and $P$ if%
\begin{equation}
\sup_{\rho:\operatorname{Tr}\{G\rho\}\leq P}H(\mathcal{N}(\rho))<\infty,
\label{eq:finiteness-cap}%
\end{equation}

\end{condition}

\begin{lemma}
\label{lem:env-out-ent}Let $\mathcal{N}$ denote a quantum channel satisfying
Condition~\ref{cond:finite-out-entropy}, $G$ a Gibbs observable, and
$P\in\lbrack0,\infty)$. Then any complementary channel $\mathcal{\hat{N}}$ of
$\mathcal{N}$ satisfies the finite-entropy condition%
\begin{equation}
\sup_{\rho:\operatorname{Tr}\{G\rho\}\leq P}H(\mathcal{\hat{N}}(\rho))<\infty.
\label{eq:finite-entropy-comp-ch}%
\end{equation}

\end{lemma}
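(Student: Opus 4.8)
The plan is to bound the output entropy of the complementary channel by the sum of the input entropy and the output entropy of $\mathcal{N}$ itself, and then to observe that each of these two quantities is finite under the stated hypotheses. Since complementary channels are unique up to a partial isometry acting on $\mathcal{H}_{E}$, and entropy is invariant under isometries, it suffices to work with the particular complement $\mathcal{\hat{N}}$ arising from a fixed Stinespring dilation $U:\mathcal{H}_{A}\rightarrow\mathcal{H}_{B}\otimes\mathcal{H}_{E}$; the bound then transfers automatically to every complementary channel.

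First I would fix a state $\rho\in\mathcal{D}(\mathcal{H}_{A})$ with $\operatorname{Tr}\{G\rho\}\leq P$ and introduce a purification $\psi_{RA}$ of $\rho$. Propagating $\psi_{RA}$ through the dilation isometry produces a pure tripartite state $\omega_{RBE}$ whose marginals satisfy $H(R)_{\omega}=H(\rho)$, $\omega_{B}=\mathcal{N}(\rho)$, and $\omega_{E}=\mathcal{\hat{N}}(\rho)$. Because $\omega_{RBE}$ is pure, the bipartite cut $(RB)\,|\,E$ gives $H(E)_{\omega}=H(RB)_{\omega}$, and subadditivity of quantum entropy yields $H(RB)_{\omega}\leq H(R)_{\omega}+H(B)_{\omega}$. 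Chaining these relations gives the central estimate
\begin{equation}
H(\mathcal{\hat{N}}(\rho))\leq H(\rho)+H(\mathcal{N}(\rho)).
\end{equation}

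Next I would bound the two terms on the right-hand side uniformly over the constraint set. The second term is controlled directly by Condition~\ref{cond:finite-out-entropy}, which asserts that $\sup_{\rho:\operatorname{Tr}\{G\rho\}\leq P}H(\mathcal{N}(\rho))<\infty$. For the first term I would invoke the maximum-entropy principle associated with the Gibbs observable $G$: for any fixed $\beta>0$ the thermal state $\gamma_{\beta}\equiv e^{-\beta G}/\operatorname{Tr}\{e^{-\beta G}\}$ is a well-defined, full-rank state by Definition~\ref{def:Gibbs-obs}, so nonnegativity of $D(\rho\Vert\gamma_{\beta})$ together with the identity $\log_{2}\gamma_{\beta}=-\beta G\log_{2}e-(\log_{2}\operatorname{Tr}\{e^{-\beta G}\})I$ rearranges into
\begin{equation}
H(\rho)\leq\beta P\log_{2}e+\log_{2}\operatorname{Tr}\{e^{-\beta G}\}<\infty,
\end{equation}
where I used $\operatorname{Tr}\{G\rho\}\leq P$ and the finiteness of the partition function. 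Taking the supremum over all admissible $\rho$ in the central estimate then establishes \eqref{eq:finite-entropy-comp-ch}.

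The main obstacle is not the algebra but the infinite-dimensional justification of the two entropy facts used in the central estimate, namely that $H(RB)_{\omega}=H(E)_{\omega}$ for the \emph{a priori} possibly infinite-entropy pure state $\omega_{RBE}$, and that subadditivity $H(RB)\leq H(R)+H(B)$ may legitimately be applied. The first is unconditional, since the two marginals across a bipartite cut of a pure state share the same spectrum and hence the same entropy. For the second, I would observe that $I(R;B)_{\omega}=D(\omega_{RB}\Vert\omega_{R}\otimes\omega_{B})\geq0$ holds in full generality; once $H(R)_{\omega}$ and $H(B)_{\omega}$ have been shown finite by the two bounds above, this nonnegativity rearranges to the desired finite upper bound on $H(RB)_{\omega}$ with no indeterminate $\infty-\infty$ form arising.
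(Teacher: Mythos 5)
Your proof is correct and follows essentially the same route as the paper's: the central estimate $H(\mathcal{\hat{N}}(\rho))\leq H(\rho)+H(\mathcal{N}(\rho))$ via purification, equality of marginal entropies across a pure-state bipartition, and subadditivity, with $H(\rho)$ bounded by rearranging $D(\rho\Vert\theta_{\beta})\geq 0$ against the thermal state and $H(\mathcal{N}(\rho))$ bounded by Condition~\ref{cond:finite-out-entropy}. Your extra remarks on the infinite-dimensional justification of subadditivity (via nonnegativity of $I(R;B)_{\omega}$ once $H(R)_{\omega}$ and $H(B)_{\omega}$ are finite) and on isometric invariance across complementary channels are refinements the paper leaves implicit, not a different argument.
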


\begin{proof}
Let $\rho$ be a density operator satisfying $\operatorname{Tr}\{G\rho\}\leq
P$, and let $\sum_{i}p_{i}|i\rangle\langle i|$ be a spectral decomposition of
$\rho$. Let%
\begin{equation}
\theta_{\beta}\equiv e^{-\beta G}/\operatorname{Tr}\{e^{-\beta G}\}
\end{equation}
denote a thermal state of $G$ with inverse temperature $\beta>0$. Consider
that $H(\rho)$ is finite because a rewriting of $D(\rho\Vert\theta_{\beta
})\geq0$ implies that%
\begin{align}
H(\rho)  &  \leq\beta\operatorname{Tr}\{G\rho\}+\log\operatorname{Tr}%
\{e^{-\beta G}\}\\
&  \leq\beta P+\log\operatorname{Tr}\{e^{-\beta G}\}<\infty,
\label{eq:input-finite-entropy}%
\end{align}
where the last inequality follows from \eqref{eq:thermal-well-defined} and
from the assumption that $P<\infty$. Consider that $|\psi^{\rho}\rangle
=\sum_{i}\sqrt{p_{i}}|i\rangle\otimes|i\rangle$ is a purification of $\rho$
and satisfies%
\begin{align}
H(\mathcal{\hat{N}}(\rho))  &  =H((\operatorname{id}\otimes\mathcal{N}%
)(|\psi^{\rho}\rangle\langle\psi^{\rho}|))\\
&  \leq H(\rho)+H(\mathcal{N}(\rho))<\infty.
\end{align}
The equality follows because the marginals of a pure bipartite state have the
same entropy. The first inequality follows from subadditivity of entropy, and
the last from \eqref{eq:input-finite-entropy} and the assumption that
Condition~\ref{cond:finite-out-entropy}\ holds. We have shown that the entropy
$H(\mathcal{\hat{N}}(\rho))$ is finite for all states satisfying
$\operatorname{Tr}\{G\rho\}\leq P$, and so \eqref{eq:finite-entropy-comp-ch} holds.
\end{proof}

\begin{theorem}
\label{thm:coh-info-ach}Let $\mathcal{N}:\mathcal{T}(\mathcal{H}%
_{A})\rightarrow\mathcal{T}(\mathcal{H}_{B})$ denote a quantum channel
satisfying Condition~\ref{cond:finite-out-entropy}, $G$ a Gibbs observable,
and $P\in\lbrack0,\infty)$. Then the energy-constrained entanglement
transmission capacity $E(\mathcal{N},G,P)$\ is bounded from below by the
regularized energy-constrained coherent information of the channel
$\mathcal{N}$:%
\begin{equation}
E(\mathcal{N},G,P)\geq\lim_{k\rightarrow\infty}\frac{1}{k}I_{c}(\mathcal{N}%
^{\otimes k},\overline{G}_{k},P),\nonumber
\end{equation}
where the energy-constrained coherent information of $\mathcal{N}$ is defined
as%
\begin{equation}
I_{c}(\mathcal{N},G,P)\equiv\sup_{\rho:\operatorname{Tr}\{G\rho\}\leq
P}H(\mathcal{N}(\rho))-H(\mathcal{\hat{N}}(\rho)),
\end{equation}
and $\mathcal{\hat{N}}$ denotes a complementary channel of $\mathcal{N}$.
\end{theorem}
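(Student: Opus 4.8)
The plan is to reduce the claimed regularized bound to a single-letter achievability statement and then prove the latter by importing the finite-dimensional entanglement-transmission coding theorem, combined with a truncation of the input state that exploits the Gibbs property and the finite output-entropy condition. First I would show that it suffices to prove the single-letter bound $E(\mathcal{N},G,P) \geq I_c(\mathcal{N},G,P)$. Indeed, a block of $k$ uses of $\mathcal{N}$ may be regarded as a single use of $\mathcal{N}^{\otimes k}$ equipped with the energy observable $\overline{G}_k$; one checks that the $n$th extension of $\overline{G}_k$ coincides with $\overline{G}_{nk}$, so that an entanglement-transmission code for $\mathcal{N}^{\otimes k}$ with average constraint $P$ using $n$ blocks is exactly one for $\mathcal{N}$ with average constraint $P$ using $nk$ channel uses. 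Hence $E(\mathcal{N},G,P) \geq \tfrac{1}{k}E(\mathcal{N}^{\otimes k},\overline{G}_k,P)$, and applying the single-letter bound to $\mathcal{N}^{\otimes k}$ gives $E(\mathcal{N},G,P) \geq \tfrac{1}{k}I_c(\mathcal{N}^{\otimes k},\overline{G}_k,P)$ for every $k$; letting $k \to \infty$ yields the theorem. One should first verify that $\mathcal{N}^{\otimes k}$ with $\overline{G}_k$ again meets the hypotheses, which follows from Lemma~\ref{lem:env-out-ent} and subadditivity of entropy.

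For the single-letter bound, fix any input state $\rho$ with $\operatorname{Tr}\{G\rho\}\leq P$; by the definition of $I_c(\mathcal{N},G,P)$ it is enough to achieve the rate $I_c(\rho,\mathcal{N}) = H(\mathcal{N}(\rho)) - H(\mathcal{\hat{N}}(\rho))$, both terms being finite by Condition~\ref{cond:finite-out-entropy} and Lemma~\ref{lem:env-out-ent}. The key structural input is that, because $G$ is a Gibbs observable (Definition~\ref{def:Gibbs-obs}), its eigenvalues tend to infinity and hence the spectral projection $\Pi_\Lambda$ onto $\{G \leq \Lambda\}$ has finite rank. I would replace $\rho$ by the truncated, renormalized state $\tilde\rho_\Lambda \equiv \Pi_\Lambda \rho \Pi_\Lambda / \operatorname{Tr}\{\Pi_\Lambda\rho\}$, which is supported on a finite-dimensional subspace, has energy $\operatorname{Tr}\{G\tilde\rho_\Lambda\} \leq \operatorname{Tr}\{G\rho\}/\operatorname{Tr}\{\Pi_\Lambda\rho\} \to \operatorname{Tr}\{G\rho\}\le P$ (using $\Pi_\Lambda G\Pi_\Lambda \le G$), and---by continuity of entropy under the energy constraint together with the finiteness of $H(\mathcal{N}(\rho))$ and $H(\mathcal{\hat{N}}(\rho))$---satisfies $I_c(\tilde\rho_\Lambda,\mathcal{N}) \to I_c(\rho,\mathcal{N})$ as $\Lambda \to \infty$. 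Absorbing the vanishing energy overshoot into an arbitrarily small slack (by first restricting to states with $\operatorname{Tr}\{G\rho\} < P$ and invoking continuity) reduces the problem to a channel with effectively finite-dimensional input but possibly infinite-dimensional output.

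On this finite-dimensional input subspace I would invoke the standard Lloyd--Shor--Devetak / decoupling coding theorem: a Haar-random subspace of the entropy-typical subspace of $\tilde\rho_\Lambda^{\otimes n}$, of dimension $2^{n[I_c(\tilde\rho_\Lambda,\mathcal{N})-\delta]}$, decouples the channel environment $E^n$ from the reference, and thus---via Uhlmann's theorem---yields encoding and decoding maps with entanglement fidelity tending to one. Decoupling is a trace-norm statement that remains meaningful for the infinite-dimensional environment, and the entropy $H(\mathcal{\hat{N}}(\tilde\rho_\Lambda))$ controlling the decoupling rate is finite by Lemma~\ref{lem:env-out-ent}. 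To meet the average energy constraint, I would additionally require the code subspace to lie inside the energy-typical subspace $\{\overline{G}_n \leq \operatorname{Tr}\{G\tilde\rho_\Lambda\}+\delta\}$; since $\operatorname{Tr}\{\overline{G}_n \tilde\rho_\Lambda^{\otimes n}\} = \operatorname{Tr}\{G\tilde\rho_\Lambda\}$, a Markov-type concentration argument shows this subspace carries almost all the weight of $\tilde\rho_\Lambda^{\otimes n}$, so intersecting with it costs only $o(n)$ rate while guaranteeing $\operatorname{Tr}\{\overline{G}_n\,\mathcal{E}^n(\pi_S)\} \leq \operatorname{Tr}\{G\tilde\rho_\Lambda\}+\delta \leq P$.

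The main obstacle is making the decoupling argument and the energy-typical projection coexist in infinite dimensions. In the finite-dimensional theory the entropy-typical and energy-typical projections are easily reconciled, but here $\overline{G}_n$ need not commute with $\tilde\rho_\Lambda^{\otimes n}$, so the concentration of energy and the gentle-measurement estimates used to pass between the full typical subspace and its energy-typical restriction must be carried out carefully, using that $\tilde\rho_\Lambda$ has finite rank and that $\overline{G}_n$ is an average of i.i.d.\ single-copy observables. Everything else---continuity, finiteness of the entropic quantities, and the blocking reduction---is controlled precisely by the Gibbs condition and Condition~\ref{cond:finite-out-entropy}.
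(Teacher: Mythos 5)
Your overall skeleton---truncating the input to a finite-rank state (valid: the Gibbs condition does force the eigenvalues of $G$ to have finite multiplicities and tend to infinity, so $\Pi_\Lambda$ has finite rank), proving a single-letter achievability statement, and regularizing by treating $\mathcal{N}^{\otimes k}$ with observable $\overline{G}_k$ as one channel---parallels the paper, which truncates $\rho$ to a finite-rank $\rho_d$ and blocks in exactly the same way at the end. Your coding engine differs (one-shot decoupling/LSD instead of Klesse's random coding with entropy-typical Kraus operators), and that substitution could in principle be made to work. The genuine gap is in the step you yourself flag as the ``main obstacle'': enforcing the energy constraint. As proposed, it fails. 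You want the code subspace to lie inside the spectral subspace $\{\overline{G}_n\leq\operatorname{Tr}\{G\tilde{\rho}_\Lambda\}+\delta\}$ on the strength of ``a Markov-type concentration argument.'' Markov's inequality gives only a constant (non-vanishing) bound on the weight of $\tilde{\rho}_\Lambda^{\otimes n}$ outside that subspace; worse, to transfer any such weight bound to states living in the entropy-typical subspace (where your random code lives) you pay a factor of order $2^{n\delta}$, because the eigenvalues of the maximally mixed typical-subspace state exceed those of $\tilde{\rho}_\Lambda^{\otimes n}$ by that factor. So you need exponential concentration of $\overline{G}_n$ (available via a Chernoff bound, since the single-copy operators $G\otimes I\otimes\cdots\otimes I$ commute with one another and are bounded by $\Lambda$ on the truncated support---but this must be proved), and then you must perturb the code vectors into the energy subspace and re-derive the decoupling/fidelity estimates for the perturbed code via gentle-measurement arguments; subspace-intersection or dimension-counting arguments give nothing here, since the code dimension plus the energy-subspace dimension is far below the ambient dimension. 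None of this is carried out in your sketch, and it is precisely the content that distinguishes this theorem from the unconstrained coding theorem.

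The paper circumvents this difficulty entirely, and the mechanism is worth internalizing. First, only the \emph{average} input $\mathcal{E}^n(\pi_S)$ needs bounded energy in the entanglement-transmission definition, so no statement about individual codewords or code subspaces is required. Second, the ambient subspace $V$ is chosen to be the \emph{strongly} (frequency-) typical subspace of $\rho_d^{\otimes m}$, so the typical average lemma bounds $\operatorname{Tr}\{\overline{G}_{mn}[\pi_d^{m,\delta}]^{\otimes n}\}$ by $\operatorname{Tr}\{G\rho_d\}+\delta\max_j\langle j|G|j\rangle$; note that entropy-typicality, which is all your decoupling argument uses, does \emph{not} control the empirical energy of typical basis vectors, whereas frequency-typicality does. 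Third, for Klesse's random-unitary codes one has $\mathbb{E}_{U}\{U\pi_{C_n}U^{\dag}\}=\pi_{V^{\otimes n}}$, so the \emph{expected} code input energy is exactly the controlled quantity, and a union bound plus Markov's inequality over the random unitary yields a single code that simultaneously has high entanglement fidelity and satisfies the energy constraint. Two further points: (i) your decoupling step has the same infinite-dimensional issue you defer---one-shot decoupling bounds involve the environment dimension or a $2$-norm of the environment state, which must first be made finite by projecting onto the entropy-typical subspace of the environment output, which is essentially what Klesse's typical-Kraus-operator construction does, so the two engines end up doing the same work; (ii) your reduction ``restrict to $\operatorname{Tr}\{G\rho\}<P$ and invoke continuity'' silently fails when $P$ equals the minimum eigenvalue of $G$ (no such states exist); the paper treats that corner case separately, observing that the constraint then pins $\rho$ inside a fixed finite-dimensional spectral subspace.
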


\begin{proof}
The main challenge in proving this theorem is to have codes achieving the
coherent information while meeting the average energy constraint. We prove the
theorem by combining Klesse's technique for constructing entanglement
transmission codes \cite{K07,qcap2008second}\ with an adaptation of Holevo's
technique of approximation and constructing codes meeting an energy constraint
\cite{H03,H04}. We follow their arguments very closely and show how to combine
the techniques to achieve the desired result.

First, we recall what Klesse accomplished in \cite{K07} (see also the
companion paper \cite{qcap2008second}). Let $\mathcal{M}:\mathcal{T}%
(\mathcal{H}_{A})\rightarrow\mathcal{T}(\mathcal{H}_{B})$ denote a quantum
channel satisfying Condition~\ref{cond:finite-out-entropy} for some Gibbs
observable and energy constraint, so that the receiver entropy is finite, as
well as the environment entropy by Lemma~\ref{lem:env-out-ent}. This implies
that entropy-typical subspaces and sequences corresponding to these entropies
are well defined and finite, a fact of which we make use. Let $V$ denote a
finite-dimensional linear subspace of $\mathcal{H}_{A}$. Set $L\equiv\dim(V)$,
and let $\mathcal{L}$ denote a channel defined to be the restriction of
$\mathcal{M}$ to states with support contained in $V$. Let $\{K_{y}\}_{y}$ be
a set of Kraus operators for $\mathcal{M}$ and define the probability
$p_{Y}(y)$ by%
\begin{equation}
p_{Y}(y)\equiv\frac{1}{L}\operatorname{Tr}\{\Pi_{V}K_{y}^{\dag}K_{y}\Pi_{V}\},
\end{equation}
where $\Pi_{V}$ is a projection onto $V$. As discussed in \cite{K07}, there is
unitary freedom in the choice of the Kraus operators, and they can be chosen
\textquotedblleft diagonal,\textquotedblright\ so that $\operatorname{Tr}%
\{\Pi_{V}K_{y}^{\dag}K_{x}\Pi_{V}\}=0$ for $x\neq y$. Let $T_{Y}^{n,\delta}$
denote the $\delta$-entropy-typical set for $p_{Y}$, defined as%
\begin{equation}
T_{Y}^{n,\delta}\equiv\left\{  y^{n}:\left\vert -\left[  \log p_{Y^{n}}%
(y^{n})\right]  /n-H(Y)\right\vert \leq\delta\right\}  ,
\end{equation}
for integer $n\geq1$ and real $\delta>0$, where $p_{Y^{n}}(y^{n})\equiv
p_{Y}(y_{1})p_{Y}(y_{2})\cdots p_{Y}(y_{n})$. Let $K_{y^{n}}\equiv K_{y_{1}%
}\otimes K_{y_{2}}\otimes\cdots\otimes K_{y_{n}}$. Now define the
(trace-non-increasing) quantum operation $\mathcal{L}^{n,\delta}$ to be a map
consisting of only the entropy-typical Kraus operators $K_{y^{n}}$ such that
$y^{n}\in T_{Y}^{n,\delta}$. The number of such Kraus operators is no larger
than $2^{n\left[  H(Y)+\delta\right]  }$, and one can show that
$H(Y)=H(\mathcal{\hat{M}}(\pi_{V}))$, where $\mathcal{\hat{M}}$ is a channel
complementary to $\mathcal{M}$ and $\pi_{V}\equiv\Pi_{V}/L$ denotes the
maximally mixed state on $V$ \cite{K07}.

One can then further reduce the quantum operation $\mathcal{L}^{n,\delta}$ to
another one $\widetilde{\mathcal{L}}^{n,\delta}$ defined by projecting the
output of $\mathcal{L}^{n,\delta}$ to the entropy-typical subspace of the
density operator $\mathcal{L}(\pi_{V})=\mathcal{M}(\pi_{V})$. The
entropy-typical subspace of a density operator $\sigma$ with spectral
decomposition $\sigma=\sum_{z}p_{Z}(z)|z\rangle\langle z|$ is defined as%
\begin{equation}
T_{\sigma}^{n,\delta}\equiv\operatorname{span}\{|z^{n}\rangle:\left\vert
-\left[  \log p_{Z^{n}}(z^{n})\right]  /n-H(\sigma)\right\vert \leq\delta\},
\end{equation}
for integer $n\geq1$ and real $\delta>0$. The resulting quantum operation
$\widetilde{\mathcal{L}}^{n,\delta}$ is thus finite-dimensional and has a
finite number of Kraus operators. We then have the following bounds argued in
\cite{K07}:%
\begin{align}
\widetilde{L}^{n,\delta}  &  \leq2^{n\left[  H(\mathcal{\hat{M}}(\pi
_{V}))+\delta\right]  },\label{eq:klesse-1}\\
\operatorname{Tr}\{\widetilde{\mathcal{L}}^{n,\delta}(\pi_{V^{\otimes n}})\}
&  \geq1-\varepsilon_{1},\\
\left\Vert \widetilde{\mathcal{L}}^{n,\delta}(\pi_{V^{\otimes n}})\right\Vert
_{2}^{2}  &  \leq2^{-n\left[  H(\mathcal{M}(\pi_{V}))-3\delta\right]  },\\
F_{e}(C_{n},\mathcal{L}^{\otimes n})  &  \geq F_{e}(C_{n},\widetilde
{\mathcal{L}}^{n,\delta}), \label{eq:klesse-4}%
\end{align}
where $\widetilde{L}^{n,\delta}$ denotes the number of Kraus operators for
$\widetilde{\mathcal{L}}^{n,\delta}$ and the second inequality inequality
holds for all $\varepsilon_{1}\in(0,1)$ and sufficiently large $n$. Note that
for this latter estimate, we require the law of large numbers to hold when we
only know that the entropy is finite (this can be accomplished using the
technique discussed in \cite{T12}). In the last line, we have written the
entanglement fidelity of a code $C_{n}$ (some subspace of $V^{\otimes n}$),
which is defined as%
\begin{equation}
F_{e}(C_{n},\mathcal{L}^{\otimes n})\equiv\sup_{\mathcal{R}^{n}}\langle
\Phi_{C_{n}}|(\operatorname{id}\otimes\lbrack\mathcal{R}^{n}\circ
\mathcal{L}^{\otimes n}])(\Phi_{C_{n}})|\Phi_{C_{n}}\rangle,
\end{equation}
where $|\Phi_{C_{n}}\rangle$ denotes a maximally entangled state built from an
orthonormal basis of $C_{n}$ and the optimization is with respect to recovery
channels $\mathcal{R}^{n}$. Let $K_{n}\equiv\dim C_{n}$. From the developments
in \cite{K07}, the following bound holds%
\begin{multline}
\mathbb{E}_{U_{K_{n}}(V^{\otimes n})}\{F_{e}(U_{K_{n}}C_{n},\widetilde
{\mathcal{L}}^{n,\delta})\}\\
\geq\operatorname{Tr}\{\widetilde{\mathcal{L}}^{n,\delta}(\pi_{V^{\otimes n}%
})\}-\sqrt{K\widetilde{L}^{n,\delta}\left\Vert \widetilde{\mathcal{L}%
}^{n,\delta}(\pi_{V^{\otimes n}})\right\Vert _{2}^{2}},
\end{multline}
where $\mathbb{E}_{U_{K_{n}}(V^{\otimes n})}$ denotes the expected
entanglement fidelity when we apply a randomly selected unitary $U_{K_{n}}$ to
the codespace $C_{n}$, taking it to some different subspace of $V^{\otimes n}%
$. The unitary $U_{K}$ is selected according to the unitarily invariant
measure on the group $\mathbf{U}(V^{\otimes n})$ of unitaries acting on the
subspace $V^{\otimes n}$. Combining with the inequalities in
\eqref{eq:klesse-1}--\eqref{eq:klesse-4}, we find that%
\begin{multline}
\mathbb{E}_{U_{K_{n}}(V^{\otimes n})}\{F_{e}(U_{K_{n}}C_{n},\mathcal{L}%
^{\otimes n})\}\\
\geq1-\varepsilon_{1}-\left[  2^{-n\left[  H(\mathcal{M}(\pi_{V}%
))-\mathcal{\hat{M}}(\pi_{V}))-R-4\delta\right]  }\right]  ^{\frac{1}{2}},
\end{multline}
where the rate $R$ of entanglement transmission is defined as $R\equiv\left[
\log K_{n}\right]  /n$. Thus, if we choose%
\begin{equation}
R=H(\mathcal{M}(\pi_{V}))-\mathcal{\hat{M}}(\pi_{V}))-5\delta,
\end{equation}
then we find that%
\begin{equation}
\mathbb{E}_{U_{K_{n}}(V^{\otimes n})}\{F_{e}(U_{K_{n}}C_{n},\widetilde
{\mathcal{L}}^{n,\delta})\}\geq1-\varepsilon_{1}-2^{-n\delta/2},
\label{eq:good-q-klesse-code}%
\end{equation}
and we see that the RHS\ can be made arbitrarily close to one by taking $n$
large enough. We can then conclude that there exists a unitary $U_{K_{n}}$,
such that the codespace defined by $U_{K_{n}}C_{n}$ achieves the same
entanglement fidelity given above, implying that the rate $H(\mathcal{M}%
(\pi_{V}))-\mathcal{\hat{M}}(\pi_{V}))$ is achievable for entanglement
transmission over $\mathcal{M}$.

Now we apply the methods of Holevo \cite{H04}\ and further arguments of Klesse
\cite{K07}\ to see how to achieve the rate given in the statement of the
theorem for the channel$~\mathcal{N}$ while meeting the desired energy
constraint. We follow the reasoning in \cite{H04} very closely. Consider that
$G$ is a non-constant operator. Thus, the image of the convex set of all
density operators under the map $\rho\rightarrow\operatorname{Tr}\{G\rho\}$ is
an interval. Suppose first that $P$ is not equal to the minimum eigenvalue of
$G$. Then there exists a real number $P^{\prime}$ and a density operator
$\rho$ in $\mathcal{D}(\mathcal{H}_{A})$ such that%
\begin{equation}
\operatorname{Tr}\{G\rho\}\leq P^{\prime}<P.
\end{equation}
Let $\rho=\sum_{j=1}^{\infty}\lambda_{j}|j\rangle\langle j|$ be a spectral
decomposition of $\rho$, and define%
\begin{align}
\rho_{d}  &  \equiv\sum_{j=1}^{d}\tilde{\lambda}_{j}|j\rangle\langle
j|,\ \ \text{where}\\
\tilde{\lambda}_{j}  &  \equiv\lambda_{j}\left(  \sum_{j=1}^{d}\lambda
_{j}\right)  ^{-1}.
\end{align}
Then $\left\Vert \rho-\rho_{d}\right\Vert _{1}\rightarrow0$ as $d\rightarrow
\infty$. Let $g(j)\equiv\langle j|G|j\rangle$, so that%
\begin{equation}
\operatorname{Tr}\{G\rho_{d}\}=\sum_{j=1}^{d}\tilde{\lambda}_{j}%
g(j)=P^{\prime}+\varepsilon_{d},
\end{equation}
where $\varepsilon_{d}\rightarrow0$ as $d\rightarrow\infty$. Consider the
density operator $\rho_{d}^{\otimes m}$, and let $\Pi_{d}^{m,\delta}$ denote
its strongly typical projector, defined as the projection onto the strongly
typical subspace%
\begin{equation}
\operatorname{span}\{|j^{m}\rangle:\left\vert N(j|j^{m})/m-\tilde{\lambda}%
_{j}\right\vert \leq\delta\},
\end{equation}
where $|j^{m}\rangle\equiv|j_{1}\rangle\otimes\cdots\otimes|j_{m}\rangle$ and
$N(j|j^{m})$ denotes the number of appearances of the symbol $j$ in the
sequence $j^{m}$. Let%
\begin{equation}
\pi_{d}^{m,\delta}\equiv\Pi_{d}^{m,\delta}/\operatorname{Tr}\{\Pi
_{d}^{m,\delta}\}
\end{equation}
denote the maximally mixed state on the strongly typical subspace. We then
find that for positive integers$~m$ and$~n$,%
\begin{align}
&  \operatorname{Tr}\left\{  \overline{G}_{mn}\left(  \left[  \pi
_{d}^{m,\delta}\right]  ^{\otimes n}-\rho_{d}^{\otimes mn}\right)  \right\}
\nonumber\\
&  =\operatorname{Tr}\left\{  \overline{\left(  \overline{G}_{m}\right)  }%
_{n}\left(  \left[  \pi_{d}^{m,\delta}\right]  ^{\otimes n}-\rho_{d}^{\otimes
mn}\right)  \right\} \\
&  =\operatorname{Tr}\left\{  \overline{G}_{m}\left(  \pi_{d}^{m,\delta}%
-\rho_{d}^{\otimes m}\right)  \right\}  \leq\delta\max_{j\in\left[  d\right]
}g(j),
\end{align}
where $\left[  d\right]  \equiv\{1,\ldots,d\}$ and the inequality follows from
applying a bound from \cite{Hol01a}\ (also called \textquotedblleft typical
average lemma\textquotedblright\ in \cite{el2010lecture}). Now we can apply
the above inequality to find that%
\begin{align}
&  \operatorname{Tr}\left\{  \overline{G}_{mn}\left[  \pi_{d}^{m,\delta
}\right]  ^{\otimes n}\right\} \nonumber\\
&  \leq\operatorname{Tr}\{\overline{G}_{m}\rho_{d}^{\otimes m}\}+\delta
\max_{j\in\left[  d\right]  }g(j)\\
&  =\operatorname{Tr}\{G\rho_{d}\}+\delta\max_{j\in\left[  d\right]  }g(j)\\
&  =P^{\prime}+\varepsilon_{d}+\delta\max_{j\in\left[  d\right]  }g(j).
\end{align}
For all $d$ large enough, we can then find $\delta_{0}$ such that the last
line\ above is $\leq P/(1+\delta_{1})$ for $\delta,\delta_{1}\in(0,\delta
_{0}]$.

The quantum coding scheme we use is that of Klesse \cite{K07} discussed
previously, now setting $\mathcal{M}=\mathcal{N}^{\otimes m}$ and the subspace
$V$ to be the frequency-typical subspace of $\rho_{d}^{\otimes m}$, so that
$\Pi_{V}=\Pi_{d}^{m,\delta}$. Letting $\pi_{C_{n}}$ denote the maximally mixed
projector onto the codespace $C_{n}\subset V^{\otimes n}$, we find that
\cite[Section~5.3]{K07}%
\begin{equation}
\mathbb{E}_{U_{K_{n}}(V^{\otimes n})}\{U_{K_{n}}\pi_{C_{n}}U_{K_{n}}^{\dag
}\}=\pi_{V^{\otimes n}}=\left[  \pi_{d}^{m,\delta}\right]  ^{\otimes n}.
\end{equation}
So this and the reasoning directly above imply that%
\begin{equation}
\mathbb{E}_{U_{K_{n}}(V^{\otimes n})}\{\operatorname{Tr}\{\overline{G_{mn}%
}U_{K_{n}}\pi_{C_{n}}U_{K_{n}}^{\dag}\}\}\leq P/(1+\delta_{1}),
\end{equation}
for $\delta,\delta_{1}\leq\delta_{0}$. Furthermore, from
\eqref{eq:good-q-klesse-code}, for arbitrary $\varepsilon\in(0,1)$ and
sufficiently large $n$, we find that%
\begin{equation}
\mathbb{E}_{U_{K_{n}}(V^{\otimes n})}\{1-F_{e}(U_{K_{n}}C_{n},\mathcal{N}%
^{\otimes mn})\}\leq\varepsilon,
\end{equation}
as long as the rate%
\begin{equation}
R=[H(\mathcal{N}^{\otimes m}(\pi_{d}^{m,\delta}))-H(\mathcal{\hat{N}}^{\otimes
m}(\pi_{d}^{m,\delta}))]/m-\delta^{\prime}%
\end{equation}
for $\delta^{\prime}>0$. At this point, we would like to argue the existence
of a code that has arbitrarily small error and meets the energy constraint.
Let $E_{0}$ denote the event $1-F_{e}(U_{K_{n}}C_{n},\mathcal{N}^{\otimes
mn})\leq\sqrt{\varepsilon}$ and let $E_{1}$ denote the event
$\operatorname{Tr}\{\overline{G_{mn}}U_{K_{n}}\pi_{C_{n}}U_{K_{n}}^{\dag
}\}\leq P$. We can apply the union bound and Markov's inequality to find that%
\begin{align}
&  \Pr_{U_{K_{n}}(V^{\otimes n})}\{\overline{E_{0}\cap E_{1}}\}\nonumber\\
&  =\Pr_{U_{K_{n}}(V^{\otimes n})}\{E_{0}^{c}\cup E_{1}^{c}\}\\
&  \leq\Pr_{U_{K_{n}}(V^{\otimes n})}\{1-F_{e}(U_{K_{n}}C_{n},\mathcal{N}%
^{\otimes mn})\geq\sqrt{\varepsilon}\}\nonumber\\
&  \qquad+\Pr_{U_{K_{n}}(V^{\otimes n})}\left\{  \operatorname{Tr}%
\{\overline{G_{mn}}U_{K_{n}}\pi_{C_{n}}U_{K_{n}}^{\dag}\}\geq P\right\} \\
&  \leq\frac{1}{\sqrt{\varepsilon}}\mathbb{E}_{U_{K_{n}}(V^{\otimes n}%
)}\{1-F_{e}(U_{K_{n}}C_{n},\mathcal{N}^{\otimes mn})\}\nonumber\\
&  \qquad+\frac{1}{P}\mathbb{E}_{U_{K_{n}}(V^{\otimes n})}\{\operatorname{Tr}%
\{\overline{G_{mn}}U_{K_{n}}\pi_{C_{n}}U_{K_{n}}^{\dag}\}\}\\
&  \leq\sqrt{\varepsilon}+1/(1+\delta_{1}).
\end{align}
Since we can choose $n$ large enough to have $\varepsilon$ arbitrarily small,
there exists such an $n$ such that the last line is strictly less than one.
This then implies the existence of a code $C_{n}$ such that $F_{e}%
(C_{n},\mathcal{N}^{\otimes mn})\geq1-\sqrt{\varepsilon}$ and
$\operatorname{Tr}\{\overline{G_{mn}}\pi_{C_{n}}\}\leq P$ (i.e., it has
arbitrarily good entanglement fidelity and meets the average energy
constraint). Furthermore, the rate achievable using this code is equal to
$[H(\mathcal{N}^{\otimes m}(\pi_{d}^{m,\delta}))-H(\mathcal{\hat{N}}^{\otimes
m}(\pi_{d}^{m,\delta}))]/m$. We have shown that this rate is achievable for
all $\delta>0$ and all integer $m\geq1$. By applying the limiting argument
from \cite{Hol01a} (see also \cite{ieee2002bennett}), we thus have that the
following is an achievable rate as well:%
\begin{multline}
\lim_{\delta\rightarrow0}\lim_{m\rightarrow\infty}\frac{1}{m}[H(\mathcal{N}%
^{\otimes m}(\pi_{d}^{m,\delta}))-H(\mathcal{\hat{N}}^{\otimes m}(\pi
_{d}^{m,\delta}))]\\
=H(\mathcal{N}(\rho_{d}))-H(\mathcal{\hat{N}}(\rho_{d})),
\end{multline}
where $\operatorname{Tr}\{G\rho_{d}\}\leq P^{\prime}+\varepsilon_{d}\leq P$.
Given that both $H(\mathcal{N}(\rho_{d}))$ and $H(\mathcal{\hat{N}}(\rho
_{d}))$ are finite, we can apply
\eqref{eq:coh-info-def}--\eqref{eq:coh-info-ch-def} and rewrite
\begin{equation}
H(\mathcal{N}(\rho_{d}))-H(\mathcal{\hat{N}}(\rho_{d}))=I_{c}(\rho
_{d},\mathcal{N}).
\end{equation}
Finally, we take the limit $d\rightarrow\infty$ and find that%
\begin{equation}
\liminf_{d\rightarrow\infty}I_{c}(\rho_{d},\mathcal{N})\geq I_{c}%
(\rho,\mathcal{N}),
\end{equation}
where we have used the representation%
\begin{equation}
I_{c}(\rho_{d},\mathcal{N})=I(\rho_{d},\mathcal{N})-H(\rho_{d}),
\end{equation}
applied that the mutual information is lower semicontinuous
\cite[Proposition~1]{HS10}, the entropy $H$ is continuous for all states
$\sigma$ such that $\operatorname{Tr}\{G\sigma\}<P$ (following from a
variation of \cite[Lemma~11.8]{H12}), and the fact that a purification
$|\psi_{d}^{\rho}\rangle\equiv\sum_{j=1}^{d}\tilde{\lambda}_{j}^{1/2}%
|j\rangle\otimes|j\rangle$ has the convergence $\left\Vert |\psi_{d}^{\rho
}\rangle\langle\psi_{d}^{\rho}|-|\psi^{\rho}\rangle\langle\psi^{\rho
}|\right\Vert _{1}\rightarrow0$ as $d\rightarrow\infty$. Now since
$H(\mathcal{N}(\rho))$ and $H(\mathcal{\hat{N}}(\rho))$ are each finite, we
can rewrite%
\begin{equation}
I_{c}(\rho,\mathcal{N})=H(\mathcal{N}(\rho))-H(\mathcal{\hat{N}}(\rho)).
\end{equation}
We have thus proven that the rate $H(\mathcal{N}(\rho))-H(\mathcal{\hat{N}%
}(\rho))$ is achievable for entanglement transmission with average energy
constraint for all $\rho$ satisfying $\operatorname{Tr}\{G\rho\}<P$.

We can extend this argument to operators $\rho$ such that $\operatorname{Tr}%
\{G\rho\}=P$ by approximating them with operators $\rho_{\xi}=(1-\xi)\rho
+\xi|e\rangle\langle e|$, where $|e\rangle$ is chosen such that $\langle
e|G|e\rangle<P$. Suppose now that $P$ is the minimum eigenvalue of $G$. In
this case, the condition $\operatorname{Tr}\{G\rho\}\leq P$ reduces to the
support of $\rho$ being contained in the spectral projection of $G$
corresponding to this minimum eigenvalue. The condition in
Definition~\ref{def:Gibbs-obs}\ implies that the eigenvalues of $G$ have
finite multiplicity, and so the support of $\rho$ is a fixed
finite-dimensional subspace. Thus we can take $\rho_{d}=\rho$, and we can
repeat the above argument with the equality $\operatorname{Tr}\{G\rho\}=P$
holding at each step.

As a consequence, we can conclude that%
\begin{equation}
\sup_{\operatorname{Tr}\{G\rho\}\leq P}H(\mathcal{N}(\rho))-H(\mathcal{\hat
{N}}(\rho))
\end{equation}
is achievable as well. Finally, we can repeat the whole argument for all
$\rho^{(k)}\in\mathcal{D}(\mathcal{H}_{A}^{\otimes k})$ satisfying
$\operatorname{Tr}\{\overline{G}_{k}\rho^{(k)}\}\leq P$, take the channel as
$\mathcal{N}^{\otimes k}$, and conclude that the following rate is achievable:%
\begin{equation}
\frac{1}{k}\sup_{\operatorname{Tr}\{\overline{G}_{k}\rho^{(k)}\}\leq
P}H(\mathcal{N}^{\otimes k}(\rho^{(k)}))-H(\mathcal{\hat{N}}^{\otimes k}%
(\rho^{(k)})).
\end{equation}
Taking the limit as $k\rightarrow\infty$ gives the statement of the theorem.
\end{proof}

\section{Energy-constrained quantum and private capacity of degradable
channels}

\label{sec:degradable-channels}It is unknown how to compute the quantum and
private capacities of general channels, but if they are degradable, the task
simplifies considerably. That is, it is known from \cite{cmp2005dev}\ and
\cite{S08}, respectively, that both the unconstrained quantum and private
capacities of a degradable channel $\mathcal{N}$ are given by the following
formula:%
\begin{equation}
Q(\mathcal{N})=P(\mathcal{N})=\sup_{\rho}I_{c}(\rho,\mathcal{N}).
\end{equation}

Here we prove the following theorem, which holds for the energy-constrained
quantum and private capacities of a degradable channel $\mathcal{N}$:

\begin{theorem}
\label{thm:-energy-constr-q-p-cap}Let $G$ be a Gibbs observable and
$P\in\lbrack0,\infty)$. Let a quantum channel $\mathcal{N}$ be degradable and
satisfy Condition~\ref{cond:finite-out-entropy}. Then the energy-constrained
capacities $Q(\mathcal{N},G,P)$, $E(\mathcal{N},G,P)$, $P(\mathcal{N},G,P)$,
and $K(\mathcal{N},G,P)$\ are finite, equal, and given by the following
formula:%
\begin{equation}
\sup_{\rho:\operatorname{Tr}\{G\rho\}\leq P}H(\mathcal{N}(\rho
))-H(\mathcal{\hat{N}}(\rho)), \label{eq:coh-info-1-letter}%
\end{equation}
where $\mathcal{\hat{N}}$ denotes a complementary channel of $\mathcal{N}$.
\end{theorem}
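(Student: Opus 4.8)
The plan is to sandwich all four capacities between the single-letter quantity in \eqref{eq:coh-info-1-letter}, which I abbreviate $I_c(\mathcal{N},G,P)$, and then let Theorem~\ref{thm:cap-relations} propagate the resulting equalities. Finiteness is immediate: since $H(\mathcal{N}(\rho))-H(\mathcal{\hat{N}}(\rho))\leq H(\mathcal{N}(\rho))$ and Condition~\ref{cond:finite-out-entropy} bounds the latter uniformly over the energy-constrained set, we have $I_c(\mathcal{N},G,P)<\infty$, with Lemma~\ref{lem:env-out-ent} ensuring the subtracted term is also finite so the difference is well defined. The achievability (lower) bound then costs almost nothing: Theorem~\ref{thm:coh-info-ach} gives $E(\mathcal{N},G,P)\geq\lim_k \tfrac1k I_c(\mathcal{N}^{\otimes k},\overline{G}_k,P)$, and evaluating the regularized expression on product inputs $\rho^{\otimes k}$ (which meet the constraint whenever $\operatorname{Tr}\{G\rho\}\leq P$ and on which the coherent information is additive) shows the regularization is at least $I_c(\mathcal{N},G,P)$. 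Hence $Q(\mathcal{N},G,P)=E(\mathcal{N},G,P)\geq I_c(\mathcal{N},G,P)$, and by Theorem~\ref{thm:cap-relations} also $P(\mathcal{N},G,P)=K(\mathcal{N},G,P)\geq I_c(\mathcal{N},G,P)$.

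The crux is the matching converse, for which it suffices to show $K(\mathcal{N},G,P)\leq I_c(\mathcal{N},G,P)$, since $K$ dominates the other three. First I would prove the single-letterization $\tfrac1n I_c(\mathcal{N}^{\otimes n},\overline{G}_n,P)\leq I_c(\mathcal{N},G,P)$. Degradability of $\mathcal{N}$ makes $\mathcal{N}^{\otimes n}$ degradable with degrading map $\mathcal{D}^{\otimes n}$; writing $I_c(\sigma,\mathcal{N}^{\otimes n})=H(B^n)-H(E^n)$ and comparing with the sum of single-system coherent informations, the gap $\sum_i I_c(\sigma_i,\mathcal{N})-I_c(\sigma,\mathcal{N}^{\otimes n})$ equals the total correlation among the $B_i$ minus that among the $E_i$, which is non-negative by monotonicity of relative entropy \eqref{eq:mono-rel-ent} under $\mathcal{D}^{\otimes n}$. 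This gives subadditivity $I_c(\sigma,\mathcal{N}^{\otimes n})\leq\sum_i I_c(\sigma_i,\mathcal{N})$ in terms of the marginals $\sigma_i$. Combining it with concavity of $\rho\mapsto I_c(\rho,\mathcal{N})$ from Proposition~\ref{prop:concave-degrad} — which upgrades to concavity of $P\mapsto I_c(\mathcal{N},G,P)$ — and a Jensen step over the marginal energies $P_i=\operatorname{Tr}\{G\sigma_i\}$ (whose average is at most $P$ by the i.i.d.\ structure of $\overline{G}_n$) yields the claim.

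Next I would run the private-coding converse adapted to the energy constraint. Starting from an $(n,M,G,P,\varepsilon)$ secret-key code, form the classical--quantum state $\rho_{XB^nE^n}=\tfrac1M\sum_m|m\rangle\langle m|_X\otimes(\mathcal{U}^{\mathcal{N}})^{\otimes n}(\rho_{A^n}^m)$. Reliability plus Fano's inequality \eqref{eq:fano} lower-bound $I(X;B^n)$ by $\log M$ up to small terms, while the security condition, via the direct-sum property \eqref{eq:direct-sum-TD} and the classical--quantum AFW inequality \eqref{eq:AFW-cq} (applicable because $\dim\mathcal{H}_X=M<\infty$, despite $E^n$ being infinite-dimensional), upper-bounds $I(X;E^n)$ by a small term, giving $\log M\,(1-2\varepsilon)-h_2(\varepsilon)-g(\varepsilon)\leq I(X;B^n)-I(X;E^n)$. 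The exact identity $I(X;B^n)-I(X;E^n)=I_c(\overline{\rho}_{A^n},\mathcal{N}^{\otimes n})-\tfrac1M\sum_m I_c(\rho_{A^n}^m,\mathcal{N}^{\otimes n})$, with $\overline{\rho}_{A^n}=\tfrac1M\sum_m\rho_{A^n}^m$, then applies; the subtracted average is non-negative because a degradable channel has non-negative coherent information on every input (its antidegradable complement having non-positive coherent information). Since the average energy constraint gives $\operatorname{Tr}\{\overline{G}_n\overline{\rho}_{A^n}\}\leq P$, I obtain $I(X;B^n)-I(X;E^n)\leq I_c(\mathcal{N}^{\otimes n},\overline{G}_n,P)\leq n\,I_c(\mathcal{N},G,P)$ by the single-letterization above. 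Letting $n\to\infty$ and $\varepsilon\to0$ gives $K(\mathcal{N},G,P)\leq I_c(\mathcal{N},G,P)$, and all four capacities collapse to \eqref{eq:coh-info-1-letter}.

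The main obstacle is the interplay in the converse between the infinite-dimensional environment and the energy constraint. Finiteness of $H(\mathcal{N}(\rho))$ and $H(\mathcal{\hat{N}}(\rho))$, supplied by Condition~\ref{cond:finite-out-entropy} and Lemma~\ref{lem:env-out-ent}, is what makes every entropy difference and mutual information above well defined; without it neither the identity for $I(X;B^n)-I(X;E^n)$ nor the AFW step would be meaningful. The single most delicate ingredient is controlling the leakage to an infinite-dimensional Eve: this is exactly what the classical--quantum form \eqref{eq:AFW-cq} of the AFW inequality buys, and it is why the argument needs $\dim\mathcal{H}_X<\infty$ rather than any dimension bound on $E$. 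A minor edge case to dispatch separately is when $P$ equals the minimum eigenvalue of $G$, where the constrained input set reduces to a fixed finite-dimensional subspace and the continuity and approximation steps simplify.
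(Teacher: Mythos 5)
Your proposal is correct and follows the paper's overall sandwich strategy (finiteness from Condition~\ref{cond:finite-out-entropy} and Lemma~\ref{lem:env-out-ent}; achievability from Theorem~\ref{thm:coh-info-ach} evaluated on product inputs; equalities propagated by Theorem~\ref{thm:cap-relations}; converse via Fano, the Holevo bound, and the classical--quantum AFW inequality), but the middle of your converse is organized genuinely differently. Where you invoke the exact identity
\begin{equation}
I(X;B^{n})-I(X;E^{n})=I_{c}(\overline{\rho}_{A^{n}},\mathcal{N}^{\otimes n})-\frac{1}{M}\sum_{m}I_{c}(\rho_{A^{n}}^{m},\mathcal{N}^{\otimes n})
\end{equation}
and discard the averaged term by non-negativity of coherent information for degradable channels, the paper instead purifies each codeword with a register $L$ and discards $I(L;B^{n}|\hat{M})_{\sigma}-I(L;E^{n}|\hat{M})_{\sigma}\geq0$ by data processing of CQMI under the degrading map; these dropped terms are literally the same quantity, since $H(B^{n}|L\hat{M})_{\sigma}=H(E^{n}|L\hat{M})_{\sigma}$ gives $I(L;B^{n}|\hat{M})_{\sigma}-I(L;E^{n}|\hat{M})_{\sigma}=\frac{1}{M}\sum_{m}I_{c}(\rho_{A^{n}}^{m},\mathcal{N}^{\otimes n})$. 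Likewise your standalone single-letterization lemma (subadditivity across tensor factors via monotonicity of relative entropy under $\mathcal{D}^{\otimes n}$, then Proposition~\ref{prop:concave-degrad} and Jensen) is exactly the paper's peeling step $H(B^{n})_{\sigma}-H(E^{n})_{\sigma}\leq\sum_{i}[H(B_{i})_{\sigma}-H(E_{i})_{\sigma}]\leq n[H(B)_{\mathcal{U}(\overline{\rho})}-H(E)_{\mathcal{U}(\overline{\rho})}]$, recast in lemma form. What your route buys is modularity and the avoidance of the infinite-dimensional CQMI formalism of \cite{S15,S15squashed} on which the paper leans; what it costs is that the one fact you assert without proof---$I_{c}(\rho,\mathcal{N})\geq0$ for every input to a degradable channel---must actually be established, and your parenthetical justification (via the antidegradable complement) is a restatement rather than a proof. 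The repair is short and uses only the paper's tools: letting $R$ purify a codeword $\rho_{A^{n}}^{m}$, purity of the state on $RB^{n}E^{n}$ gives $I(R;B^{n})-I(R;E^{n})=2[H(B^{n})-H(E^{n})]=2I_{c}(\rho_{A^{n}}^{m},\mathcal{N}^{\otimes n})$, and data processing of mutual information \cite{HS10} under the degrading map $\mathcal{D}^{\otimes n}$ yields $I(R;E^{n})\leq I(R;B^{n})$, hence non-negativity; the entropies entering this manipulation are finite because each codeword individually has energy at most $MP$ under the average constraint (so $H(\rho_{A^{n}}^{m})<\infty$ by the Gibbs property), while the codeword output and environment entropies are finite since they are dominated on average by $H(B^{n})_{\sigma}$ and $H(E^{n})_{\sigma}$, which are finite by concavity, Condition~\ref{cond:finite-out-entropy}, and Lemma~\ref{lem:env-out-ent}. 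With that patch, your argument is a complete and valid alternative write-up of the converse.
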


\begin{proof}
That the quantity in \eqref{eq:coh-info-1-letter} is finite follows directly
from the assumption in Condition~\ref{cond:finite-out-entropy} and
Lemma~\ref{lem:env-out-ent}. From Theorem~\ref{thm:cap-relations}, we have
that%
\begin{align}
Q(\mathcal{N},G,P)  &  =E(\mathcal{N},G,P)\nonumber\\
&  \leq P(\mathcal{N},G,P)=K(\mathcal{N},G,P).
\end{align}
Theorem~\ref{thm:coh-info-ach}\ implies that the rate in
\eqref{eq:coh-info-1-letter} is achievable. So this gives that%
\begin{multline}
\sup_{\rho:\operatorname{Tr}\{G\rho\}\leq P}H(\mathcal{N}(\rho
))-H(\mathcal{\hat{N}}(\rho))\\
\leq Q(\mathcal{N},G,P)=E(\mathcal{N},G,P).
\end{multline}

To establish the theorem, it thus suffices to prove the following converse
inequality%
\begin{equation}
K(\mathcal{N},G,P)\leq\sup_{\rho:\operatorname{Tr}\{G\rho\}\leq P}%
H(\mathcal{N}(\rho))-H(\mathcal{\hat{N}}(\rho)).
\label{eq:key-less-than-coh-info}%
\end{equation}
To do so, we make use of several ideas from
\cite{ieee2005dev,cmp2005dev,S08,YHD05MQAC}. Consider an $(n,M,G,P,\varepsilon
)$ code for secret key transmission with an average energy constraint, as
described in Section~\ref{sec:SKT-AVG-code}. Using such a code, we take a
uniform distribution over the codewords, and the state resulting from an
isometric extension of the channel is as follows:%
\begin{equation}
\sigma_{\hat{M}B^{n}E^{n}}\equiv\frac{1}{M}\sum_{m=1}^{M}|m\rangle\langle
m|_{\hat{M}}\otimes\lbrack\mathcal{U}^{\mathcal{N}}]^{\otimes n}(\rho_{A^{n}%
}^{m}).
\end{equation}
Now consider that each codeword in such a code has a spectral decomposition as
follows:%
\begin{equation}
\rho_{A^{n}}^{m}\equiv\sum_{l=1}^{\infty}p_{L|\hat{M}}(l|m)|\psi^{l,m}%
\rangle\langle\psi^{l,m}|_{A^{n}},
\end{equation}
for a probability distribution $p_{L|\hat{M}}$ and some orthonormal basis
$\{|\psi^{l,m}\rangle_{A^{n}}\}_{l}$ for $\mathcal{H}_{A^{n}}$. Then the state
$\sigma_{\hat{M}B^{n}E^{n}}$ has the following extension:%
\begin{multline}
\sigma_{L\hat{M}B^{n}E^{n}}\equiv\frac{1}{M}\sum_{m=1}^{M}\sum_{l=1}^{\infty
}p_{L|\hat{M}}(l|m)|l\rangle\langle l|_{L}\otimes|m\rangle\langle m|_{\hat{M}%
}\\
\otimes\lbrack\mathcal{U}^{\mathcal{N}}]^{\otimes n}(|\psi^{l,m}\rangle
\langle\psi^{l,m}|_{A^{n}}).
\end{multline}
We can also define the state after the decoding measurement acts as%
\begin{multline}
\sigma_{L\hat{M}M^{\prime}E^{n}}\equiv\frac{1}{M}\sum_{m,m^{\prime}=1}^{M}%
\sum_{l=1}^{\infty}p_{L|\hat{M}}(l|m)|l\rangle\langle l|_{L}\otimes
|m\rangle\langle m|_{\hat{M}}\\
\otimes\operatorname{Tr}_{B^{n}}\{\Lambda_{B^{n}}^{m^{\prime}}[\mathcal{U}%
^{\mathcal{N}}]^{\otimes n}(|\psi^{l,m}\rangle\langle\psi^{l,m}|_{A^{n}%
})\}\otimes|m^{\prime}\rangle\langle m^{\prime}|_{M^{\prime}}.
\end{multline}

Let $\overline{\rho}_{A}$ denote the average single-channel input state,
defined as%
\begin{equation}
\overline{\rho}_{A}\equiv\frac{1}{Mn}\sum_{m=1}^{M}\sum_{i=1}^{n}%
\operatorname{Tr}_{A^{n}\backslash A_{i}}\{\rho_{A^{n}}^{m}\}.
\label{eq:avg-input-state-conv-prf}%
\end{equation}
Applying the partial trace and the assumption in
\eqref{eq:SKT-energy-constraint}, it follows that%
\begin{equation}
\operatorname{Tr}\{G\overline{\rho}_{A}\}=\frac{1}{M}\sum_{m=1}^{M}%
\operatorname{Tr}\{\overline{G}_{n}\rho_{A^{n}}^{m}\}\leq P.
\label{eq:conv-pf-energy-constr}%
\end{equation}
Let $\overline{\sigma}_{B}$ denote the average single-channel output state:%
\begin{equation}
\overline{\sigma}_{B}\equiv\mathcal{N}(\overline{\rho}_{A})=\frac{1}{n}%
\sum_{i=1}^{n}\operatorname{Tr}_{B^{n}\backslash B_{i}}\{\sigma_{B^{n}}\},
\end{equation}
and let $\overline{\sigma}_{E}$ denote the average single-channel environment
state:%
\begin{equation}
\overline{\sigma}_{E}\equiv\mathcal{\hat{N}}(\overline{\rho}_{A})=\frac{1}%
{n}\sum_{i=1}^{n}\operatorname{Tr}_{E^{n}\backslash E_{i}}\{\sigma_{E^{n}}\}.
\end{equation}
It follows from non-negativity, subadditivity of entropy, concavity of
entropy, \eqref{eq:conv-pf-energy-constr}, and the assumption that $G$ is a
Gibbs observable that%
\begin{multline}
0\leq H\left(  \frac{1}{M}\sum_{m=1}^{M}\rho_{A^{n}}^{m}\right) \\
\leq\sum_{i=1}^{n}H\left(  \frac{1}{M}\sum_{m=1}^{M}\operatorname{Tr}%
_{A^{n}\backslash A_{i}}\{\rho_{A^{n}}^{m}\}\right) \\
\leq nH(\overline{\rho}_{A})<\infty.
\end{multline}
Similar reasoning but applying Condition~\ref{cond:finite-out-entropy} implies
that%
\begin{equation}
0\leq H(B^{n})_{\sigma}\leq\sum_{i=1}^{n}H(B_{i})_{\sigma}\leq
nH(B)_{\overline{\sigma}}<\infty.
\end{equation}
Similar reasoning but applying Lemma~\ref{lem:env-out-ent}\ implies that%
\begin{equation}
0\leq H(E^{n})_{\sigma}\leq\sum_{i=1}^{n}H(E_{i})_{\sigma}\leq
nH(E)_{\overline{\sigma}}<\infty.
\end{equation}
Furthermore, the entropy $H(\hat{M})_{\sigma}=\log_{2}M$ because the reduced
state $\sigma_{M}$ is maximally mixed with dimension equal to $M$.

Our analysis makes use of several other entropic quantities, each of which we
need to argue is finitely bounded from above and below and thus can be added
or subtracted at will in our analysis. The quantities involved are as follows,
along with bounds for them \cite{Lindblad1973,K11,S15}:%
\begin{align}
0  &  \leq I(\hat{M};B^{n})_{\sigma}\leq\min\{\log_{2}M,nH(B)_{\overline
{\sigma}}\},\\
0  &  \leq I(\hat{M};E^{n})_{\sigma}\leq\min\{\log_{2}M,nH(E)_{\overline
{\sigma}}\},\\
0  &  \leq H(\hat{M}|E^{n})_{\sigma}\leq\log_{2}M,
\end{align}
as well as%
\begin{multline}
0\leq I(\hat{M}L;B^{n})_{\sigma},\ I(L;B^{n}|\hat{M})_{\sigma},\\
H(B^{n}|L\hat{M})_{\sigma}\leq nH(B)_{\overline{\sigma}},
\end{multline}
and%
\begin{multline}
0\leq I(\hat{M}L;E^{n})_{\sigma},\ I(L;E^{n}|\hat{M})_{\sigma},\\
H(E^{n}|L\hat{M})_{\sigma}\leq nH(E)_{\overline{\sigma}}.
\end{multline}

We now proceed with the converse proof:%
\begin{align}
\log_{2}M  &  =H(\hat{M})_{\sigma}\\
&  =I(\hat{M};M^{\prime})_{\sigma}+H(\hat{M}|M^{\prime})_{\sigma}\\
&  \leq I(\hat{M};M^{\prime})_{\sigma}+h_{2}(\varepsilon)+\varepsilon\log
_{2}(M-1)\\
&  \leq I(\hat{M};B^{n})_{\sigma}+h_{2}(\varepsilon)+\varepsilon\log_{2}M.
\label{eq:1st-block-last-line}%
\end{align}
The first equality follows because the entropy of a uniform distribution is
equal to the logarithm of its cardinality. The second equality is an identity.
The first inequality follows from applying Fano's inequality in
\eqref{eq:fano}\ to the condition in \eqref{eq:private-good-comm-SKT}. The
second inequality follows from applying the Holevo bound
\cite{Holevo73,PhysRevLett.70.363}. The direct sum property of the trace
distance and the security condition in \eqref{eq:security-cond-SKT} imply that%
\begin{multline}
\frac{1}{2}\left\Vert \sigma_{\hat{M}E^{n}}-\pi_{\hat{M}}\otimes\omega_{E^{n}%
}\right\Vert _{1}\\
=\frac{1}{M}\sum_{m=1}^{M}\frac{1}{2}\left\Vert \mathcal{\hat{N}}^{\otimes
n}(\rho_{A^{n}}^{m})-\omega_{E^{n}}\right\Vert _{1}\leq\varepsilon,
\end{multline}
which, by the AFW inequality in \eqref{eq:AFW-cq}\ for classical--quantum
states, means that%
\begin{equation}
\left\vert H(\hat{M}|E^{n})_{\pi\otimes\omega}-H(\hat{M}|E^{n})_{\sigma
}\right\vert \leq\varepsilon\log_{2}(M)+g(\varepsilon).
\end{equation}
But%
\begin{align}
&  H(\hat{M}|E^{n})_{\pi\otimes\omega}-H(\hat{M}|E^{n})_{\sigma}\nonumber\\
&  =H(\hat{M})_{\pi}-H(\hat{M}|E^{n})_{\sigma}\\
&  =H(\hat{M})_{\sigma}-H(\hat{M}|E^{n})_{\sigma}\\
&  =I(\hat{M};E^{n})_{\sigma},
\end{align}
so then%
\begin{equation}
I(\hat{M};E^{n})_{\sigma}\leq\varepsilon\log_{2}(M)+g(\varepsilon).
\label{eq:eve-holevo-upper}%
\end{equation}
Returning to \eqref{eq:1st-block-last-line}\ and inserting
\eqref{eq:eve-holevo-upper}, we find that%
\begin{multline}
\log_{2}M\leq I(\hat{M};B^{n})_{\sigma}-I(\hat{M};E^{n})_{\sigma
}\label{eq:priv-cap-conv-regul}\\
+2\varepsilon\log_{2}M+h_{2}(\varepsilon)+g(\varepsilon).
\end{multline}
We now focus on bounding the term $I(\hat{M};B^{n})_{\sigma}-I(\hat{M}%
;E^{n})_{\sigma}$:%
\begin{align}
&  I(\hat{M};B^{n})_{\sigma}-I(\hat{M};E^{n})_{\sigma}\nonumber\\
&  =I(\hat{M}L;B^{n})_{\sigma}-I(L;B^{n}|\hat{M})_{\sigma}\nonumber\\
&  \qquad-\left[  I(\hat{M}L;E^{n})_{\sigma}-I(L;E^{n}|\hat{M})_{\sigma
}\right] \\
&  =I(\hat{M}L;B^{n})_{\sigma}-I(\hat{M}L;E^{n})_{\sigma}\nonumber\\
&  \qquad-\left[  I(L;B^{n}|\hat{M})_{\sigma}-I(L;E^{n}|\hat{M})_{\sigma
}\right] \\
&  \leq I(\hat{M}L;B^{n})_{\sigma}-I(\hat{M}L;E^{n})_{\sigma}\\
&  =H(B^{n})_{\sigma}-H(B^{n}|L\hat{M})_{\sigma}\nonumber\\
&  \qquad-\left[  H(E^{n})_{\sigma}-H(E^{n}|L\hat{M})_{\sigma}\right] \\
&  =H(B^{n})_{\sigma}-H(B^{n}|L\hat{M})_{\sigma}\nonumber\\
&  \qquad-\left[  H(E^{n})_{\sigma}-H(B^{n}|L\hat{M})_{\sigma}\right] \\
&  =H(B^{n})_{\sigma}-H(E^{n})_{\sigma}. \label{eq:second-block-last-line}%
\end{align}
The first equality follows from the chain rule for mutual information. The
second equality follows from a rearrangement. The first inequality follows
from the assumption of degradability of the channel, which implies that Bob's
mutual information is never smaller than Eve's: $I(L;B^{n}|\hat{M})_{\sigma
}\geq I(L;E^{n}|\hat{M})_{\sigma}$. The third equality follows from
definitions. The fourth equality follows because the marginal entropies of a
pure state are equal, i.e.,%
\begin{align}
&  H(B^{n}|L\hat{M})_{\sigma}\nonumber\\
&  =\frac{1}{M}\sum_{l,m}p_{L|\hat{M}}(l|m)H(\operatorname{Tr}_{E^{n}%
}\{[\mathcal{U}^{\mathcal{N}}]^{\otimes n}(|\psi^{l,m}\rangle\langle\psi
^{l,m}|_{A^{n}})\})\nonumber\\
&  =\frac{1}{M}\sum_{l,m}p_{L|\hat{M}}(l|m)H(\operatorname{Tr}_{B^{n}%
}\{[\mathcal{U}^{\mathcal{N}}]^{\otimes n}(|\psi^{l,m}\rangle\langle\psi
^{l,m}|_{A^{n}})\})\nonumber\\
&  =H(E^{n}|L\hat{M})_{\sigma}.
\end{align}
Continuing, we have that%
\begin{align}
\eqref{eq:second-block-last-line}  &  =H(B_{1})_{\sigma}-H(E_{1})_{\sigma
}+H(B_{2}\cdots B_{n})_{\sigma}\nonumber\\
&  \qquad-H(E_{1}\cdots E_{n})_{\sigma}\nonumber\\
&  \qquad-\left[  I(B_{1};B_{2}\cdots B_{n})_{\sigma}-I(E_{1};E_{2}\cdots
E_{n})_{\sigma}\right] \\
&  \leq H(B_{1})_{\sigma}-H(E_{1})_{\sigma}\nonumber\\
&  \qquad+H(B_{2}\cdots B_{n})_{\sigma}-H(E_{1}\cdots E_{n})_{\sigma}\\
&  \leq\sum_{i=1}^{n}H(B_{i})_{\sigma}-H(E_{i})_{\sigma}\\
&  \leq n\left[  H(B)_{\mathcal{U}(\overline{\rho})}-H(E)_{\mathcal{U}%
(\overline{\rho})}\right] \\
&  \leq n\left[  \sup_{\rho:\operatorname{Tr}\{G\rho\}\leq P}H(\mathcal{N}%
(\rho))-H(\mathcal{\hat{N}}(\rho))\right]  .
\end{align}
The first equality follows by exploiting the definition of mutual information.
The first inequality follows from the assumption of degradability, which
implies that $I(B_{1};B_{2}\cdots B_{n})_{\sigma}\geq I(E_{1};E_{2}\cdots
E_{n})_{\sigma}$. The second inequality follows by iterating the argument. The
third inequality follows from the concavity of the coherent information for
degradable channels (Proposition~\ref{prop:concave-degrad}), with
$\overline{\rho}_{A}$ defined as in \eqref{eq:avg-input-state-conv-prf} and
satisfying \eqref{eq:conv-pf-energy-constr}. Thus, the final inequality
follows because we can optimize the coherent information with respect all
density operators satisfying the energy constraint.

Putting everything together and assuming that $\varepsilon<1/2$, we find the
following bound for all $\left(  n,M,G,P,\varepsilon\right)  $ private
communication codes:%
\begin{multline}
\left(  1-2\varepsilon\right)  \frac{1}{n}\log_{2}M-\frac{1}{n}\left[
h_{2}(\varepsilon)+g(\varepsilon)\right] \\
\leq\sup_{\rho:\operatorname{Tr}\{G\rho\}\leq P}H(\mathcal{N}(\rho
))-H(\mathcal{\hat{N}}(\rho)).
\end{multline}
Now taking the limit as $n\rightarrow\infty$ and then as $\varepsilon
\rightarrow0$, we can conclude the inequality in
\eqref{eq:key-less-than-coh-info}. This concludes the proof.
\end{proof}

\section{Regularized converses for energy-constrained quantum and private
capacity of general channels}

\label{sec:reg-converses}

In this section, we establish regularized converses for the energy-constrained
quantum and private capacities of general channels. We start with private
capacity, but before doing so, we should give some further background
(available in \cite{HS06,H12,S16cont}) and recall the definition of the energy-constrained
private information of a channel \cite{S16cont}. A generalized (continuous) ensemble
corresponds to a Borel probability measure on the set of quantum states. Let
$\mathcal{M}(\mathcal{H})$ denote the set of all Borel probability measures on
$\mathcal{D}(\mathcal{H})$ having the topology of weak convergence. The
average state $\overline{\rho}(\mu)$\ of a generalized ensemble $\mu
\in\mathcal{M}(\mathcal{H})$ is the barycenter of the measure $\mu$ defined by
the following Bochner integral:%
\begin{equation}
\overline{\rho}(\mu)\equiv \int_{\mathcal{D}(\mathcal{H})}\mu(d\rho)\ \rho.
\end{equation}
(The notation $\mu(d\rho)$ indicates that $\mu$ is a measure over all mixed states.)
We let $\mathcal{N}(\mu)$ denote the generalized ensemble resulting from
applying the channel to the states in the generalized ensemble specified by
$\mu$. The Holevo quantity for a generalized ensemble is defined as%
\begin{equation}
\chi(\mu)\equiv\int_{\mathcal{D}(\mathcal{H})}\mu(d\rho)\ D(\rho\Vert\overline
{\rho}(\mu)).
\end{equation}
The energy-constrained private information of a channel $\mathcal{N}$ is then defined as \cite{S16cont}
\begin{equation}
C_{p}(\mathcal{N},G,P)\equiv\sup_{\mu\in\mathcal{M}(\mathcal{H}%
):\operatorname{Tr}\{G\overline{\rho}(\mu)\}\leq P}\chi(\mathcal{N}(\mu
))-\chi(\mathcal{\hat{N}}(\mu)),
\label{eq:en-const-priv-info}
\end{equation}
where $\mathcal{\hat{N}}$ denotes a complementary channel of $\mathcal{N}$.
 We can now state our first result for general channels:

\begin{theorem}
\label{thm:-energy-constr-p-cap-regularized}Let $G$ be a Gibbs observable and
$P\in\lbrack0,\infty)$. Let a quantum channel $\mathcal{N}$ satisfy
Condition~\ref{cond:finite-out-entropy}. Then the energy-constrained
capacities $P(\mathcal{N},G,P)$ and $K(\mathcal{N},G,P)$ are finite, equal,
and bounded from above by the regularized energy-constrained private
information:%
\begin{equation}
P(\mathcal{N},G,P)=K(\mathcal{N},G,P)\leq\lim_{k\rightarrow\infty}\frac{1}%
{k}C_{p}(\mathcal{N}^{\otimes k},\overline{G}_{k},P).
\end{equation}

\end{theorem}

\begin{proof}
Theorem~\ref{thm:cap-relations} implies that%
\begin{equation}
P(\mathcal{N},G,P)=K(\mathcal{N},G,P).
\end{equation}
To establish the theorem stated above, it thus suffices to prove the following converse
inequality%
\begin{equation}
K(\mathcal{N},G,P)\leq\lim_{k\rightarrow\infty}\frac{1}{k}C_{p}(\mathcal{N}%
^{\otimes k},\overline{G}_{k},P).\label{eq:key-bounded-by-reg-priv}%
\end{equation}
To do so, we follow all of the steps of
Theorem~\ref{thm:-energy-constr-q-p-cap}\ until
\eqref{eq:priv-cap-conv-regul}. Now let $\mu_{0}\in\mathcal{M}(\mathcal{H}%
^{\otimes n})$ denote the discrete measure induced by the
$(n,M,G,P,\varepsilon)$ secret-key transmission code. For this measure, the
condition $\operatorname{Tr}\{\overline{G}_{n}\overline{\rho}(\mu_{0})\}\leq
P$ holds by definition, being the same as \eqref{eq:SKT-energy-constraint}.
Thus, picking up from \eqref{eq:priv-cap-conv-regul}, we obtain the following:%
\begin{align}
&  \!\!\!\!\!I(\hat{M};B^{n})_{\sigma}-I(\hat{M};E^{n})_{\sigma}\nonumber\\
&  =\chi(\mathcal{N}(\mu_{0}))-\chi(\mathcal{\hat{N}}(\mu_{0}))\\
&  \leq C_{p}(\mathcal{N}^{\otimes n},\overline{G}_{n},P),
\end{align}
with the inequality holding for the simple reason that we can never achieve a
smaller value by optimizing over all generalized ensembles satisfying the
energy constraint.\ We then conclude 
that%
\begin{multline}
(1-2\varepsilon)\frac{1}{n}\log_{2}M\leq\frac{1}{n}C_{p}(\mathcal{N}^{\otimes
n},\overline{G}_{n},P)\\
+\frac{1}{n}\left[  h_{2}(\varepsilon)+g(\varepsilon)\right]  .
\end{multline}
Now taking the limit as $n\rightarrow\infty$ and then as $\varepsilon
\rightarrow0$, we can conclude the inequality in
\eqref{eq:key-bounded-by-reg-priv}.
\end{proof}

\bigskip

We now turn to the quantum capacity:

\begin{theorem}
\label{thm:-energy-constr-q-cap-regularized}Let $G$ be a Gibbs observable and
$P\in\lbrack0,\infty)$. Let a quantum channel $\mathcal{N}$ satisfy
Condition~\ref{cond:finite-out-entropy}. Then the energy-constrained
capacities $Q(\mathcal{N},G,P)$ and $E(\mathcal{N},G,P)$\ are finite and equal
to the regularized energy-constrained coherent information:%
\begin{equation}
Q(\mathcal{N},G,P)=E(\mathcal{N},G,P)=\lim_{k\rightarrow\infty}\frac{1}%
{k}I_{c}(\mathcal{N}^{\otimes k},\overline{G}_{k},P).
\end{equation}

\end{theorem}

\begin{proof}
Theorem~\ref{thm:coh-info-ach} establishes the following lower bound:%
\begin{equation}
Q(\mathcal{N},G,P)\geq\lim_{k\rightarrow\infty}\frac{1}{k}I_{c}(\mathcal{N}%
^{\otimes k},\overline{G}_{k},P),
\end{equation}
and Theorem~\ref{thm:cap-relations}\ the following equality:%
\begin{equation}
Q(\mathcal{N},G,P)=E(\mathcal{N},G,P).
\end{equation}
We now establish the upper bound%
\begin{equation}
E(\mathcal{N},G,P)\leq\lim_{k\rightarrow\infty}\frac{1}{k}I_{c}(\mathcal{N}%
^{\otimes k},\overline{G}_{k},P).\label{eq:ent-trans-bnded-by-reg-coh-info}%
\end{equation}
Fix $\delta \in (0,1)$.
Consider an $(n,M,G,P(1-\delta),\varepsilon)$ code for entanglement transmission with an
average energy constraint, as described in
Section~\ref{sec:ent-trans-avg-code}. Let%
\begin{align}
\omega_{RS} &  \equiv(\operatorname{id}_{R}\otimes\lbrack\mathcal{D}^{n}%
\circ\mathcal{N}^{\otimes n}\circ\mathcal{E}^{n}])(\Phi_{RS}),\\
\kappa_{RB^{n}} &  \equiv(\operatorname{id}_{R}\otimes\lbrack\mathcal{N}%
^{\otimes n}\circ\mathcal{E}^{n}])(\Phi_{RS}),
\end{align}
where the symbols on the right-hand side are described in
Section~\ref{sec:ent-trans-avg-code}. Note that $M=\dim(\mathcal{H}_{R})$, by
definition. Let $\sum_{l}p(l)|\phi^{l}\rangle\langle\phi^{l}|_{RA^{n}}$ be a
spectral decomposition of the state $(\operatorname{id}_{R}\otimes$
$\mathcal{E}^{n})(\Phi_{RS})$, and define%
\begin{align}
\omega_{RS}^{l}  & \equiv(\operatorname{id}_{R}\otimes\lbrack\mathcal{D}%
^{n}\circ\mathcal{N}^{\otimes n}])(|\phi^{l}\rangle\langle\phi^{l}|_{RA^{n}%
}),\\
\kappa_{RB^{n}}^{l}  & \equiv(\operatorname{id}_{R}\otimes\mathcal{N}^{\otimes
n})(|\phi^{l}\rangle\langle\phi^{l}|_{RA^{n}}),
\end{align}
so that%
\begin{equation}
\omega_{RS}=\sum_{l}p(l)\omega_{RS}^{l},\quad\kappa_{RB^{n}}=\sum
_{l}p(l)\kappa_{RB^{n}}^{l}.
\end{equation}
By the condition in \eqref{eq:EG-good-reliable-code}, we have that%
\begin{align}
\varepsilon & \geq1-\langle\Phi|_{RS}\omega_{RS}|\Phi\rangle_{RS}\\
& =\sum_{l}p(l)\left[  1-\langle\Phi|_{RS}\omega_{RS}^{l}|\Phi\rangle
_{RS}\right]  \\
& \equiv\sum_{l}p(l)\hat{F}_{l}.
\end{align}
Also, the energy constraint in \eqref{eq:EG-avg-energy-constraint}\ implies
that%
\begin{align}
P(1-\delta)  & \geq\operatorname{Tr}\{\overline{G}_{n}\mathcal{E}^{n}(\pi_{S})\}\\
& =\sum_{l}p(l)\operatorname{Tr}\{\overline{G}_{n}\phi_{A^{n}}^{l}\}\\
& \equiv\sum_{l}p(l)E_{l}.
\end{align}
We would like to conclude that there exists at least one value of $l$ for
which the state $\phi_{RA^{n}}^{l}$ realizes a good entanglement generation
code, in the sense of \cite{ieee2005dev}, while at the same time meeting the
energy constraint. Let $L$ be a random variable with probability distribution
$p(l)$. By the union bound and Markov's inequality, for constant $\delta
\in(0,1)$, we have that%
\begin{align}
& \Pr_{L}\left\{  \left[  E_{L}\leq P\cap\hat{F}_{L}\leq
2\varepsilon/\delta\right]  ^{c}\right\}  \notag \\
& =\Pr_{L}\left\{  E_{L}>P\cup\hat{F}_{L}>2\varepsilon
/\delta\right\}  \\
& \leq\Pr_{L}\left\{  E_{L}>P\right\}  +\Pr_{L}\left\{  \hat{F}%
_{L}>2\varepsilon/\delta\right\}  \\
& \leq\frac{\mathbb{E}_{L}\{E_{L}\}}{P}+\frac{\mathbb{E}_{L}%
\{\hat{F}_{L}\}}{2\varepsilon/\delta}\\
& \leq\frac{P(1-\delta)}{P}+\frac{\mathbb{\varepsilon}}{2\varepsilon/\delta
}\\
& =1-\delta+\delta/2=1-\delta/2.
\end{align}
Thus, $\Pr_{L}\{E_{L}\leq P\cap\hat{F}_{L}\leq2\varepsilon
/\delta\}>\delta/2>0$, and we can conclude that there exists at least one
realization $l$ of $L$ for which the conditions $E_{l}\leq P$ and
$\hat{F}_{l}\leq2\varepsilon/\delta$ hold. For this value, we have by
\eqref{eq:F-vd-G}\ that%
\begin{equation}
\frac{1}{2}\left\Vert \omega_{RS}^{l}-\Phi_{RS}\right\Vert _{1}\leq
\sqrt{2\varepsilon/\delta}.\label{eq:EG-code-close-max-ent}%
\end{equation}
Now consider that%
\begin{align}
\log\dim(\mathcal{H}_{R}) &  =I(R\rangle S)_{\Phi}\\
&  \leq I(R\rangle S)_{\omega^{l}}+2\sqrt{2\varepsilon/\delta}\log
\dim(\mathcal{H}_{R})\nonumber\\
&  \qquad+g(\sqrt{2\varepsilon/\delta}).
\end{align}
The equality follows from a direct calculation and the inequality from
\eqref{eq:EG-code-close-max-ent} and the continuity bound in
\eqref{eq:AFW-ineq-CE}. Continuing, we have that%
\begin{align}
I(R\rangle S)_{\omega^{l}} &  \leq I(R\rangle B^{n})_{\kappa^{l}}\\
&  =H(\mathcal{N}^{\otimes n}(\phi_{A^{n}}^{l}))-H(\mathcal{\hat{N}}^{\otimes
n}(\phi_{A^{n}}^{l}))\\
&  \leq I_{c}(\mathcal{N}^{\otimes n},\overline{G}_{n},P).
\end{align}
The first inequality follows from data processing of coherent information
recalled in \eqref{eq:coh-info-DP}. The equality follows by rewriting the
coherent information, given that the various entropies involved are finite.
The final inequality follows because the definition of $I_{c}(\mathcal{N}%
^{\otimes n},\overline{G}_{n},P)$ involves an optimization with
respect to all input states $\rho^{(n)}$ satisfying $\operatorname{Tr}%
\{\overline{G}_{n}\rho^{(n)}\}\leq P$ and $\phi_{A^{n}}^{l}$ is one
such state. Putting everything together, we find that%
\begin{multline}
(1-2\sqrt{2\varepsilon/\delta})\frac{1}{n}\log\dim(\mathcal{H}_{R})\leq
\frac{1}{n}I_{c}(\mathcal{N}^{\otimes n},\overline{G}_{n},P)\\
+\frac{1}{n}g(\sqrt{2\varepsilon/\delta}).
\end{multline}
Now taking the limit as $n\rightarrow\infty$ and then as $\varepsilon
\rightarrow0$, we conclude that
\begin{equation}
E(\mathcal{N},G,P(1-\delta))\leq\lim_{k\rightarrow\infty}\frac{1}{k}I_{c}(\mathcal{N}%
^{\otimes k},\overline{G}_{k},P).
\end{equation}
However, we have proved that the above inequality holds for all $\delta\in(0,1)$,
and so we can take a supremum over $\delta\in(0,1)$ and arrive at the conclusion
that%
\begin{align}
\sup_{\delta\in(0,1)}E(\mathcal{N},G,P(1-\delta))  & = 
E(\mathcal{N},G,P) \\
& \leq \lim_{k\rightarrow\infty}\frac{1}{k}I_{c}(\mathcal{N}^{\otimes k}%
,\overline{G}_{k},P),
\end{align}
which is the inequality in \eqref{eq:ent-trans-bnded-by-reg-coh-info}. This
concludes the proof.
\end{proof}

\section{Thermal state as the optimizer}

In this section, we prove that the function%
\begin{equation}
\sup_{\operatorname{Tr}\{G\rho\}=P}H(\mathcal{N}(\rho))-H(\mathcal{\hat{N}%
}(\rho))
\end{equation}
is optimized by a thermal state input if the channel $\mathcal{N}$ is
degradable and satisfies certain other properties. In what follows, for a
Gibbs observable $G$, we define the thermal state $\theta_{\beta}$\ of inverse
temperature $\beta>0$ as%
\begin{equation}
\theta_{\beta}\equiv\frac{e^{-\beta G}}{\operatorname{Tr}\{e^{-\beta G}\}}.
\label{eq:thermal-state-beta-G}%
\end{equation}

\begin{theorem}
\label{thm:thermal-optimal-degrad}Let $G$ be a Gibbs observable and
$P\in\lbrack0,\infty)$. Let $\mathcal{N}:\mathcal{T}(\mathcal{H}%
_{A})\rightarrow\mathcal{T}(\mathcal{H}_{B})$ be a degradable quantum channel
satisfying Condition~\ref{cond:finite-out-entropy}. Let $\theta_{\beta}$
denote the thermal state of $G$, as in \eqref{eq:thermal-state-beta-G},
satisfying $\operatorname{Tr}\{G\theta_{\beta}\}=P$ for some $\beta>0$.
Suppose that $\mathcal{N}$ and a complementary channel $\mathcal{\hat{N}%
}:\mathcal{T}(\mathcal{H}_{A})\rightarrow\mathcal{T}(\mathcal{H}_{E})$ are
Gibbs preserving, in the sense that there exist $\beta_{1},\beta_{2}>0$ such
that%
\begin{equation}
\mathcal{N}(\theta_{\beta})=\theta_{\beta_{1}},\qquad\mathcal{\hat{N}}%
(\theta_{\beta})=\theta_{\beta_{2}}.
\end{equation}
Set%
\begin{equation}
P_{1}\equiv\operatorname{Tr}\{G\mathcal{N}(\theta_{\beta})\},\qquad
P_{2}\equiv\operatorname{Tr}\{G\mathcal{\hat{N}}(\theta_{\beta})\}.
\end{equation}
Suppose further that $\mathcal{N}$ and $\mathcal{\hat{N}}$\ are such that, for
all input states $\rho$ such that $\operatorname{Tr}\{G\rho\}=P$, the output
energies satisfy%
\begin{equation}
\operatorname{Tr}\{G\mathcal{N}(\rho)\}\leq P_{1},\qquad\operatorname{Tr}%
\{G\mathcal{\hat{N}}(\rho)\}\geq P_{2}.
\end{equation}
Then the function%
\begin{equation}
\sup_{\operatorname{Tr}\{G\rho\}=P}H(\mathcal{N}(\rho))-H(\mathcal{\hat{N}%
}(\rho)),
\end{equation}
is optimized by the thermal state $\theta_{\beta}$.
\end{theorem}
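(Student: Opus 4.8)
The plan is to combine degradability with the monotonicity of quantum relative entropy, comparing an arbitrary admissible input $\rho$ (with $\operatorname{Tr}\{G\rho\}=P$) against the thermal input $\theta_\beta$. Since $\mathcal{N}$ is degradable, there is a degrading channel $\mathcal{D}$ with $\mathcal{\hat{N}}=\mathcal{D}\circ\mathcal{N}$. First I would apply \eqref{eq:mono-rel-ent} to $\mathcal{D}$ with the two inputs $\mathcal{N}(\rho)$ and $\mathcal{N}(\theta_\beta)=\theta_{\beta_1}$, using that $\mathcal{D}(\mathcal{N}(\theta_\beta))=\mathcal{\hat{N}}(\theta_\beta)=\theta_{\beta_2}$, to obtain
\[
D(\mathcal{N}(\rho)\Vert\theta_{\beta_1})\geq D(\mathcal{\hat{N}}(\rho)\Vert\theta_{\beta_2}).
\]
Note that $\mathcal{D}$ automatically sends $\theta_{\beta_1}$ to $\theta_{\beta_2}$, so no extra hypothesis on $\mathcal{D}$ is needed.

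The second step is to expand both relative entropies using the Gibbs form $\log_2\theta_{\beta_i}=-(\beta_i\log_2 e)\,G-\log_2 Z_{\beta_i}$, where $Z_{\beta_i}\equiv\operatorname{Tr}\{e^{-\beta_i G}\}$. For $\omega\in\{\mathcal{N}(\rho),\mathcal{\hat{N}}(\rho)\}$ this reads $D(\omega\Vert\theta_{\beta_i})=-H(\omega)+(\beta_i\log_2 e)\operatorname{Tr}\{G\omega\}+\log_2 Z_{\beta_i}$, and rearranging the displayed monotonicity inequality turns it into
\begin{align*}
H(\mathcal{N}(\rho))-H(\mathcal{\hat{N}}(\rho)) &\leq (\beta_1\log_2 e)\operatorname{Tr}\{G\mathcal{N}(\rho)\}\\
&\quad-(\beta_2\log_2 e)\operatorname{Tr}\{G\mathcal{\hat{N}}(\rho)\}\\
&\quad+\log_2 Z_{\beta_1}-\log_2 Z_{\beta_2}.
\end{align*}
Here $G$ is understood as the relevant observable on the output and environment spaces, so that $\theta_{\beta_1}$, $\theta_{\beta_2}$ and the output energies are well defined.

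The third step inserts the output-energy hypotheses $\operatorname{Tr}\{G\mathcal{N}(\rho)\}\leq P_1$ and $\operatorname{Tr}\{G\mathcal{\hat{N}}(\rho)\}\geq P_2$. Since the coefficient of the first energy is positive and that of the second is negative, both substitutions only increase the right-hand side, bounding it by $(\beta_1\log_2 e)P_1-(\beta_2\log_2 e)P_2+\log_2 Z_{\beta_1}-\log_2 Z_{\beta_2}$. Finally I would recognize this quantity as exactly $H(\theta_{\beta_1})-H(\theta_{\beta_2})=H(\mathcal{N}(\theta_\beta))-H(\mathcal{\hat{N}}(\theta_\beta))$, using $H(\theta_{\beta_i})=-\operatorname{Tr}\{\theta_{\beta_i}\log_2\theta_{\beta_i}\}=(\beta_i\log_2 e)P_i+\log_2 Z_{\beta_i}$ together with $\operatorname{Tr}\{G\theta_{\beta_1}\}=P_1$ and $\operatorname{Tr}\{G\theta_{\beta_2}\}=P_2$. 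This yields $H(\mathcal{N}(\rho))-H(\mathcal{\hat{N}}(\rho))\leq H(\mathcal{N}(\theta_\beta))-H(\mathcal{\hat{N}}(\theta_\beta))$ for every admissible $\rho$, which is the assertion.

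I expect the main obstacle to be the bookkeeping of finiteness needed to justify the relative-entropy identities and the rearrangement. Condition~\ref{cond:finite-out-entropy} and Lemma~\ref{lem:env-out-ent} make $H(\mathcal{N}(\rho))$ and $H(\mathcal{\hat{N}}(\rho))$ finite; the thermal states are full-rank, so the relative entropies are not forced to be $+\infty$ by support mismatch; and $\operatorname{Tr}\{G\mathcal{N}(\rho)\}\leq P_1<\infty$ makes $D(\mathcal{N}(\rho)\Vert\theta_{\beta_1})$ finite. The one delicate point is that only a lower bound on $\operatorname{Tr}\{G\mathcal{\hat{N}}(\rho)\}$ is assumed; were this trace infinite, $D(\mathcal{\hat{N}}(\rho)\Vert\theta_{\beta_2})$ would be $+\infty$, contradicting the monotonicity inequality whose left-hand side is finite. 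Hence finiteness is automatically forced and all the manipulations above are legitimate.
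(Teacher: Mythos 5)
Your proposal is correct and follows essentially the same route as the paper's proof: monotonicity of quantum relative entropy under the degrading channel (using that $\mathcal{D}$ maps $\theta_{\beta_1}$ to $\theta_{\beta_2}$ for free), expansion of both relative entropies via the Gibbs form of the thermal states, insertion of the two output-energy hypotheses with the correct signs, and recognition that the resulting uniform bound is saturated at $\rho=\theta_\beta$. Your closing discussion of finiteness (in particular, that an infinite environment output energy would force $D(\mathcal{\hat{N}}(\rho)\Vert\theta_{\beta_2})=+\infty$ and contradict the monotonicity inequality) is a careful addition that the paper leaves implicit, but it does not change the argument.
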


\begin{proof}
Let $\mathcal{D}:\mathcal{T}(\mathcal{H}_{B})\rightarrow\mathcal{T}%
(\mathcal{H}_{E})$ be a degrading channel such that $\mathcal{D}%
\circ\mathcal{N}=\mathcal{\hat{N}}$. Consider a state $\rho$ such that
$\operatorname{Tr}\{G\rho\}=P$. The monotonicity of quantum relative entropy
with respect to quantum channels (see \eqref{eq:mono-rel-ent})\ implies that%
\begin{align}
D(\mathcal{N}(\rho)\Vert\mathcal{N}(\theta_{\beta}))  &  \geq D((\mathcal{D}%
\circ\mathcal{N})(\rho)\Vert(\mathcal{D}\circ\mathcal{N})(\theta_{\beta}))\\
&  =D(\mathcal{\hat{N}}(\rho)\Vert\mathcal{\hat{N}}(\theta_{\beta})).
\end{align}
By the assumption of the theorem, this means that%
\begin{equation}
D(\mathcal{N}(\rho)\Vert\theta_{\beta_{1}})\geq D(\mathcal{\hat{N}}(\rho
)\Vert\theta_{\beta_{2}}),
\end{equation}
where $\beta_{1}$ and $\beta_{2}$ are such that $\operatorname{Tr}%
\{G\theta_{\beta_{1}}\}=P_{1}$ and $\operatorname{Tr}\{G\theta_{\beta_{2}%
}\}=P_{2}$. After a rewriting using definitions and the fact that all terms
below are finite, the inequality above becomes%
\begin{multline}
\operatorname{Tr}\{\mathcal{\hat{N}}(\rho)\log\theta_{\beta_{2}}%
\}-\operatorname{Tr}\{\mathcal{N}(\rho)\log\theta_{\beta_{1}}\}\\
\geq H(\mathcal{N}(\rho))-H(\mathcal{\hat{N}}(\rho)).
\end{multline}
Set $Z_{1}\equiv\operatorname{Tr}\{e^{-\beta_{1}G}\}$ and $Z_{2}%
\equiv\operatorname{Tr}\{e^{-\beta_{2}G}\}$. We can then rewrite the upper
bound as%
\begin{align}
&  \operatorname{Tr}\{\mathcal{\hat{N}}(\rho)\log\theta_{\beta_{2}%
}\}-\operatorname{Tr}\{\mathcal{N}(\rho)\log\theta_{\beta_{1}}\}\nonumber\\
&  =\operatorname{Tr}\{\mathcal{\hat{N}}(\rho)\log\left[  e^{-\beta_{2}%
G}/Z_{2}\right]  \}\nonumber\\
&  \qquad-\operatorname{Tr}\{\mathcal{N}(\rho)\log\left[  e^{-\beta_{1}%
G}/Z_{1}\right]  \}\\
&  =\log\left[  Z_{1}/Z_{2}\right]  -\beta_{2}\operatorname{Tr}%
\{G\mathcal{\hat{N}}(\rho)\}+\beta_{1}\operatorname{Tr}\{G\mathcal{N}%
(\rho)\}\\
&  \leq\log\left[  Z_{1}/Z_{2}\right]  -\beta_{2}P_{2}+\beta_{1}P_{1}.
\end{align}
Thus, we have established a uniform upper bound on the coherent information of
states subject to the constraints given in the theorem:%
\begin{equation}
H(\mathcal{N}(\rho))-H(\mathcal{\hat{N}}(\rho))\leq\log\left[  Z_{1}%
/Z_{2}\right]  -\beta_{2}P_{2}+\beta_{1}P_{1}.
\end{equation}
This bound is saturated when we choose the input $\rho=\theta_{\beta}$, where
$\beta$ is such that $\operatorname{Tr}\{G\theta_{\beta}\}=P$, because%
\begin{equation}
\log\left[  Z_{1}/Z_{2}\right]  -\beta_{2}P_{2}+\beta_{1}P_{1}=H(\mathcal{N}%
(\theta_{\beta}))-H(\mathcal{\hat{N}}(\theta_{\beta})).
\end{equation}
This concludes the proof.
\end{proof}

\begin{remark}
Note that we can also conclude that $P_{1}\geq P_{2}$ for channels satisfying
the hypotheses of the above theorem because the channel is degradable,
implying that $H(\theta_{\beta_{1}})\geq H(\theta_{\beta_{2}})$, and the
entropy of a thermal state is a strictly increasing function of the energy
(and thus invertible) \cite[Proposition~10]{Winter15}.
\end{remark}

\begin{remark}
The assumptions in Theorem~\ref{thm:thermal-optimal-degrad} might seem
somewhat artificial, but the next section demonstrates several natural
examples of channels that satisfy the assumptions.
\end{remark}

\section{Application to Gaussian quantum channels\label{sec:Gaussian-results}}

We can now apply all of the results from previous sections to the particular
case of quantum bosonic Gaussian channels \cite{CEGH08,S17}. These
channels model natural physical processes such as photon loss, photon
amplification, thermalizing noise, or random kicks in phase space. They
satisfy Condition~\ref{cond:finite-out-entropy} when the Gibbs observable for
$m$ modes is taken to be%
\begin{equation}
\hat{E}_{m}\equiv\sum_{j=1}^{m}\omega_{j}\hat{a}_{j}^{\dag}\hat{a}_{j},
\label{eq:photon-num-op-freqs}%
\end{equation}
where $\omega_{j}>0$ is the frequency of the $j$th mode and $\hat{a}_{j}$ is
the photon annihilation operator for the $j$th mode, so that $\hat{a}%
_{j}^{\dag}\hat{a}_{j}$ is the photon number operator for the $j$th mode.

We start with a brief review of Gaussian states and channels (see
\cite{CEGH08,adesso14,S17} for more comprehensive reviews, but note that
here we mostly follow the conventions of \cite{CEGH08}). Let%
\begin{equation}
\hat{R}\equiv\left[  \hat{q}_{1},\ldots,\hat{q}_{m},\hat{p}_{1},\ldots,\hat
{p}_{m}\right]  \equiv\left[  \hat{x}_{1},\ldots,\hat{x}_{2m}\right]
\end{equation}
denote a row vector of position- and momentum-quadrature operators, satisfying
the canonical commutation relations:%
\begin{equation}
\left[  \hat{R}_{j},\hat{R}_{k}\right]  =i\Omega_{j,k},\quad\text{where}%
\quad\Omega\equiv%
\begin{bmatrix}
0 & 1\\
-1 & 0
\end{bmatrix}
\otimes I_{m},
\end{equation}
and $I_{m}$ denotes the $m\times m$ identity matrix. We take the annihilation
operator for the $j$th mode as $\hat{a}_{j}=(\hat{q}_{j}+i\hat{p}_{j}%
)/\sqrt{2}$. For $z$ a column vector in $\mathbb{R}^{2m}$, we define the
unitary displacement operator $D(z)=D^{\dagger}(-z)\equiv\exp(i\hat{R}z)$.
Displacement operators satisfy the following relation:
\begin{equation}
D(z)D(z^{\prime})=D(z+z^{\prime})\exp\!\left(  -\frac{i}{2}z^{T} \Omega
z^{\prime}\right)  .
\end{equation}
Every state $\rho\in\mathcal{D}(\mathcal{H})$ has a corresponding Wigner
characteristic function, defined as%
\begin{equation}
\chi_{\rho}(z)\equiv\operatorname{Tr}\{D(z)\rho\},
\end{equation}
and from which we can obtain the state $\rho$ as%
\begin{equation}
\rho=\int\frac{d^{2m}z}{\left(  2\pi\right)  ^{m}} \ \chi_{\rho}(z)\ D^{\dag
}(z).
\end{equation}
A quantum state $\rho$ is Gaussian if its Wigner characteristic function has a
Gaussian form as%
\begin{equation}
\chi_{\rho}(\xi)=\exp\left(  -\frac{1}{4}z^{T}V^{\rho}z+i\left[  \mu^{\rho
}\right]  ^{T}z\right)  ,
\end{equation}
where $\mu^{\rho}$ is the $2m\times1$ mean vector of $\rho$, whose entries are
defined by $\mu_{j}^{\rho}\equiv\langle\hat{R}_{j}\rangle_{\rho}$ and
$V^{\rho}$ is the $2m\times2m$ covariance matrix of $\rho$, whose entries are
defined as%
\begin{equation}
V_{j,k}^{\rho}\equiv\langle\{\hat{R}_{j}-\mu_{j}^{\rho},\hat{R}_{k}-\mu
_{k}^{\rho}\}\rangle_{\rho}.
\end{equation}
The following condition holds for a valid covariance matrix: $V\geq i\Omega$,
which is a manifestation of the uncertainty principle.

A thermal Gaussian state $\theta_{\beta}$ of $m$ modes with respect to
$\hat{E}_{m}$ from \eqref{eq:photon-num-op-freqs}\ and having inverse
temperature $\beta>0$ thus has the following form:%
\begin{equation}
\theta_{\beta}=e^{-\beta\hat{E}_{m}}/\operatorname{Tr}\{e^{-\beta\hat{E}_{m}%
}\}, \label{eq:thermal-E-m-op}%
\end{equation}
and has a mean vector equal to zero and a diagonal $2m\times2m$ covariance
matrix. One can calculate that the photon number in this state is equal to%
\begin{equation}
\sum_{j}\frac{1}{e^{\beta\omega_{j}}-1}.
\end{equation}
It is also well known that thermal states can be written as a Gaussian mixture
of displacement operators acting on the vacuum state:%
\begin{equation}
\theta_{\beta}=\int d^{2m}\xi\ p(\xi)\ D(\xi)\left[  |0\rangle\langle
0|\right]  ^{\otimes m}D^{\dag}(\xi),
\end{equation}
where $p(\xi)$ is a zero-mean, circularly symmetric Gaussian distribution.
From this, it also follows that randomly displacing a thermal state in such a
way leads to another thermal state of higher temperature:%
\begin{equation}
\theta_{\beta}=\int d^{2m}\xi\ q(\xi)\ D(\xi)\theta_{\beta^{\prime}}D^{\dag
}(\xi), \label{eq:displaced-thermal-is-thermal}%
\end{equation}
where $\beta^{\prime}\geq\beta$ and $q(\xi)$ is a particular circularly
symmetric Gaussian distribution.

A $2m\times2m$ matrix $S$ is symplectic if it preserves the symplectic form:
$S\Omega S^{T}=\Omega$. According to Williamson's theorem \cite{W36}, there is
a diagonalization of the covariance matrix $V^{\rho}$ of the form,
\begin{equation}
V^{\rho}=S^{\rho}\left(  D^{\rho}\oplus D^{\rho}\right)  \left(  S^{\rho
}\right)  ^{T},
\end{equation}
where $S^{\rho}$ is a symplectic matrix and $D^{\rho}\equiv\operatorname{diag}%
(\nu_{1},\ldots,\nu_{m})$ is a diagonal matrix of symplectic eigenvalues such
that $\nu_{i}\geq1$ for all $i\in\left\{  1,\ldots,m\right\}  $. Computing
this decomposition is equivalent to diagonalizing the matrix $iV^{\rho}\Omega$
\cite[Appendix~A]{WTLB16}.

The entropy $H(\rho)$\ of a quantum Gaussian state $\rho$\ is a direct
function of the symplectic eigenvalues of its covariance matrix $V^{\rho}$
\cite{CEGH08}:%
\begin{equation}
H(\rho)=\sum_{j=1}^{m}g((\nu_{j}-1)/2)\equiv g(V^{\rho}),
\end{equation}
where $g(\cdot)$ is defined in \eqref{eq:g-function}\ and we have indicated a
shorthand for this entropy as $g(V^{\rho})$.

The Hilbert--Schmidt adjoint of a Gaussian quantum channel $\mathcal{N}_{X,Y}$\ from $m$ modes to $m$ modes
has the following effect on a displacement operator $D(z)$ \cite{CEGH08}:%
\begin{equation}
D(z)\longmapsto D(Xz)\exp\left(  -\frac{1}{4}z^{T}Yz+iz^{T}d\right)  ,
\end{equation}
where $X$ is a real $2m\times2m$ matrix, $Y$ is a real $2m\times2m$ positive
semi-definite matrix, and $d\in\mathbb{R}^{2m}$, such that they satisfy%
\begin{equation}
Y-i\Omega+iX^{T}\Omega X\geq0.
\end{equation}
The effect of the channel on the mean vector $\mu^{\rho}$ and the covariance
matrix $V^{\rho}$\ is thus as follows:%
\begin{align}
\mu^{\rho}  &  \longmapsto X^{T}\mu^{\rho}+d,\\
V^{\rho}  &  \longmapsto X^{T}V^{\rho}X+Y.
\end{align}
All Gaussian channels are covariant with respect to displacement operators.
That is, the following relation holds%
\begin{equation}
\mathcal{N}_{X,Y}(D(z)\rho D^{\dag}(z))=D(X^{T}z)\mathcal{N}_{X,Y}%
(\rho)D^{\dag}(X^{T}z). \label{eq:covariance-gaussian}%
\end{equation}

Just as every quantum channel can be implemented as a unitary transformation
on a larger space followed by a partial trace, so can Gaussian channels be
implemented as a Gaussian unitary on a larger space with some extra modes
prepared in the vacuum state, followed by a partial trace \cite{CEGH08}. Given
a Gaussian channel $\mathcal{N}_{X,Y}$\ with $Z$ such that $Y=ZZ^{T}$ we can
find two other matrices $X_{E}$ and $Z_{E}$ such that there is a symplectic
matrix
\begin{equation}
S=%
\begin{bmatrix}
X^{T} & Z\\
X_{E}^{T} & Z_{E}%
\end{bmatrix}
, \label{eq:gaussian-dilation}%
\end{equation}
which corresponds to the Gaussian unitary transformation on a larger space.
The complementary channel $\mathcal{\hat{N}}_{X_{E},Y_{E}}$\ from input to the
environment then effects the following transformation on mean vectors and
covariance matrices:%
\begin{align}
\mu^{\rho}  &  \longmapsto X_{E}^{T}\mu^{\rho},\\
V^{\rho}  &  \longmapsto X_{E}^{T}V^{\rho}X_{E}+Y_{E},
\end{align}
where $Y_{E}\equiv Z_{E}Z_{E}^{T}$.

A quantum Gaussian channel for which $X=X^{\prime}\oplus X^{\prime}$,
$Y=Y^{\prime}\oplus Y^{\prime}$, and $d=d^{\prime}\oplus d^{\prime}$ is known
as a phase-insensitive Gaussian channel, because it does not have a bias to
either quadrature when applying noise to the input state.

The main result of this section is the following theorem, which gives an
explicit expression for the energy-constrained capacities of all
phase-insensitive degradable Gaussian channels that satisfy the conditions of
Theorem~\ref{thm:thermal-optimal-degrad} for all $\beta>0$:

\begin{theorem}
\label{thm:PI-Gauss-degrad-caps}Let $\mathcal{N}_{X,Y}$ be a phase-insensitive
degradable Gaussian channel, having a dilation of the form in
\eqref{eq:gaussian-dilation}. Suppose that $\mathcal{N}_{X,Y}$\ satisfies the
conditions of Theorem~\ref{thm:thermal-optimal-degrad} for all $\beta>0$. Then
its energy-constrained capacities $Q(\mathcal{N}_{X,Y},\hat{E}_{m},P)$,
$E(\mathcal{N}_{X,Y},\hat{E}_{m},P)$, $P(\mathcal{N}_{X,Y},\hat{E}_{m},P)$,
and $K(\mathcal{N}_{X,Y},\hat{E}_{m},P)$ are equal and given by the following
formula:%
\begin{equation}
g(X^{T}V^{\theta_{\beta}}X+Y)-g(X_{E}^{T}V^{\theta_{\beta}}X_{E}+Y_{E}),
\end{equation}
where $\theta_{\beta}$ is a thermal state of mean photon number $P$.
\end{theorem}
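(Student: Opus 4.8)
The plan is to assemble the theorem from three ingredients that are already in place: the single-letter capacity formula for degradable channels (Theorem~\ref{thm:-energy-constr-q-p-cap}), the optimality of the thermal-state input at fixed energy (Theorem~\ref{thm:thermal-optimal-degrad}), and the explicit Gaussian entropy formula $H(\rho)=g(V^{\rho})$ together with the covariance-matrix transformation rules. First I would note that a phase-insensitive Gaussian channel $\mathcal{N}_{X,Y}$ satisfies Condition~\ref{cond:finite-out-entropy} with respect to $\hat{E}_{m}$ (as recorded at the start of this section) and is degradable by hypothesis. Hence Theorem~\ref{thm:-energy-constr-q-p-cap} applies, and all four capacities $Q$, $E$, $P$, and $K$ coincide with the single-letter quantity
\begin{equation}
\sup_{\rho:\operatorname{Tr}\{\hat{E}_{m}\rho\}\leq P}H(\mathcal{N}_{X,Y}(\rho))-H(\mathcal{\hat{N}}_{X_{E},Y_{E}}(\rho)).
\end{equation}
It then remains only to evaluate this supremum.

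The main point of care is passing from the inequality constraint $\operatorname{Tr}\{\hat{E}_{m}\rho\}\leq P$ to the equality constraint $\operatorname{Tr}\{\hat{E}_{m}\rho\}=P$ handled by Theorem~\ref{thm:thermal-optimal-degrad}. I would do this by showing that the optimized coherent information is monotone non-decreasing in the energy, using displacement covariance and concavity. Concretely, for any $\xi\in\mathbb{R}^{2m}$, the covariance relation \eqref{eq:covariance-gaussian} applied to both $\mathcal{N}_{X,Y}$ and its (also Gaussian) complement, combined with invariance of entropy under the unitaries $D(X\xi)$ and $D(X_{E}\xi)$, gives
\begin{equation}
I_{c}(D(\xi)\rho D^{\dag}(\xi),\mathcal{N}_{X,Y})=I_{c}(\rho,\mathcal{N}_{X,Y}).
\end{equation}
Writing a higher-temperature thermal state as a Gaussian mixture of displaced lower-temperature thermal states via \eqref{eq:displaced-thermal-is-thermal} and invoking concavity of the coherent information of degradable channels (Proposition~\ref{prop:concave-degrad}, in a form extended to continuous mixtures) then yields $I_{c}(\theta_{\beta},\mathcal{N}_{X,Y})\geq I_{c}(\theta_{\beta'},\mathcal{N}_{X,Y})$ whenever $\theta_{\beta}$ has the larger mean energy (i.e., $\beta\leq\beta'$). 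Combining this monotonicity with the fixed-energy optimality of Theorem~\ref{thm:thermal-optimal-degrad} (whose hypotheses hold for $\mathcal{N}_{X,Y}$ by assumption), any $\rho$ with $\operatorname{Tr}\{\hat{E}_{m}\rho\}=P'\leq P$ obeys $I_{c}(\rho,\mathcal{N}_{X,Y})\leq I_{c}(\theta_{\beta'},\mathcal{N}_{X,Y})\leq I_{c}(\theta_{\beta},\mathcal{N}_{X,Y})$, where $\theta_{\beta'}$ and $\theta_{\beta}$ are the thermal states of energies $P'$ and $P$; hence the supremum over $\operatorname{Tr}\{\hat{E}_{m}\rho\}\leq P$ is attained at the thermal state $\theta_{\beta}$ of energy exactly $P$.

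Finally I would evaluate $I_{c}(\theta_{\beta},\mathcal{N}_{X,Y})=H(\mathcal{N}_{X,Y}(\theta_{\beta}))-H(\mathcal{\hat{N}}_{X_{E},Y_{E}}(\theta_{\beta}))$ using $H(\rho)=g(V^{\rho})$ together with the covariance transformation rules $V^{\theta_{\beta}}\mapsto XV^{\theta_{\beta}}X^{T}+Y$ for the channel and $V^{\theta_{\beta}}\mapsto X_{E}V^{\theta_{\beta}}X_{E}^{T}+Y_{E}$ for its complement, which immediately produces
\begin{equation}
g(XV^{\theta_{\beta}}X^{T}+Y)-g(X_{E}V^{\theta_{\beta}}X_{E}^{T}+Y_{E}).
\end{equation}
I expect the genuinely delicate step to be the continuous-mixture concavity argument of the preceding paragraph: Proposition~\ref{prop:concave-degrad} is stated for a two-point mixture with all relevant entropies finite, so I would need to justify its extension to the Gaussian integral in \eqref{eq:displaced-thermal-is-thermal}, confirm finiteness of the output entropies (which follows from Condition~\ref{cond:finite-out-entropy} and Lemma~\ref{lem:env-out-ent}), and dispose of the boundary case $P=0$ (vacuum input) separately, exactly as was done in the proof of Theorem~\ref{thm:coh-info-ach}.
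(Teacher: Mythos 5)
Your proposal is correct and follows essentially the same route as the paper's own proof: Theorem~\ref{thm:-energy-constr-q-p-cap} reduces all four capacities to the single-letter energy-constrained coherent information, Theorem~\ref{thm:thermal-optimal-degrad} gives thermal optimality at fixed energy, and the passage from the inequality constraint to the equality constraint is handled by exactly the paper's monotonicity-in-energy argument, combining displacement covariance \eqref{eq:covariance-gaussian}, concavity of coherent information for degradable channels (Proposition~\ref{prop:concave-degrad}), and the displaced-thermal identity \eqref{eq:displaced-thermal-is-thermal}. Your added cautions about extending the two-point concavity statement to continuous Gaussian mixtures and about the boundary case $P=0$ are refinements of points the paper glosses over, not deviations in approach.
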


\begin{proof}
Since the channel is degradable, satisfies
Condition~\ref{cond:finite-out-entropy}, and $\hat{E}_{m}$ is a Gibbs
observable, Theorem~\ref{thm:-energy-constr-q-p-cap}\ applies and these
capacities are given by the following formula:%
\begin{equation}
\sup_{\rho:\operatorname{Tr}\{\hat{E}_{m}\rho\}\leq P}H(\mathcal{N}_{X,Y}%
(\rho))-H(\mathcal{\hat{N}}_{X_{E},Y_{E}}(\rho)).
\end{equation}
By assumption, the channel satisfies the conditions of
Theorem~\ref{thm:thermal-optimal-degrad}\ as well for all $\beta>0$, so that
the following function is optimized by a thermal state $\theta_{\beta}$ of
mean photon number $P$:%
\begin{multline}
\sup_{\rho:\operatorname{Tr}\{\hat{E}_{m}\rho\}=P}H(\mathcal{N}_{X,Y}%
(\rho))-H(\mathcal{\hat{N}}_{X_{E},Y_{E}}(\rho))\\
=H(\mathcal{N}_{X,Y}(\theta_{\beta}))-H(\mathcal{\hat{N}}_{X_{E},Y_{E}}%
(\theta_{\beta})).
\end{multline}
It thus remains to prove that $H(\mathcal{N}_{X,Y}(\theta_{\beta
}))-H(\mathcal{\hat{N}}_{X_{E},Y_{E}}(\theta_{\beta}))$ is increasing with
decreasing $\beta$. This follows from the covariance property in
\eqref{eq:covariance-gaussian}, the concavity of coherent information in the
input for degradable channels (Proposition~\ref{prop:concave-degrad}), and the
fact that thermal states can be realized by random Gaussian displacements of
thermal states with lower temperature. Consider that%
\begin{align}
&  H(\mathcal{N}_{X,Y}(\theta_{\beta^{\prime}}))-H(\mathcal{\hat{N}}%
_{X_{E},Y_{E}}(\theta_{\beta^{\prime}}))\nonumber\\
&  =\int d^{2m}\xi\ q(\xi)\ \left[  H(\mathcal{N}_{X,Y}(\theta_{\beta^{\prime
}}))-H(\mathcal{\hat{N}}_{X_{E},Y_{E}}(\theta_{\beta^{\prime}}))\right] \\
&  =\int d^{2m}\xi\ q(\xi)\ \Big[H(D(X\xi)\mathcal{N}_{X,Y}(\theta
_{\beta^{\prime}})D^{\dag}(X\xi))\nonumber\\
&  \qquad-H(D(X_{E}\xi)\mathcal{\hat{N}}_{X_{E},Y_{E}}(\theta_{\beta^{\prime}%
})D^{\dag}(X_{E}\xi))\Big]\\
&  =\int d^{2m}\xi\ q(\xi)\ \Big[H(\mathcal{N}_{X,Y}(D(\xi)\theta
_{\beta^{\prime}}D^{\dag}(\xi)))\nonumber\\
&  \qquad-H(\mathcal{\hat{N}}_{X_{E},Y_{E}}(D(\xi)\theta_{\beta^{\prime}%
}D^{\dag}(\xi)))\Big]\\
&  \leq H(\mathcal{N}_{X,Y}(\theta_{\beta}))-H(\mathcal{\hat{N}}_{X_{E},Y_{E}%
}(\theta_{\beta})).
\end{align}
The first equality follows by placing a probability distribution in front, and
the second follows from the unitary invariance of quantum entropy. The third
equality follows from the covariance property of quantum Gaussian channels,
given in \eqref{eq:covariance-gaussian}. The inequality follows because the
coherent information of degradable channels is concave in the input state
(Proposition~\ref{prop:concave-degrad}) and from \eqref{eq:displaced-thermal-is-thermal}.
\end{proof}

\subsection{Special cases:\ Single-mode pure-loss and quantum-limited amplifier channels}

We can now discuss some special cases of the above result, some of which have
already been known in the literature. Suppose that the channel is a
single-mode pure-loss channel $\mathcal{L}_{\eta}$, where $\eta\in\left[
1/2,1\right]  $ characterizes the average fraction of photons that make it
through the channel from sender to receiver \footnote{We do not consider
transmissivities $\eta\in\left[  0,1/2\right]  $ because the quantum capacity
vanishes in this range since the channel becomes antidegradable.}. In this
case, the channel has $X=\sqrt{\eta}I_{2}$ and $Y=(1-\eta)I_{2}$. We take the
Gibbs observable to be the photon-number operator $\hat{a}^{\dag}\hat{a}$ and
the energy constraint to be $N_{S}\in\lbrack0,\infty)$. Such a channel is
degradable \cite{CG06} and was conjectured \cite{GSE08}\ to have
energy-constrained quantum and private capacities equal to%
\begin{equation}
g(\eta N_{S})-g((1-\eta)N_{S}). \label{eq:q-cap-loss}%
\end{equation}
This conjecture was proven for the quantum capacity in \cite[Theorem~8]%
{PhysRevA.86.062306}, and the present paper establishes the statement for
private capacity. This was argued by exploiting particular properties of the
$g$ function (established in great detail in \cite{G08thesis}) to show that
the thermal state input is optimal for any fixed energy constraint. Here we
can see this latter result as a consequence of the more general statements in
Theorems~\ref{thm:thermal-optimal-degrad} and \ref{thm:PI-Gauss-degrad-caps},
which are based on the monotonicity of relative entropy and other properties
of this channel, such as covariance and degradability. Taking the limit
$N_{S}\rightarrow\infty$, the formula in \eqref{eq:q-cap-loss}\ converges to%
\begin{equation}
\log_{2}(\eta/[1-\eta]), \label{eq:loss-unconstrained}%
\end{equation}
which is consistent with the formula stated in \cite{WPG07}.

Suppose that the channel is a single-mode quantum-limited amplifier channel
$\mathcal{A}_{\kappa}$ of gain $\kappa\geq1$. In this case, the channel has
$X=\sqrt{\kappa}I_{2}$ and $Y=(\kappa-1)I_{2}$. Again we take the energy
operator and constraint as above. This channel is degradable \cite{CG06} and
was recently proven \cite{QW16}\ to have energy-constrained quantum and
private capacity equal to%
\begin{equation}
g(\kappa N_{S}+\kappa-1)-g(\left[  \kappa-1\right]  \left[  N_{S}+1\right]  ).
\label{eq:amp-q-cap-constrained}
\end{equation}
The result was established by exploiting particular properties of the $g$
function in addition to other arguments. However, we can again see this result
as a consequence of the more general statements given in
Theorems~\ref{thm:thermal-optimal-degrad} and \ref{thm:PI-Gauss-degrad-caps}.
Taking the limit $N_{S}\rightarrow\infty$, the formula converges to%
\begin{equation}
\log_{2}(\kappa/\left[  \kappa-1\right]  ), \label{eq:amp-unconstrained}%
\end{equation}
which is consistent with the formula stated in \cite{WPG07} and recently
proven in \cite{PLOB15,WTB16}.

\begin{remark}
\label{rem:WPG07}Ref.~\cite{WPG07}\ has been widely accepted to have provided
a complete proof of the unconstrained quantum capacity formulas given in
\eqref{eq:loss-unconstrained} and \eqref{eq:amp-unconstrained}. The important
developments of \cite{WPG07} were to identify that it suffices to optimize
coherent information of these channels with respect to a single channel use
and Gaussian input states. The issue is that \cite{WPG07}\ relied on an
\textquotedblleft optimization procedure carried out in\textquotedblright%
\ \cite{HW01} in order to establish the infinite-energy quantum capacity
formula given there (see just before \cite[Eq.~(12)]{WPG07}). However, a
careful inspection of \cite[Section~V-B]{HW01}\ reveals that no explicit
optimization procedure is given there. The contentious point is that it is
necessary to show that, among all Gaussian states, the thermal state is the
input state optimizing the coherent information of the quantum-limited
attenuator and amplifier channels. This point is not argued or in any way
justified in \cite[Section~V-B]{HW01} or in any subsequent work or review on
the topic \cite{Holevo2007,CGH06,HG12,H12}. As a consequence, we have been
left to conclude that the proof from \cite{WPG07} features a gap which was
subsequently closed in \cite[Section III-G-1]{PhysRevA.86.062306} and
\cite{QW16}. The result in \cite{PLOB15,WTB16} gives a completely different
approach for establishing the unconstrained quantum and private capacities of
the quantum-limited amplifier channel, which preceded the development in
\cite{QW16}.
\end{remark}

\subsection{Special cases:\ Multi-mode pure-loss and quantum-limited amplifier channels}

Our results from Theorems~\ref{thm:thermal-optimal-degrad} and
\ref{thm:PI-Gauss-degrad-caps} allow for making more general statements,
applicable to broadband scenarios considered in prior works for other
capacities \cite{GLMS03,GGLMSY04,Guha04}. Let the Gibbs observable be $\hat
{E}_{m}$, as given in \eqref{eq:photon-num-op-freqs}, and suppose that the
energy constraint is $P\in\lbrack0,\infty)$. Suppose that the channel is an
$m$-mode channel consisting of $m$ parallel pure-loss channels $\mathcal{L}%
_{\eta}$, each\ with the same transmissivity $\eta\in\left[  1/2,1\right]  $.
Then for $\hat{E}_{m}$\ and such an $m$-mode channel, the conditions of
Theorems~\ref{thm:thermal-optimal-degrad} and \ref{thm:PI-Gauss-degrad-caps}%
\ are satisfied, so that the energy-constrained quantum and private capacities
are given by%
\begin{equation}
\sum_{j=1}^{m}g(\eta N_{j}(\beta))-g((1-\eta)N_{j}(\beta)),
\end{equation}
where%
\begin{equation}
N_{s}(\beta)\equiv1/(e^{\beta\omega_{s}}-1),
\end{equation}
and $\beta$ is chosen such that $P=\sum_{j=1}^{m}N_{j}(\beta)$, so that the
energy constraint is satisfied. A similar statement applies to $m$ parallel
quantum-limited amplifier channels each having the same gain $\kappa\geq1$. In
this case, the conditions of Theorems~\ref{thm:thermal-optimal-degrad} and
\ref{thm:PI-Gauss-degrad-caps}\ are satisfied, so that the energy-constrained
quantum and private capacities are given by%
\begin{equation}
\sum_{j=1}^{m}g(\kappa N_{j}(\beta)+\kappa-1)-g(\left[  \kappa-1\right]
\left[  N_{j}(\beta)+1\right]  ),
\end{equation}
where $N_{j}(\beta)$ is as defined above and $\beta$ is chosen to satisfy
$P=\sum_{j=1}^{m}N_{j}(\beta)$.

Theorems~\ref{thm:thermal-optimal-degrad} and \ref{thm:PI-Gauss-degrad-caps}%
\ can be applied indirectly to a more general scenario. Let $m=k+l$, where $k$
and $l$ are positive integers. Suppose that the channel consists of $k$
pure-loss channels $\mathcal{L}_{\eta_{i}}$, each of transmissivity $\eta
_{i}\in\lbrack1/2,1]$, and $l$ quantum-limited amplifier channels
$\mathcal{A}_{\kappa_{j}}$, each of gain $\kappa_{j}$ for $j\in\left\{
1,\ldots,l\right\}  $. In this scenario,
Theorems~\ref{thm:thermal-optimal-degrad} and \ref{thm:PI-Gauss-degrad-caps}
apply to the individual channels, so that we know that a thermal state is the
optimal input to each of them for a fixed input energy. The task is then to
determine how to allocate the energy such that the resulting capacity is
optimal. Let $P$ denote the total energy budget, and suppose that a particular
allocation $\{\{N_{i}\}_{i=1}^{k},\{M_{j}\}_{j=1}^{l}\}$ is made such that%
\begin{equation}
P=\sum_{i=1}^{k}\omega_{i}N_{i}+\sum_{j=1}^{m}\omega_{j}M_{j}.
\end{equation}
Then Theorems~\ref{thm:thermal-optimal-degrad} and
\ref{thm:PI-Gauss-degrad-caps} apply to the scenario when the allocation is
fixed and imply that the resulting quantum and private capacities are equal
and given by%
\begin{multline}
\sum_{i=1}^{k}g(\eta_{i}N_{i})-g((1-\eta_{i})N_{i})\\
+\sum_{j=1}^{l}g(\kappa M_{j}+\kappa-1)-g(\left[  \kappa-1\right]  \left[
M_{j}+1\right]  ).
\end{multline}
However, we can then optimize this expression with respect to the energy
allocation, leading to the following constrained optimization problem:%
\begin{multline}
\max_{\{\{N_{i}\}_{i=1}^{k},\{N_{j}\}_{j=1}^{l}\}}\sum_{i=1}^{k}g(\eta
_{i}N_{i})-g((1-\eta_{i})N_{i})\\
+\sum_{j=1}^{l}g(\kappa M_{j}+\kappa-1)-g(\left[  \kappa-1\right]  \left[
M_{j}+1\right]  ),
\end{multline}
such that%
\begin{equation}
P=\sum_{i=1}^{k}\omega_{i}N_{i}+\sum_{j=1}^{m}\omega_{j}M_{j}.
\end{equation}
This problem can be approached using Lagrange multiplier methods, and in some
cases handled analytically, while others need to be handled numerically. Many
different scenarios were considered already in \cite{GLMS03}, to which we
point the interested reader. However, we should note that \cite{GLMS03} was
developed when the formulas above were only conjectured to be equal to the
capacity and not proven to be so.

\section{Discussion of non-Gaussian channels and other energy constraints}

\label{sec:non-Gaussian}

We stress again here that the framework for energy-constrained quantum and private capacity given in this paper applies to more general situations beyond bosonic Gaussian channels with photon number constraints, just as the frameworks from \cite{H03,H04,HS06,HS12} do for other kinds of communication capacities. All we require for our theorems to apply is that the energy observable be a Gibbs observable (Definition~\ref{def:Gibbs-obs}) and the channel satisfy the finite output-entropy condition (Condition~\ref{cond:finite-out-entropy}).

There are some interesting cases to consider. For example, it could be the case that the initial state of the environment in a thermal channel has not reached its equilibrium state and is in a non-Gaussian state different from a thermal state. This kind of channel is related to those presented and analyzed recently in \cite{SW17}. If the initial environment state is an approximate thermal state (has trace distance close to a thermal state of a certain photon number), then the tools of the present paper, as well as those detailed in the recent work \cite{SWAT17}, could be used to estimate the quantum and private capacity of this non-equilibrium thermal channel.

Even in the bosonic setting, one could also consider other energy observables besides photon number observables. For example, one could consider the square or higher powers of the photon number observables, which might be relevant in situations in which the transmitter is highly sensitive to higher photon numbers. Using the square of the photon number would penalize higher photon numbers more severely than the typical photon number constraint.

For the case of the pure-loss  and quantum-limited amplifier channels, we can give concrete bounds for energy-constrained quantum and private capacity, using $\hat{n}^2$ as the energy observable, by employing an idea put forward recently in \cite[Remark~21]{SWAT17}, as well as other arguments. Suppose that the Gibbs observable is now $\hat{n}^2$ (the square of the photon number operator). We first discuss how to obtain an upper bound on the capacities. Due to these channels being degradable, Theorem~\ref{thm:-energy-constr-q-p-cap} applies, and it suffices to consider optimizing the single-copy energy-constrained coherent information in \eqref{eq:coh-info-1-letter}, subject to the constraint $\operatorname{Tr}\{\hat{n}^2 \rho\} \leq P$ on the input state $\rho$. By concavity of the square-root function, and due to the fact that $\operatorname{Tr}\{\hat{n}^2 \rho\} = \sum_{n=0}^\infty p(n) n^2$ for some probability distribution $p(n)$, it follows that every state satisfying $\operatorname{Tr}\{\hat{n}^2 \rho\} \leq P$ also satisfies $\operatorname{Tr}\{\hat{n} \rho\} \leq \sqrt{P}$. Setting $N_S = \sqrt{P}$, we then find that the formulas in \eqref{eq:q-cap-loss} and \eqref{eq:amp-q-cap-constrained} with this value of $N_S$ give an upper bound on the capacities.

To find a lower bound on the capacities, we can optimize the single-copy energy-constrained coherent information in \eqref{eq:coh-info-1-letter} with respect to all Gaussian state inputs.
The coherent information can also be rewritten in this case as a particular conditional entropy (see \cite{cmp2005dev,WPG07}) that  is a function of the input state $\rho$. 
Now we apply an argument from \cite[Remark~21]{SWAT17}. The pure-loss and quantum-limited amplifier channels and their complementary channels are phase-covariant, meaning that a unitary phase operator $e^{i \hat{n} \phi}$ acting on the input state commutes with the channels and acts as a unitary phase operator on the output. Since for any state $\rho$, the value of $\operatorname{Tr}\{\hat{n}^2 \rho\}$ is unchanged by applying a random phase to $\rho$, but the conditional entropy does not decrease under this operation and the phase-randomized state becomes number diagonal, it suffices to perform the optimization over all states that are both Gaussian (by assumption) and number diagonal. For a single-mode state, the only such possibility is a thermal state. Finally, since the functions in \eqref{eq:q-cap-loss} and \eqref{eq:amp-q-cap-constrained} are equal to the coherent informations of these channels when sending in a thermal state of mean photon number $N_S$, and these functions are monotone increasing with respect to $N_S$, it suffices to pick a thermal state $\theta(N_S)$ of mean photon number $N_S$ that meets the energy constraint $P$ with equality. Since $\operatorname{Tr}\{\hat{n}^2 \theta(N_S)\} = N_S(2N_S + 1)$, by solving the equation $N_S(2N_S + 1) = P$, we find that $N_S = \tfrac{1}{4}(\sqrt{1+8P} - 1)$ and then lower bounds on the energy-constrained quantum and private capacities of these channels are given by \eqref{eq:q-cap-loss} and \eqref{eq:amp-q-cap-constrained} with this value of~$N_S$.

In summary, the energy-constrained quantum and private capacities of the pure-loss channel, with Gibbs observable set to $\hat{n}^2$, are bounded from above by the function in \eqref{eq:q-cap-loss} evaluated at $N_S = \sqrt{P}$ and from below by the formula in \eqref{eq:q-cap-loss} evaluated at $N_S = \tfrac{1}{4}(\sqrt{1+8P} - 1)$. One obtains related bounds for the energy-constrained quantum and private capacities of the quantum-limited amplifier channel, with Gibbs observable set to $\hat{n}^2$, by evaluating the formula in \eqref{eq:amp-q-cap-constrained} at the same values of~$N_S$.

We note that similar arguments can be employed for any power of the photon number operator $\hat{n}$, and one would find bounds for the energy-constrained capacities in a similar way.

Going beyond the bounds given above, it is an intriguing open question to identify the actual capacities with these modified Gibbs observables. In this scenario, it is not clear that the extremality of Gaussian states \cite{WGC06} applies, because the constraint is not on the covariance matrix, but rather on the expectation of a four-point correlator.

\section{Conclusion\label{sec:conclusion}}

This paper has provided a general theory of energy-constrained quantum and
private communication over quantum channels. We defined several communication
tasks (Section~\ref{sec:energy-constrained-caps}), and then established ways
of converting a code for one task to that of another task
(Section~\ref{sec:code-conversions}). These code conversions have implications
for capacities, establishing non-trivial relations between them
(Section~\ref{sec:cap-imps}). We showed that the regularized,
energy-constrained coherent information is achievable for entanglement
transmission with an average energy constraint, under the assumption that the
energy observable is of the Gibbs form (Definition~\ref{def:Gibbs-obs}) and
the channel satisfies the finite-output entropy condition
(Condition~\ref{cond:finite-out-entropy}). We then proved that the various
quantum and private capacities of degradable channels are equal and
characterized by the single-letter, energy-constrained coherent information
(Section~\ref{sec:degradable-channels}). We finally applied our results to
Gaussian channels and recovered some results already known in the literature
in addition to establishing new ones.

We have left open the question of proving that the regularized, energy-constrained private information, defined in \eqref{eq:en-const-priv-info}, is an achievable rate for private communication. We think that this should certainly be possible. One particular method for doing so would be to extend the results of \cite{Hay15} such that they apply to coding with energy constraints and over infinite-dimensional channels. Other approaches, like that along the lines of \cite{H03,H04} for public classical communication, in conjunction with the method from \cite{ieee2005dev,1050633}, could also be employed. The first approach mentioned above could potentially lead to a simpler proof of Theorem~\ref{thm:coh-info-ach} (regarding quantum communication instead of private communication), but the details remain to be worked out. 

Going forward from here, a great challenge is to establish a general theory of
energy-constrained private and quantum communication with a limited number of
channel uses. Recent progress in these scenarios without energy constraints
\cite{TBR15,WTB16}\ suggests that this might be amenable to analysis. Another
question is to identify and explore other physical systems, beyond bosonic
channels, to which the general framework could apply. It could be interesting
to explore generalizations of the results and settings from
\cite{B05,PhysRevLett.95.260503,PhysRevA.75.022331,G13,GG16} regarding fermionic Gaussian channels.
 A more
particular question we would like to see answered is whether concavity of
coherent information of degradable channels could hold in settings beyond that
considered in Proposition~\ref{prop:concave-degrad}. We suspect that an
approximation argument along the lines of that given in the proof of
\cite[Proposition~1]{HS10} should make this possible. We also think it should be possible to establish an equality in Theorem~\ref{thm:-energy-constr-p-cap-regularized}, but we leave this for
future endeavors.

\begin{acknowledgments}
We are grateful to Saikat Guha, Alexander Holevo, Anna Kuznetsova, and Maksim Shirokov for discussions related to
this paper. We thank the anonymous referees for several comments that helped to improve the paper. MMW acknowledges the NSF under Award No.~CCF-1350397, as well as the Office of Naval Research. HQ is
supported by the Air Force Office of Scientific Research, the Army Research
Office, and the National Science Foundation.
\end{acknowledgments}

\appendix

\section{Minimum fidelity and minimum entanglement fidelity}

The following proposition states that a quantum code with good minimum
fidelity implies that it has good minimum entanglement fidelity with
negligible loss in parameters. This was first established in \cite{BKN98}\ and
reviewed in \cite{KW04}. Here we follow the proof available in \cite{Wat16},
which therein established a relation between trace distance and diamond
distance between an arbitrary channel and the identity channel.

\begin{proposition}
\label{prop:min-fid-to-min-ent-fid}Let $\mathcal{C}:\mathcal{T}(\mathcal{H}%
)\rightarrow\mathcal{T}(\mathcal{H})$ be a quantum channel with
finite-dimensional input and output. Let $\mathcal{H}^{\prime}$ be a Hilbert
space isomorphic to $\mathcal{H}$. If%
\begin{equation}
\min_{|\phi\rangle\in\mathcal{H}}\langle\phi|\mathcal{C}(|\phi\rangle
\langle\phi|)|\phi\rangle\geq1-\varepsilon, \label{eq:min-fid-starting-pnt}%
\end{equation}
then%
\begin{equation}
\min_{|\psi\rangle\in\mathcal{H}^{\prime}\otimes\mathcal{H}}\langle
\psi|(\operatorname{id}_{\mathcal{H}^{\prime}}\otimes\mathcal{C})(|\psi
\rangle\langle\psi|)|\psi\rangle\geq1-2\sqrt{\varepsilon},
\end{equation}
where the optimizations are with respect to state vectors.
\end{proposition}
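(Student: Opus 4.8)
The plan is to first reduce the statement to a lower bound on the entanglement fidelity for an arbitrary reduced input, and then to control this fidelity through the Stinespring dilation of $\mathcal{C}$. Since $|\psi\rangle$ is a pure state, the quantity $\langle\psi|(\operatorname{id}_{\mathcal{H}'}\otimes\mathcal{C})(|\psi\rangle\langle\psi|)|\psi\rangle$ is exactly the fidelity $F(|\psi\rangle\langle\psi|,(\operatorname{id}\otimes\mathcal{C})(|\psi\rangle\langle\psi|))$, and it depends on $|\psi\rangle$ only through its reduced state $\rho\equiv\operatorname{Tr}_{\mathcal{H}'}|\psi\rangle\langle\psi|$ on $\mathcal{H}$; minimizing over state vectors $|\psi\rangle$ is therefore the same as minimizing this entanglement fidelity over all density operators $\rho$. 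Likewise, the hypothesis \eqref{eq:min-fid-starting-pnt} says precisely that $F(|\phi\rangle\langle\phi|,\mathcal{C}(|\phi\rangle\langle\phi|))\geq 1-\varepsilon$ for every unit vector $|\phi\rangle$.

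First I would fix a Stinespring dilation $V:\mathcal{H}\to\mathcal{H}_B\otimes\mathcal{H}_E$ of $\mathcal{C}$ (with $\mathcal{H}_B\simeq\mathcal{H}$), as guaranteed by the Stinespring theorem. For a unit vector $|\phi\rangle$ I would decompose $V|\phi\rangle=|\phi\rangle_B\otimes|\xi_\phi\rangle_E+|\Delta_\phi\rangle$, where $|\Delta_\phi\rangle$ is defined to have no component along $|\phi\rangle_B$, i.e.\ $(\langle\phi|_B\otimes I_E)|\Delta_\phi\rangle=0$. Then $\langle\phi|\mathcal{C}(|\phi\rangle\langle\phi|)|\phi\rangle=\||\xi_\phi\rangle\|^2$, so the hypothesis gives $\||\xi_\phi\rangle\|^2\geq 1-\varepsilon$ and hence $\||\Delta_\phi\rangle\|^2\leq\varepsilon$ for all $|\phi\rangle$. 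Writing a Schmidt decomposition $|\psi\rangle=\sum_i\sqrt{p_i}\,|i\rangle\otimes|i\rangle$ (so $\rho=\sum_i p_i|i\rangle\langle i|$) and applying $I\otimes V$, a short computation in which the cross terms drop out by the defining orthogonality of the $|\Delta_i\rangle$ shows that the entanglement fidelity equals $\big\|\sum_i p_i|\xi_i\rangle\big\|^2$, where $|\xi_i\rangle\equiv|\xi_{|i\rangle}\rangle$.

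It then remains to bound $\big\|\sum_i p_i|\xi_i\rangle\big\|^2=\sum_{i,j}p_ip_j\operatorname{Re}\langle\xi_i|\xi_j\rangle$ from below. The diagonal terms are already controlled, $\langle\xi_i|\xi_i\rangle\geq 1-\varepsilon$. To control the off-diagonal overlaps I would apply the hypothesis to the superpositions $(|i\rangle+e^{i\theta}|j\rangle)/\sqrt{2}$: expanding $V$ on such a vector produces $\tfrac12|\xi_i\rangle+\tfrac12|\xi_j\rangle$ plus cross terms of the form $(\langle i|_B\otimes I)|\Delta_j\rangle$ whose norms are at most $\sqrt{\varepsilon}$, so the lower bound $\|\xi_\phi\|^2\geq 1-\varepsilon$ forces $\operatorname{Re}\langle\xi_i|\xi_j\rangle\geq 1-O(\sqrt{\varepsilon})$ after optimizing the phase $\theta$. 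Summing against the weights $p_ip_j$ yields $F(|\psi\rangle\langle\psi|,(\operatorname{id}\otimes\mathcal{C})(|\psi\rangle\langle\psi|))\geq 1-2\sqrt{\varepsilon}$, uniformly in $\rho$, which is the claim.

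The main obstacle is this last step: the passage from the unentangled-input condition to the entangled-input bound is exactly where dimension-independence and the square-root loss enter, and getting the sharp constant $2$ (rather than a cruder $O(\sqrt{\varepsilon})$) requires handling the phase optimization and the $|\Delta_\phi\rangle$ contributions carefully. Equivalently, one can package the second and third steps as Watrous's relation between the trace distance and the diamond distance of a channel from the identity: converting \eqref{eq:min-fid-starting-pnt} to $\tfrac12\|\mathcal{C}(|\phi\rangle\langle\phi|)-|\phi\rangle\langle\phi|\|_1\leq\sqrt{\varepsilon}$ via the right-hand Fuchs--van de Graaf inequality in \eqref{eq:F-vd-G}, bounding $\tfrac12\|\mathcal{C}-\operatorname{id}\|_\diamond\leq\sqrt{1-F_{\min}}$ with $F_{\min}$ the minimum pure-state fidelity, and reading off that every entangled input is disturbed by at most $\sqrt{\varepsilon}$ in trace distance, whence the entanglement fidelity is at least $1-2\sqrt{\varepsilon}$.
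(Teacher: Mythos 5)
Your Stinespring route is genuinely different from the paper's proof, and its skeleton is correct: the reduction to reduced inputs, the identity
\begin{equation*}
\langle\psi|(\operatorname{id}\otimes\mathcal{C})(|\psi\rangle\langle\psi|)|\psi\rangle=\Bigl\Vert \sum_i p_i|\xi_i\rangle\Bigr\Vert^2
\end{equation*}
(the cross terms vanish because $(\langle i|_B\otimes I_E)|\Delta_i\rangle=0$), and the bounds $\Vert|\xi_i\rangle\Vert^2\geq1-\varepsilon$, $\Vert|\Delta_i\rangle\Vert^2\leq\varepsilon$ all check out. The gap is exactly the step you flagged, and it is a real one as written. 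Setting $|d_{ij}\rangle\equiv(\langle i|_B\otimes I_E)|\Delta_j\rangle$, the hypothesis applied to $(|i\rangle+e^{i\theta}|j\rangle)/\sqrt{2}$ reads $\frac{1}{4}\Vert|\xi_i\rangle+|\xi_j\rangle+e^{i\theta}|d_{ij}\rangle+e^{-i\theta}|d_{ji}\rangle\Vert^2\geq1-\varepsilon$; peeling off the $d$ terms with the triangle inequality gives only $\operatorname{Re}\langle\xi_i|\xi_j\rangle\geq1-4\sqrt{\varepsilon(1-\varepsilon)}$, hence a final bound $1-4\sqrt{\varepsilon}$, and optimizing a single phase $\theta$ does not repair the constant. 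What does repair it is \emph{averaging} the hypothesis over $\theta$ (the four phases $\theta=k\pi/2$, $k=0,1,2,3$, suffice): the terms carrying $e^{\pm i\theta}$ and $e^{\pm 2i\theta}$ average to zero, the residual $d$ contribution is at most $\Vert d_{ij}\Vert^2+\Vert d_{ji}\Vert^2\leq2\varepsilon$, and one obtains $\Vert|\xi_i\rangle+|\xi_j\rangle\Vert^2\geq4-6\varepsilon$, so $\operatorname{Re}\langle\xi_i|\xi_j\rangle\geq1-3\varepsilon$ and the entanglement fidelity is at least $1-3\varepsilon$. This completes your argument, since $1-3\varepsilon\geq1-2\sqrt{\varepsilon}$ whenever $\varepsilon\leq4/9$, while for $\varepsilon>1/4$ the claimed bound is vacuous; indeed your route then proves a stronger, $O(\varepsilon)$ (Barnum--Knill--Nielsen-type) statement rather than the $O(\sqrt{\varepsilon})$ one.

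For comparison, the paper never dilates: it converts \eqref{eq:min-fid-starting-pnt} into $\Vert|\phi\rangle\langle\phi|-\mathcal{C}(|\phi\rangle\langle\phi|)\Vert_1\leq2\sqrt{\varepsilon}$ via \eqref{eq:F-vd-G}, bounds the off-diagonal elements $|\langle\phi|[|\phi\rangle\langle\phi^{\bot}|-\mathcal{C}(|\phi\rangle\langle\phi^{\bot}|)]|\phi^{\bot}\rangle|$ by $2\sqrt{\varepsilon}$ using the polarization identity over the four vectors $(|\phi\rangle+i^k|\phi^{\bot}\rangle)/\sqrt{2}$ together with $\Vert A\Vert_\infty\leq\frac{1}{2}\Vert A\Vert_1$ for traceless Hermitian $A$, and then expands $1-\langle\psi|(\operatorname{id}\otimes\mathcal{C})(|\psi\rangle\langle\psi|)|\psi\rangle$ in the Schmidt basis, from which the constant $2$ falls out directly; your four-phase averaging is precisely the dilation-level analogue of that polarization step. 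Regarding your packaged alternative: citing the trace-distance/diamond-distance relation is legitimate but not independent of the paper --- the paper's proof \emph{is} an adaptation of the proof of that relation from the Watrous source it cites --- and the constant there needs care with fidelity conventions. If the citable form is $\frac{1}{2}\Vert\mathcal{C}-\operatorname{id}\Vert_\diamond\leq2\sqrt{\varepsilon}$ rather than $\sqrt{\varepsilon}$, the left inequality of \eqref{eq:F-vd-G} yields only $(1-2\sqrt{\varepsilon})^2$; to recover $1-2\sqrt{\varepsilon}$ you should instead use the pure-state estimate $\langle\psi|\rho|\psi\rangle\geq1-\frac{1}{2}\Vert|\psi\rangle\langle\psi|-\rho\Vert_1$.
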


\begin{proof}
The inequality in \eqref{eq:min-fid-starting-pnt} implies that the following
inequality holds for all state vectors $|\phi\rangle\in\mathcal{H}$:%
\begin{equation}
\langle\phi|\left[  |\phi\rangle\langle\phi|-\mathcal{C}(|\phi\rangle
\langle\phi|)\right]  |\phi\rangle\leq\varepsilon.
\end{equation}
By the inequalities in \eqref{eq:F-vd-G}, this implies that%
\begin{equation}
\left\Vert |\phi\rangle\langle\phi|-\mathcal{C}(|\phi\rangle\langle
\phi|)\right\Vert _{1}\leq2\sqrt{\varepsilon},
\label{eq:min-to-ent-t-norm-bnd}%
\end{equation}
for all state vectors $|\phi\rangle\in\mathcal{H}$. We will show that%
\begin{equation}
\left\vert \langle\phi|\left[  |\phi\rangle\langle\phi^{\bot}|-\mathcal{C}%
(|\phi\rangle\langle\phi^{\bot}|)\right]  |\phi^{\bot}\rangle\right\vert
\leq2\sqrt{\varepsilon}, \label{eq:ortho-pairs-fid-bnd}%
\end{equation}
for every orthonormal pair $\left\{  |\phi\rangle,|\phi^{\bot}\rangle\right\}
$ of state vectors in$~\mathcal{H}$. Set%
\begin{equation}
|w_{k}\rangle\equiv\frac{|\phi\rangle+i^{k}|\phi^{\bot}\rangle}{\sqrt{2}}%
\end{equation}
for $k\in\{0,1,2,3\}$. Then it follows that%
\begin{equation}
|\phi\rangle\langle\phi^{\bot}|=\frac{1}{2}\sum_{k=0}^{3}i^{k}|w_{k}%
\rangle\langle w_{k}|. \label{eq:ortho-pairs-char}%
\end{equation}
Consider now that%
\begin{align}
&  \left\vert \langle\phi|\left[  |\phi\rangle\langle\phi^{\bot}%
|-\mathcal{C}(|\phi\rangle\langle\phi^{\bot}|)\right]  |\phi^{\bot}%
\rangle\right\vert \nonumber\\
&  \leq\left\Vert |\phi\rangle\langle\phi^{\bot}|-\mathcal{C}(|\phi
\rangle\langle\phi^{\bot}|)\right\Vert _{\infty}\\
&  \leq\frac{1}{2}\sum_{k=0}^{3}\left\Vert |w_{k}\rangle\langle w_{k}%
|-\mathcal{C}(|w_{k}\rangle\langle w_{k}|)\right\Vert _{\infty}\\
&  \leq\frac{1}{4}\sum_{k=0}^{3}\left\Vert |w_{k}\rangle\langle w_{k}%
|-\mathcal{C}(|w_{k}\rangle\langle w_{k}|)\right\Vert _{1}\\
&  \leq2\sqrt{\varepsilon}.
\end{align}
The first inequality follows from the characterization of the operator norm as
$\left\Vert A\right\Vert _{\infty}=\sup_{|\phi\rangle,|\psi\rangle}\left\vert
\langle\phi|A|\psi\rangle\right\vert $, where the optimization is with respect
to state vectors $|\varphi\rangle$ and $|\psi\rangle$. The second inequality
follows from substituting \eqref{eq:ortho-pairs-char} and applying the
triangle inequality and homogeneity of the $\infty$-norm. The third inequality
follows because the $\infty$-norm of a traceless Hermitian operator is bounded
from above by half of its trace norm \cite[Lemma~4]{AE05}. The final
inequality follows from applying \eqref{eq:min-to-ent-t-norm-bnd}.

Let $|\psi\rangle\in\mathcal{H}^{\prime}\otimes\mathcal{H}$ be an arbitrary
state vector. All such state vectors have a Schmidt decomposition of the
following form:%
\begin{equation}
|\psi\rangle=\sum_{x}\sqrt{p(x)}|\zeta_{x}\rangle\otimes|\varphi_{x}\rangle,
\end{equation}
where $\{p(x)\}_{x}$ is a probability distribution and $\{|\zeta_{x}%
\rangle\}_{x}$ and $\{|\varphi_{x}\rangle\}_{x}$ are orthonormal sets,
respectively. Then consider that%
\begin{align}
&  1-\langle\psi|(\operatorname{id}_{\mathcal{H}^{\prime}}\otimes
\mathcal{C})(|\psi\rangle\langle\psi|)|\psi\rangle\nonumber\\
&  =\langle\psi|(\operatorname{id}_{\mathcal{H}^{\prime}}\otimes
\operatorname{id}_{\mathcal{H}}-\operatorname{id}_{\mathcal{H}^{\prime}%
}\otimes\mathcal{C})(|\psi\rangle\langle\psi|)|\psi\rangle\nonumber\\
&  =\langle\psi|(\operatorname{id}_{\mathcal{H}^{\prime}}\otimes\left[
\operatorname{id}_{\mathcal{H}}-\mathcal{C}\right]  )(|\psi\rangle\langle
\psi|)|\psi\rangle\nonumber\\
&  =\sum_{x,y}p(x)p(y)\langle\varphi_{x}|\left[  |\varphi_{x}\rangle
\langle\varphi_{y}|-\mathcal{C}(|\varphi_{x}\rangle\langle\varphi
_{y}|)\right]  |\varphi_{y}\rangle.
\end{align}
Now applying the triangle inequality and \eqref{eq:ortho-pairs-fid-bnd}, we
find that%
\begin{align}
&  1-\langle\psi|(\operatorname{id}_{\mathcal{H}^{\prime}}\otimes
\mathcal{C})(|\psi\rangle\langle\psi|)|\psi\rangle\nonumber\\
&  =\left\vert \sum_{x,y}p(x)p(y)\langle\varphi_{x}|\left[  |\varphi
_{x}\rangle\langle\varphi_{y}|-\mathcal{C}(|\varphi_{x}\rangle\langle
\varphi_{y}|)\right]  |\varphi_{y}\rangle\right\vert \nonumber\\
&  \leq\sum_{x,y}p(x)p(y)\left\vert \langle\varphi_{x}|\left[  |\varphi
_{x}\rangle\langle\varphi_{y}|-\mathcal{C}(|\varphi_{x}\rangle\langle
\varphi_{y}|)\right]  |\varphi_{y}\rangle\right\vert \nonumber\\
&  \leq2\sqrt{\varepsilon}.
\end{align}
This concludes the proof.
\end{proof}

\bibliographystyle{unsrt}
\bibliography{Ref}

\begin{thebibliography}{10}

\bibitem{W16}
Mark~M. Wilde.
\newblock {\em From classical to quantum {Shannon} theory}.
\newblock March 2016.
\newblock arXiv:1106.1445v7.

\bibitem{PhysRevA.55.1613}
Seth Lloyd.
\newblock Capacity of the noisy quantum channel.
\newblock {\em Physical Review A}, 55(3):1613--1622, March 1997.
\newblock arXiv:quant-ph/9604015.

\bibitem{capacity2002shor}
Peter~W. Shor.
\newblock The quantum channel capacity and coherent information.
\newblock In {\em Lecture Notes, MSRI Workshop on Quantum Computation}, 2002.

\bibitem{ieee2005dev}
Igor Devetak.
\newblock The private classical capacity and quantum capacity of a quantum
  channel.
\newblock {\em IEEE Transactions on Information Theory}, 51(1):44--55, January
  2005.
\newblock arXiv:quant-ph/0304127.

\bibitem{1050633}
N.~Cai, Andreas Winter, and Raymond~W. Yeung.
\newblock Quantum privacy and quantum wiretap channels.
\newblock {\em Problems of Information Transmission}, 40(4):318--336, October
  2004.

\bibitem{SBCDLP09}
Valerio Scarani, Helle Bechmann-Pasquinucci, Nicolas~J. Cerf, Miloslav
  Du\ifmmode~\check{s}\else \v{s}\fi{}ek, Norbert L\"utkenhaus, and Momtchil
  Peev.
\newblock The security of practical quantum key distribution.
\newblock {\em Reviews of Modern Physics}, 81(3):1301--1350, September 2009.
\newblock arXiv:0802.4155.

\bibitem{W75}
Aaron~D. Wyner.
\newblock The wire-tap channel.
\newblock {\em Bell System Technical Journal}, 54(8):1355--1387, October 1975.

\bibitem{Tan12}
Vincent Y.~F. Tan.
\newblock Achievable second-order coding rates for the wiretap channel.
\newblock In {\em 2012 IEEE International Conference on Communication Systems
  (ICCS)}, pages 65--69, November 2012.

\bibitem{H13}
Masahito Hayashi.
\newblock Tight exponential analysis of universally composable privacy
  amplification and its applications.
\newblock {\em IEEE Transactions on Information Theory}, 59(11):7728--7746,
  November 2013.
\newblock arXiv:1010.1358.

\bibitem{HES14}
Te~Sun Han, Hiroyuki Endo, and Masahide Sasaki.
\newblock Reliability and secrecy functions of the wiretap channel under cost
  constraint.
\newblock {\em IEEE Transactions on Information Theory}, 60(11):6819--6843,
  November 2014.
\newblock arXiv:1307.0608.

\bibitem{Hay15}
Masahito Hayashi.
\newblock Quantum wiretap channel with non-uniform random number and its
  exponent and equivocation rate of leaked information.
\newblock {\em IEEE Transactions on Information Theory}, 61(10):5595--5622,
  October 2015.
\newblock arXiv:1202.0325.

\bibitem{TB16}
Mehrdad Tahmasbi and Matthieu~R. Bloch.
\newblock Second order asymptotics for degraded wiretap channels: How good are
  existing codes?
\newblock In {\em 2016 54th Annual Allerton Conference on Communication,
  Control, and Computing (Allerton)}, pages 830--837, September 2016.

\bibitem{YSP16}
Wei Yang, Rafael~F. Schaefer, and H.~Vincent Poor.
\newblock Finite-blocklength bounds for wiretap channels.
\newblock In {\em 2016 IEEE International Symposium on Information Theory
  (ISIT)}, pages 3087--3091, July 2016.
\newblock arXiv:1601.06055.

\bibitem{EHS16}
Hiroyuki Endo, Te~Sun Han, and Masahide Sasaki.
\newblock Error and secrecy exponents for wiretap channels under two-fold cost
  constraints.
\newblock {\em IEICE Transactions on Fundamentals of Electronics,
  Communications and Computer Sciences}, E99.A(12):2136--2146, December 2016.

\bibitem{PhysRevLett.80.5695}
Benjamin Schumacher and Michael~D. Westmoreland.
\newblock Quantum privacy and quantum coherence.
\newblock {\em Physical Review Letters}, 80(25):5695--5697, June 1998.
\newblock arXiv:quant-ph/9709058.

\bibitem{HHHO05}
Karol Horodecki, {Micha\l{}} Horodecki, {Pawe\l{}} Horodecki, and Jonathan
  Oppenheim.
\newblock Secure key from bound entanglement.
\newblock {\em Physical Review Letters}, 94(16):160502, April 2005.
\newblock arXiv:quant-ph/0309110.

\bibitem{HHHO09}
Karol Horodecki, Michal Horodecki, Pawel Horodecki, and Jonathan Oppenheim.
\newblock General paradigm for distilling classical key from quantum states.
\newblock {\em IEEE Transactions on Information Theory}, 55(4):1898--1929,
  April 2009.
\newblock arXiv:quant-ph/0506189.

\bibitem{PhysRevLett.100.110502}
Karol Horodecki, Micha\l{} Horodecki, Pawe\l{} Horodecki, Debbie Leung, and
  Jonathan Oppenheim.
\newblock Unconditional privacy over channels which cannot convey quantum
  information.
\newblock {\em Physical Review Letters}, 100(11):110502, March 2008.
\newblock arXiv:quant-ph/0702077.

\bibitem{HHHLO08}
Karol Horodecki, Micha\l{} Horodecki, Pawe\l{} Horodecki, Debbie Leung, and
  Jonathan Oppenheim.
\newblock Quantum key distribution based on private states: Unconditional
  security over untrusted channels with zero quantum capacity.
\newblock {\em IEEE Transactions on Information Theory}, 54(6):2604--2620, June
  2008.
\newblock arXiv:quant-ph/0608195.

\bibitem{CEGH08}
Filippo Caruso, Jens Eisert, Vittorio Giovannetti, and Alexander~S. Holevo.
\newblock Multi-mode bosonic {Gaussian} channels.
\newblock {\em New Journal of Physics}, 10:083030, August 2008.
\newblock arXiv:0804.0511.

\bibitem{S17}
Alessio Serafini.
\newblock {\em Quantum Continuous Variables}.
\newblock CRC Press, 2017.

\bibitem{GHG15}
Vittorio Giovannetti, Alexander~S. Holevo, and Raul Garcia-Patron.
\newblock A solution of {Gaussian} optimizer conjecture for quantum channels.
\newblock {\em Communications in Mathematical Physics}, 334(3):1553--1571,
  2015.

\bibitem{GPCH13}
Vittorio Giovannetti, Raul Garcia-Patron, Nicolas~J. Cerf, and Alexander~S.
  Holevo.
\newblock Ultimate classical communication rates of quantum optical channels.
\newblock {\em Nature Photonics}, 8:796--800, September 2014.
\newblock arXiv:1312.6225.

\bibitem{BPWW14}
Bhaskar~Roy Bardhan, Raul Garcia-Patron, Mark~M. Wilde, and Andreas Winter.
\newblock Strong converse for the classical capacity of all phase-insensitive
  bosonic {Gaussian} channels.
\newblock {\em IEEE Transactions on Information Theory}, 61(4):1842--1850,
  April 2015.
\newblock arXiv:1401.4161.

\bibitem{GK04}
Christopher Gerry and Peter Knight.
\newblock {\em Introductory Quantum Optics}.
\newblock Cambridge University Press, November 2004.

\bibitem{HW01}
Alexander~S. Holevo and Reinhard~F. Werner.
\newblock Evaluating capacities of bosonic {Gaussian} channels.
\newblock {\em Physical Review A}, 63(3):032312, February 2001.
\newblock arXiv:quant-ph/9912067.

\bibitem{WPG07}
Michael~M. Wolf, David P\'erez-Garc\'\i{}a, and Geza Giedke.
\newblock Quantum capacities of bosonic channels.
\newblock {\em Physical Review Letters}, 98(13):130501, March 2007.
\newblock arXiv:quant-ph/0606132.

\bibitem{PhysRevA.86.062306}
Mark~M. Wilde, Patrick Hayden, and Saikat Guha.
\newblock Quantum trade-off coding for bosonic communication.
\newblock {\em Physical Review A}, 86(6):062306, December 2012.
\newblock arXiv:1105.0119.

\bibitem{PLOB15}
Stefano Pirandola, Riccardo Laurenza, Carlo Ottaviani, and Leonardo Banchi.
\newblock Fundamental limits of repeaterless quantum communications.
\newblock October 2015.
\newblock arXiv:1510.08863v5.

\bibitem{WTB16}
Mark~M. Wilde, Marco Tomamichel, and Mario Berta.
\newblock Converse bounds for private communication over quantum channels.
\newblock {\em IEEE Transactions on Information Theory}, 63(3):1792--1817,
  March 2017.
\newblock arXiv:1602.08898.

\bibitem{TBR15}
Marco Tomamichel, Mario Berta, and Joseph~M. Renes.
\newblock Quantum coding with finite resources.
\newblock {\em Nature Communications}, 7:11419, May 2016.
\newblock arXiv:1504.04617.

\bibitem{GLMS03}
Vittorio Giovannetti, Seth Lloyd, Lorenzo Maccone, and Peter~W. Shor.
\newblock Broadband channel capacities.
\newblock {\em Physical Review A}, 68(6):062323, December 2003.
\newblock arXiv:quant-ph/0307098.

\bibitem{GSE08}
Saikat Guha, Jeffrey~H. Shapiro, and Baris~I. Erkmen.
\newblock Capacity of the bosonic wiretap channel and the entropy photon-number
  inequality.
\newblock In {\em Proceedings of the IEEE International Symposium on
  Information Theory}, pages 91--95, Toronto, Ontario, Canada, July 2008.
\newblock arXiv:0801.0841.

\bibitem{WHG11}
Mark~M. Wilde, Patrick Hayden, and Saikat Guha.
\newblock Information trade-offs for optical quantum communication.
\newblock {\em Physical Review Letters}, 108(14):140501, April 2012.
\newblock arXiv:1105.0119.

\bibitem{QW16}
Haoyu Qi and Mark~M. Wilde.
\newblock Capacities of quantum amplifier channels.
\newblock {\em Physical Review A}, 95(1):012339, January 2017.
\newblock arXiv:1605.04922.

\bibitem{H03}
Alexander~S. Holevo.
\newblock Entanglement-assisted capacity of constrained channels.
\newblock {\em Proceedings of SPIE, First International Symposium on Quantum
  Informatics}, 5128:62--69, July 2003.
\newblock arXiv:quant-ph/0211170.

\bibitem{H04}
Alexander~S. Holevo.
\newblock Entanglement-assisted capacities of constrained quantum channels.
\newblock {\em Theory of Probability \& Its Applications}, 48(2):243--255, July
  2004.
\newblock arXiv:quant-ph/0211170.

\bibitem{HS12}
Alexander~S. Holevo and Maksim~E. Shirokov.
\newblock On the entanglement-assisted classical capacity of
  infinite-dimensional quantum channels.
\newblock {\em Problems of Information Transmission}, 49(1):15--31, January
  2013.
\newblock arXiv:1210.6926.

\bibitem{H12}
Alexander~S. Holevo.
\newblock {\em Quantum Systems, Channels, Information}.
\newblock de Gruyter Studies in Mathematical Physics (Book 16). de Gruyter,
  November 2012.

\bibitem{SWAT17}
Kunal Sharma, Mark~M. Wilde, Sushovit Adhikari, and Masahiro Takeoka.
\newblock Bounding the energy-constrained quantum and private capacities of
  phase-insensitive {Gaussian} channels.
\newblock {\em New Journal of Physics}, 20:063025, June 2018.
\newblock arXiv:1708.07257.

\bibitem{WSCW12}
Ligong Wang, Jeffrey~H. Shapiro, Nivedita Chandrasekaran, and Gregory~W.
  Wornell.
\newblock Private-capacity bounds for bosonic wiretap channels.
\newblock February 2012.
\newblock arXiv:1202.1126.

\bibitem{PMLG15}
Giacomo De~Palma, Andrea Mari, Seth Lloyd, and Vittorio Giovannetti.
\newblock Multimode quantum entropy power inequality.
\newblock {\em Physical Review A}, 91(3):032320, March 2015.
\newblock arXiv:1408.6410.

\bibitem{ADO16}
Koenraad Audenaert, Nilanjana Datta, and Maris Ozols.
\newblock Entropy power inequalities for qudits.
\newblock {\em Journal of Mathematical Physics}, 57(5):052202, May 2016.
\newblock arXiv:1503.04213.

\bibitem{SH08}
Maxim~E. Shirokov and Alexander~S. Holevo.
\newblock On approximation of infinite-dimensional quantum channels.
\newblock {\em Problems of Information Transmission}, 44(2):3--22, 2008.
\newblock arXiv:0711.2245.

\bibitem{HS10}
Alexander~S. Holevo and Maksim~E. Shirokov.
\newblock Mutual and coherent information for infinite-dimensional quantum
  channels.
\newblock {\em Problems of Information Transmission}, 46(3):201--218, September
  2010.
\newblock arXiv:1004.2495.

\bibitem{HZ12}
Teiko Heinosaari and M\'ario Ziman.
\newblock {\em The Mathematical Language of Quantum Theory: From Uncertainty to
  Entanglement}.
\newblock Cambridge University Press, 2012.

\bibitem{S15}
Maksim~E. Shirokov.
\newblock Measures of quantum correlations in infinite-dimensional systems.
\newblock {\em Sbornik: Mathematics}, 207(5):724, 2015.
\newblock arXiv:1506.06377.

\bibitem{S15squashed}
Maksim~E. Shirokov.
\newblock Squashed entanglement in infinite dimensions.
\newblock {\em Journal of Mathematical Physics}, 57(3):032203, March 2016.
\newblock arXiv:1507.08964.

\bibitem{S55}
William~F. Stinespring.
\newblock Positive functions on {C*}-algebras.
\newblock {\em Proceedings of the American Mathematical Society}, 6:211--216,
  1955.

\bibitem{S13}
Maksim~E. Shirokov.
\newblock Reversibility conditions for quantum channels and their applications.
\newblock {\em Sbornik: Mathematics}, 204(:8):1215--1237, 2013.
\newblock arXiv:1203.0262.

\bibitem{cmp2005dev}
Igor Devetak and Peter~W. Shor.
\newblock The capacity of a quantum channel for simultaneous transmission of
  classical and quantum information.
\newblock {\em Communications in Mathematical Physics}, 256(2):287--303, June
  2005.
\newblock arXiv:quant-ph/0311131.

\bibitem{U76}
Armin Uhlmann.
\newblock The ``transition probability'' in the state space of a *-algebra.
\newblock {\em Reports on Mathematical Physics}, 9(2):273--279, 1976.

\bibitem{powers1970}
Robert~T. Powers and Erling Stormer.
\newblock Free states of the canonical anticommutation relations.
\newblock {\em Communications in Mathematical Physics}, 16(1):1--33, 1970.

\bibitem{FG98}
Christopher~A. Fuchs and Jeroen van~de Graaf.
\newblock Cryptographic distinguishability measures for quantum mechanical
  states.
\newblock {\em IEEE Transactions on Information Theory}, 45(4):1216--1227, May
  1998.
\newblock arXiv:quant-ph/9712042.

\bibitem{AL70}
Huzihiro Araki and Elliott~H. Lieb.
\newblock Entropy inequalities.
\newblock {\em Communications in Mathematical Physics}, 18(2):160--170, 1970.

\bibitem{W76}
Alfred Wehrl.
\newblock Three theorems about entropy and convergence of density matrices.
\newblock {\em Reports on Mathematical Physics}, 10(2):159 -- 163, 1976.

\bibitem{BV13}
Valentina Baccetti and Matt Visser.
\newblock Infinite {Shannon} entropy.
\newblock {\em Journal of Statistical Mechanics: Theory and Experiment},
  2013(04):P04010, 2013.

\bibitem{F70}
Harold Falk.
\newblock Inequalities of {J. W. Gibbs}.
\newblock {\em American Journal of Physics}, 38(7):858--869, July 1970.

\bibitem{Lindblad1973}
G{\"o}ran Lindblad.
\newblock Entropy, information and quantum measurements.
\newblock {\em Communications in Mathematical Physics}, 33(4):305--322,
  December 1973.

\bibitem{Lindblad1975}
G\"oran Lindblad.
\newblock Completely positive maps and entropy inequalities.
\newblock {\em Communications in Mathematical Physics}, 40(2):147--151, June
  1975.

\bibitem{K11}
Anna~A. Kuznetsova.
\newblock Conditional entropy for infinite-dimensional quantum systems.
\newblock {\em Theory of Probability \& Its Applications}, 55(4):709--717,
  November 2011.
\newblock arXiv:1004.4519.

\bibitem{YHD05MQAC}
Jon Yard, Patrick Hayden, and Igor Devetak.
\newblock Capacity theorems for quantum multiple-access channels:
  Classical-quantum and quantum-quantum capacity regions.
\newblock {\em IEEE Transactions on Information Theory}, 54(7):3091--3113, July
  2008.
\newblock arXiv:quant-ph/0501045.

\bibitem{F61}
Robert~M. Fano.
\newblock {\em Transmission of Information}.
\newblock M.I.T. Press and John Wiley and Sons, New York and London, 1961.

\bibitem{AF04}
Robert Alicki and Mark Fannes.
\newblock Continuity of quantum conditional information.
\newblock {\em Journal of Physics A: Mathematical and General}, 37(5):L55--L57,
  February 2004.
\newblock arXiv:quant-ph/0312081.

\bibitem{Winter15}
Andreas Winter.
\newblock Tight uniform continuity bounds for quantum entropies: conditional
  entropy, relative entropy distance and energy constraints.
\newblock {\em Communications in Mathematical Physics}, 347(1):291--313,
  October 2016.
\newblock arXiv:1507.07775.

\bibitem{BKN98}
Howard Barnum, Emanuel Knill, and Michael~A. Nielsen.
\newblock On quantum fidelities and channel capacities.
\newblock {\em IEEE Transactions on Information Theory}, 46(4):1317--1329, July
  2000.
\newblock arXiv:quant-ph/9809010.

\bibitem{KW04}
Dennis Kretschmann and Reinhard~F. Werner.
\newblock Tema con variazioni: quantum channel capacity.
\newblock {\em New Journal of Physics}, 6(1):26, 2004.
\newblock arXiv:quant-ph/0311037.

\bibitem{K07}
Rochus Klesse.
\newblock Approximate quantum error correction, random codes, and quantum
  channel capacity.
\newblock {\em Physical Review A}, 75(6):062315, June 2007.
\newblock arXiv:quant-ph/0701102.

\bibitem{Wat16}
John Watrous.
\newblock {\em Theory of Quantum Information}.
\newblock Cambridge University Press, April 2018.

\bibitem{HS06}
Alexander~S. Holevo and Maksim~E. Shirokov.
\newblock Continuous ensembles and the capacity of infinite-dimensional quantum
  channels.
\newblock {\em Theory of Probability \& Its Applications}, 50(1):86--98, July
  2006.
\newblock arXiv:quant-ph/0408176.

\bibitem{Holevo2010}
Alexander~S. Holevo.
\newblock The entropy gain of infinite-dimensional quantum evolutions.
\newblock {\em Doklady Mathematics}, 82(2):730--731, October 2010.
\newblock arXiv:1003.5765.

\bibitem{qcap2008second}
Rochus Klesse.
\newblock A random coding based proof for the quantum coding theorem.
\newblock {\em Open Systems \& Information Dynamics}, 15(1):21--45, March 2008.
\newblock arXiv:0712.2558.

\bibitem{T12}
Terence Tao.
\newblock {\em Topics in Random Matrix Theory}, volume 132 of {\em Graduate
  Studies in Mathematics}.
\newblock American Mathematical Society, 2012.
\newblock See also
  http://terrytao.wordpress.com/2008/06/18/the-strong-law-of-large-numbers/.

\bibitem{Hol01a}
Alexander~S. Holevo.
\newblock On entanglement assisted classical capacity.
\newblock {\em Journal of Mathematical Physics}, 43(9):4326--4333, September
  2002.
\newblock arXiv:quant-ph/0106075.

\bibitem{el2010lecture}
Abbas~El Gamal and Young-Han Kim.
\newblock {\em Network Information Theory}.
\newblock Cambridge University Press, January 2012.
\newblock arXiv:1001.3404.

\bibitem{ieee2002bennett}
Charles~H. Bennett, Peter~W. Shor, John~A. Smolin, and Ashish~V. Thapliyal.
\newblock Entanglement-assisted capacity of a quantum channel and the reverse
  {Shannon} theorem.
\newblock {\em IEEE Transactions on Information Theory}, 48(10):2637--2655,
  October 2002.
\newblock arXiv:quant-ph/0106052.

\bibitem{S08}
Graeme Smith.
\newblock Private classical capacity with a symmetric side channel and its
  application to quantum cryptography.
\newblock {\em Physical Review A}, 78(2):022306, August 2008.
\newblock arXiv:0705.3838.

\bibitem{Holevo73}
Alexander~S. Holevo.
\newblock Bounds for the quantity of information transmitted by a quantum
  communication channel.
\newblock {\em Problems of Information Transmission}, 9:177--183, 1973.

\bibitem{PhysRevLett.70.363}
Horace~P. Yuen and Masanao Ozawa.
\newblock Ultimate information carrying limit of quantum systems.
\newblock {\em Physical Review Letters}, 70(4):363--366, January 1993.

\bibitem{S16cont}
Maksim~E. Shirokov.
\newblock Continuity bounds for information characteristics of quantum channels
  depending on input dimension and on input energy.
\newblock October 2016.
\newblock arXiv:1610.08870.

\bibitem{adesso14}
Gerardo Adesso, Sammy Ragy, and Antony~R. Lee.
\newblock Continuous variable quantum information: {Gaussian} states and
  beyond.
\newblock {\em Open Systems and Information Dynamics}, 21(01--02):1440001, June
  2014.
\newblock arXiv:1401.4679.

\bibitem{W36}
John Williamson.
\newblock On the algebraic problem concerning the normal forms of linear
  dynamical systems.
\newblock {\em American Journal of Mathematics}, 58(1):141--163, January 1936.

\bibitem{WTLB16}
Mark~M. Wilde, Marco Tomamichel, Seth Lloyd, and Mario Berta.
\newblock Gaussian hypothesis testing and quantum illumination.
\newblock {\em Physical Review Letters}, 119(12):120501, September 2017.
\newblock arXiv:1608.06991.

\bibitem{Note1}
We do not consider transmissivities $\eta \in \left [ 0,1/2\right ] $ because
  the quantum capacity vanishes in this range since the channel becomes
  antidegradable.

\bibitem{CG06}
Filippo Caruso and Vittorio Giovannetti.
\newblock Degradability of bosonic {Gaussian} channels.
\newblock {\em Physical Review A}, 74(6):062307, December 2006.
\newblock arXiv:quant-ph/0603257.

\bibitem{G08thesis}
Saikat Guha.
\newblock {\em Multiple-User Quantum Information Theory for Optical
  Communication Channels}.
\newblock PhD thesis, Massachusetts Institute of Technology, June 2008.

\bibitem{Holevo2007}
Alexander~S. Holevo.
\newblock One-mode quantum {Gaussian} channels: Structure and quantum capacity.
\newblock {\em Problems of Information Transmission}, 43(1):1--11, March 2007.
\newblock arXiv:quant-ph/0607051.

\bibitem{CGH06}
Filippo Caruso, Vittorio Giovannetti, and Alexander~S. Holevo.
\newblock One-mode bosonic {Gaussian} channels: a full weak-degradability
  classification.
\newblock {\em New Journal of Physics}, 8(12):310, December 2006.
\newblock arXiv:quant-ph/0609013.

\bibitem{HG12}
Alexander~S. Holevo and Vittorio Giovannetti.
\newblock Quantum channels and their entropic characteristics.
\newblock {\em Reports on Progress in Physics}, 75(4):046001, April 2012.
\newblock arXiv:1202.6480.

\bibitem{GGLMSY04}
Vittorio Giovannetti, Saikat Guha, Seth Lloyd, Lorenzo Maccone, Jeffrey~H.
  Shapiro, and Horace~P. Yuen.
\newblock Classical capacity of the lossy bosonic channel: The exact solution.
\newblock {\em Physical Review Letters}, 92(2):027902, January 2004.
\newblock arXiv:quant-ph/0308012.

\bibitem{Guha04}
Saikat Guha.
\newblock Classical capacity of the free-space quantum-optical channel.
\newblock Master's thesis, Massachusetts Institute of Technology, January 2004.

\bibitem{SW17}
Krishna~Kumar Sabapathy and Andreas Winter.
\newblock Non-{Gaussian} operations on bosonic modes of light: Photon-added
  {Gaussian} channels.
\newblock {\em Physical Review A}, 95(6):062309, June 2017.
\newblock arXiv:1604.07859.

\bibitem{WGC06}
Michael~M. Wolf, Geza Giedke, and J.~Ignacio Cirac.
\newblock Extremality of {Gaussian} quantum states.
\newblock {\em Physical Review Letters}, 96:080502, March 2006.
\newblock arXiv:quant-ph/0509154.

\bibitem{B05}
Sergey Bravyi.
\newblock Classical capacity of fermionic product channels.
\newblock July 2005.
\newblock arXiv:quant-ph/0507282.

\bibitem{PhysRevLett.95.260503}
Aditi Sen(De), Ujjwal Sen, Bartosz Gromek, Dagmar Bru\ss{}, and Maciej
  Lewenstein.
\newblock Capacities of quantum channels for massive bosons and fermions.
\newblock {\em Physical Review Letters}, 95(26):260503, December 2005.
\newblock arXiv:quant-ph/0505028.

\bibitem{PhysRevA.75.022331}
Aditi Sen(De), Ujjwal Sen, Bartosz Gromek, Dagmar Bru\ss{}, and Maciej
  Lewenstein.
\newblock Capacities of noiseless quantum channels for massive
  indistinguishable particles: Bosons versus fermions.
\newblock {\em Physical Review A}, 75(2):022331, February 2007.
\newblock arXiv:quant-ph/0512200.

\bibitem{G13}
Eliska Greplova.
\newblock Quantum information with fermionic {Gaussian} states.
\newblock Master's thesis, Ludwig-Maximilians-Universitat, Munchen Technische
  Universitat, Munchen Max-Planck-Institut fur Quantenoptik, 2013.

\bibitem{GG16}
Eliska Greplova and Geza Giedke.
\newblock Degradability of fermionic {Gaussian} channels.
\newblock April 2016.
\newblock arXiv:1604.01954.

\bibitem{AE05}
Koenraad M.~R. Audenaert and Jens Eisert.
\newblock Continuity bounds on the quantum relative entropy.
\newblock {\em Journal of Mathematical Physics}, 46(10):102104, October 2005.
\newblock arXiv:quant-ph/0503218.

\end{thebibliography}

\end{document}